\newtheorem{theorem}{Theorem}
\newtheorem{assumption}{Assumption}
\newtheorem{lemma}{Lemma}
\newtheorem{definition}{Definition}
\def\dh{\hat{d}}
\def\ee{{\epsilon}}
\def\be{\begin{equation}}
\def\bea{\begin{eqnarray}}
\def\beas{\begin{eqnarray*}}
\def\eea{\end{eqnarray}}
\def\eeas{\end{eqnarray*}}
\def\bi{\begin{itemize}}
\def\ba{\begin{assumption}}
\def\ea{\end{assumption}}
\def\ee{\end{equation}}
\def\ei{\end{itemize}}
\def\xh{\hat{x}}
\def\xt{\tilde{x}}
\def\cA{{\cal A}}
\def\cG{{\cal G}}
\def\cF{{\cal F}}
\def\xb{\bar{x}}
\def\cR{{\cal R}}
\def\cC{{\cal C}}
\def\cS{{\cal S}}
\def\cD{{\cal D}}
\def\cE{{\cal E}}
\def\cU{{\cal U}}
\def\cP{{\cal P}}
\def\cU{{\cal U}}
\def\cN{{\cal N}}
\def\cN{{\cal N}}
\def\cC{{\cal C}}
\def\cS{{\cal S}}
\def\cD{{\cal D}}
\def\cA{{\cal A}}
\def\cU{{\cal U}}
\def\bmat{\begin{matrix}}
\def\emat{\end{matrix}}
\begin{document}
\bstctlcite{IEEEexample:BSTcontrol}
	
%
\title{Stability and Resilience of Distributed Information Spreading  in Aggregate Computing}
%
%
%

\author{Yuanqiu Mo, \IEEEmembership{Member IEEE}, Soura Dasgupta,  \IEEEmembership{Fellow IEEE} and Jacob Beal, \IEEEmembership{Senior Member IEEE}
\thanks{
%
This work has been supported by the Defense Advanced Research Projects Agency (DARPA) under Contract No. HR001117C0049.
The views, opinions, and/or findings expressed are those of the author(s) and should not be interpreted as representing the official views or policies of the Department of Defense or the U.S. Government.
This document does not contain technology or technical data controlled under either U.S. International Traffic in Arms Regulation or U.S. Export Administration Regulations.
Approved for public release, distribution unlimited (DARPA DISTAR case 33519, 10/14/20).}
\thanks{Y. Mo is with the Institute of Advanced Technology, Westlake Institute for Advanced Study, Westlake University, Hangzhou 310024, China (email: moyuanqiu@@westlake.edu.cn). 
	
S. Dasgupta is with the University of Iowa, Iowa City, Iowa  52242 USA (e-mail: soura-dasgupta@uiowa.edu).
S. Dasgupta is also a Visiting Professor at Shandong Computer Science Center, Shandong Provincial Key Laboratory of Computer Networks, China.}
\thanks{J. Beal is with Raytheon BBN Technologies,
Cambridge, MA, USA 02138 USA (e-mail: jakebeal@ieee.org)}
}

\maketitle

\begin{abstract}
Spreading information through a network of devices is a core activity for most distributed systems.
As such, self-stabilizing algorithms implementing information spreading are one of the key building blocks enabling aggregate computing to provide resilient coordination in open complex distributed systems.
This paper improves a  general spreading block in the aggregate computing literature by making it resilient to network perturbations, establishes its global uniform asymptotic stability and proves that it is ultimately bounded under persistent   disturbances.  
The ultimate bounds depend only on the magnitude of the largest perturbation and the network diameter, 
and three  design parameters  trade off  competing aspects of performance.
For example, as in many dynamical systems, values leading to greater resilience to network perturbations slow convergence and vice versa.
\end{abstract}

\begin{IEEEkeywords}
aggregate computing, multi-agent systems, distributed graph algorithms, nonlinear stability, ultimate bounds.
\end{IEEEkeywords}

%
\IEEEpeerreviewmaketitle

\section{Introduction}

Complex networked distributed systems  are rapidly becoming a feature of many engineering systems. 
Their stability, dynamics and reliability  has acquired paramount importance. 
Control theorists have embraced this challenge through extensive research on the stability of  networked control systems, most typically ``closed'' systems in which a good model of the system is available at design time e.g.~\cite{large,suri,summers2009formation,summers2011control,arcak2007passivity,hatanaka2015passivity,guler2016adaptive,fidan2013adaptive,formation}.

An emerging set of  alternative challenges is  posed by the analysis and design of  complex open systems like tactical information sharing, smart cities,  edge computing, personal and home area networks, and the Internet of Things (IoT)~\cite{BPV-Computer15}.
These systems also disperse  services to local devices, require devices to interact \emph{safely and seamlessly} with nearby brethren  through  peer to peer information exchange, and to share tasks. 
As they are \emph{open}, however, they must support frequent  non-centralized changes in the applications and services that they host.
Current modes of device interactions restrict their  potential  by being  typically either highly constrained and inflexible (e.g. single-purpose devices) or  by relying on remote infrastructure like cloud services. 
The former impairs reusability  and prevents  devices from contributing to multiple overlapping applications. The latter is  centralized with high latency and lacks the agility to exploit local communication, services and devices.

\emph{Aggregate computing}, on the other hand, addresses device coordination in open systems with a layered  approach~\cite{BPV-Computer15}, separating systems into abstraction layers that decomposing systems engineering into separable tasks, much like the OSI model does for communication networks, \cite{OSI}.
The layers span from applications to a {\em field calculus} (providing distributed scoping of shared information), and an abstract device model (for services such as neighborhood discovery). 
Between these, a middle layer facilitates resilient device interactions and comprises three classes of basis set modules that are themselves distributed graph algorithms: 
(i) $G$-blocks that spread information through a network of devices, 
(ii) $C$-blocks that summarize salient information about the network to be used by interacting units,
and (iii) $T$-blocks that maintain temporary state.   
Prior work \cite{BV-FOCAS14,VD-COORD2014-LNCS2014,VABDP-ACM},   has  shown \emph{that a broad class of device interactions can be realized by  interconnections  of these three blocks}. Our long term research goal is analyze the stability of compositions of these blocks including in feedback. 
This paper is concerned with a rigorous analysis of the most general $ G $-block of \cite{VABDP-ACM}, \emph{after making it resilient to network perturbations}.

While the empirical assessment of compositions of the $ G $-block with other blocks is encouraging\cite{KumarBeal15,VBDP-SASO15,CRF}, the formal analysis of its most general case has been confined to  self-stablization \cite{VABDP-ACM}, and that under the assumption that all states lie in Noetherian rings and are thus \emph{a priori bounded}. 
Unlike global uniform asymptotic stability (GUAS) \cite{khalil,Hahn}, self-stabilization has no notion of robustness to  perturbations, while perturbations are to be expected in any feedback composition. 
Thus we improve the generalized $ G $-block to allow removal of the Noetherian assumption, proof of GUAS,
and (under an additional Lipschitz condition) ultimate boundedness in face of persistent  perturbations.
Finding ultimate bounds further anticipates the  development of  sophisticated variants  of the small gain  theorem~\cite{smallgain}, \cite{khalil} for  closed loop analysis.

Previously, in \cite{cdc16} and \cite{TAC}, we have performed such an analysis of the simplest $G$-block, the Adaptive Bellman-Ford (ABF) algorithm, which estimates distances of nodes from a source set in a distributed manner and (unlike the classical Bellman-Ford algorithm~\cite{bellman1958routing}) accommodates underestimates.
In \cite{mo2018robust} we have analyzed \emph{without proof} another special case, which 
  generalizes ABF by allowing  non-Euclidean distance metrics, e.g., those that penalize
certain routes, and permits \emph{broadcast}, where each source broadcasts a  value like a diameter estimate it holds to nearest devices.  More features are given  Section~\ref{salg}.

A further problem that must be considered in this context is the \emph{rising value problem}.
All $ G $-block algorithms generate estimates $ \xh_i(t) $ that must converge to a value $ x_i $. 
The rising value problem is when underestimates ( i.e. $ \xh_i(t)<x_i $) may rise very slowly, and this problem affects the $ G $-blocks analyzed in \cite{TAC} and \cite{mo2018robust}.
The most general $ G $-block given in \cite{VABDP-ACM} removes this problem by treating the estimate generated by the algorithm in \cite{mo2018robust} as an auxiliary state   $ \xt_i(t) $. The actual  estimate $ \xh_i(t) $ is increased by a fixed amount of at least $ \delta>0 $ unless  $ \xt_i(t+1) $  equals either $ \xh_i(t) $ or  the maximal element in the Noetherian ring.  If either of these two conditions is violated then $ \xh_i(t+1)=\xt_i(t+1). $ The increase by $ \delta $ or more, removes the rising value problem. However the equality requirement of $ \xt_i(t+1)=\xh_i(t) $ introduces fragility to  disturbances, since $ \xt(t+1)=\xh_i(t) $ rarely occurs under perturbations and thus $ \xh_i(t) $ must persistently rise to the maximal element.

We deal with real non-negative numbers rather than Noetherian rings and \emph{do not assume prior estimate bounds}.  Instead we modify the algorithm in \cite{VABDP-ACM} by introducing two additional parameters, a {\em modulation threshold} $ M $ and  a {\em dead zone} value $ D $, that together define regions for improved perturbation tolerance versus regions for fast convergence with $\delta$, and that reduce to the algorithm in \cite{mo2018robust} when $ M=0 $ and/or $ D=\infty. $
We show that the improved algorithm is GUAS for all non-negative $ M $ and $ D $ \emph{without the assumption} that $ M $ \emph{is a maximal element}. 
In the case of persistent bounded perturbations, we show that the estimates are ultimately bounded provided that the dead zone parameter $ D $ exceeds a value proportional to the disturbance bound. 
A larger $D$, however, is also less effective at mitigating the rising value problem, indicating a trade-off between speed and robustness that is common to most dynamical systems. 

In the remainder of the paper, Section \ref{salg} provides the algorithm,  assumptions and  motivating applications. Section \ref{sstat} characterizes stationary points, while Section \ref{guas} proves GUAS. 
Section \ref{robust} gives ultimate bounds \emph{which are determined only by the magnitude of the perturbations and the graph.} 
Section \ref{sdesign} discusses design choices, which are affirmed via simulation in Section \ref{ssim}, and Section \ref{sconc} concludes.

\section{Algorithm} \label{salg}
In this section, we present a general $ G $-block that spreads information through a network in a distributed fashion. 
Originally provided in \cite{viroli2018engineering} using the language of field calculus, we translate it here to one more appropriate of this journal. 
Section \ref{scdc}  describes a special case shown to be GUAS in \cite{mo2018robust}, with proofs omitted, plus examples and a shortcoming.
Section \ref{sgalg} then presents a more general algorithm that removes this deficiency, and Section \ref{sdef} provides assumptions and  definitions that will be used for proofs in subsequent sections.

\subsection{The Spreading block of\cite{mo2018robust}}\label{scdc}
Consider an undirected graph $\cG = (V,E)$ with nodes  in  $V=\{1,2,\cdots, N\}$  and edge set $E$. Nodes $i$ and $k$ are \emph{neighbors} if they share an edge.
The goal of the algorithm is to spread the state $ x_i $ to node $ i. $ Denote   $\cN(i)$ as the set of neighbors of $i$. With $\xh_i(t)$  an estimate of  $ x_i $, in the $ t $-th iteration,  the information spreading  in \cite{mo2018robust}, proceeds as:
\begin{equation}
\xh_i(t+1)=\min\left\lbrace \min_{k\in \cN(i)}\left\lbrace f\left (\xh_k(t), e_{ik} \right ) \right\rbrace, s_i  \right\rbrace, \forall t\geq t_0 .  \label{spreading}
\end{equation}

The $e_{ik}$ define the structural aspects of $\cG$; e.g. they may be the edge lengths between neighbors; $s_i\geq 0$, which may be either finite or infinite, is the \emph{maximum value} that $\xh_i(t)$ can acquire after the initial time.
Further, $ \xh_i(t_0)\geq 0, $ for all $ i\in V. $

Function $f(\cdot,\cdot)$ must be \emph{progressive}  i.e. for some $\sigma>0$,
\begin{equation}\label{sigma}
f(a,b)>a+\sigma
\end{equation}
and \emph{monotonic} in the first variable,  i.e.
\begin{equation}\label{monotonic}
f(a_1,b)\geq f(a_2,b), \mbox{ if } a_1\geq a_2.
\end{equation}
and is finite for  finite $a$ and $b$.  The initialization in  (\ref{spreading}) ensures that $ \xh_i(t)\geq 0 $, for all $ t\geq t_0. $
Define  $\cS^*$ as the set of  nodes with \emph{finite maximum values} $ s_i $:
\begin{equation}\label{source}
\cS^* = \{i \in V |s_i < \infty \}.
\end{equation}
We will assume that this set is nonempty. Evidently, the information $ x_i $ to be spread to node $ i $ must be  an element of the vector of \emph{stationary} values of (\ref{spreading}),  i.e. obeys
\begin{equation}\label{stationary}
x_i=\min\left\lbrace \min_{k\in \cN(i)}\left\lbrace f\left (x_k, e_{ik} \right ) \right\rbrace, s_i  \right\rbrace, \forall ~i\in V .
\end{equation}
We will prove the less than evident fact that this stationary point is unique,  finite, and that at  least one $ x_i=s_i$.

The simplest example,  whose Lyapunov analysis is  in \cite{cdc16}, \cite{TAC}, is ABF where $ f(a,b)=a+b $ and  $ s_i=0 $ or infinity. The set of $ i $ for which $ s_i=0 $ are called sources, $ e_{ik}>0 $ is the edge length between nodes $ i $ and $ k $,  $ x_i $ represents the  distance $ d_i $ from the set of sources, and with $ \xh_i=\dh_i $, (\ref{spreading}) becomes 
\begin{equation}\label{abf}
\dh_i(t+1)=\left \{ \begin{matrix}
0 & s_i=0\\
\min_{k\in \cN(i)}\left \{\dh_k(t)+e_{ik}\right \} &s_i\neq 0
\end{matrix}\right ..
\end{equation}

\begin{figure}[h]
	\centering
	\includegraphics[width = 0.7\columnwidth]{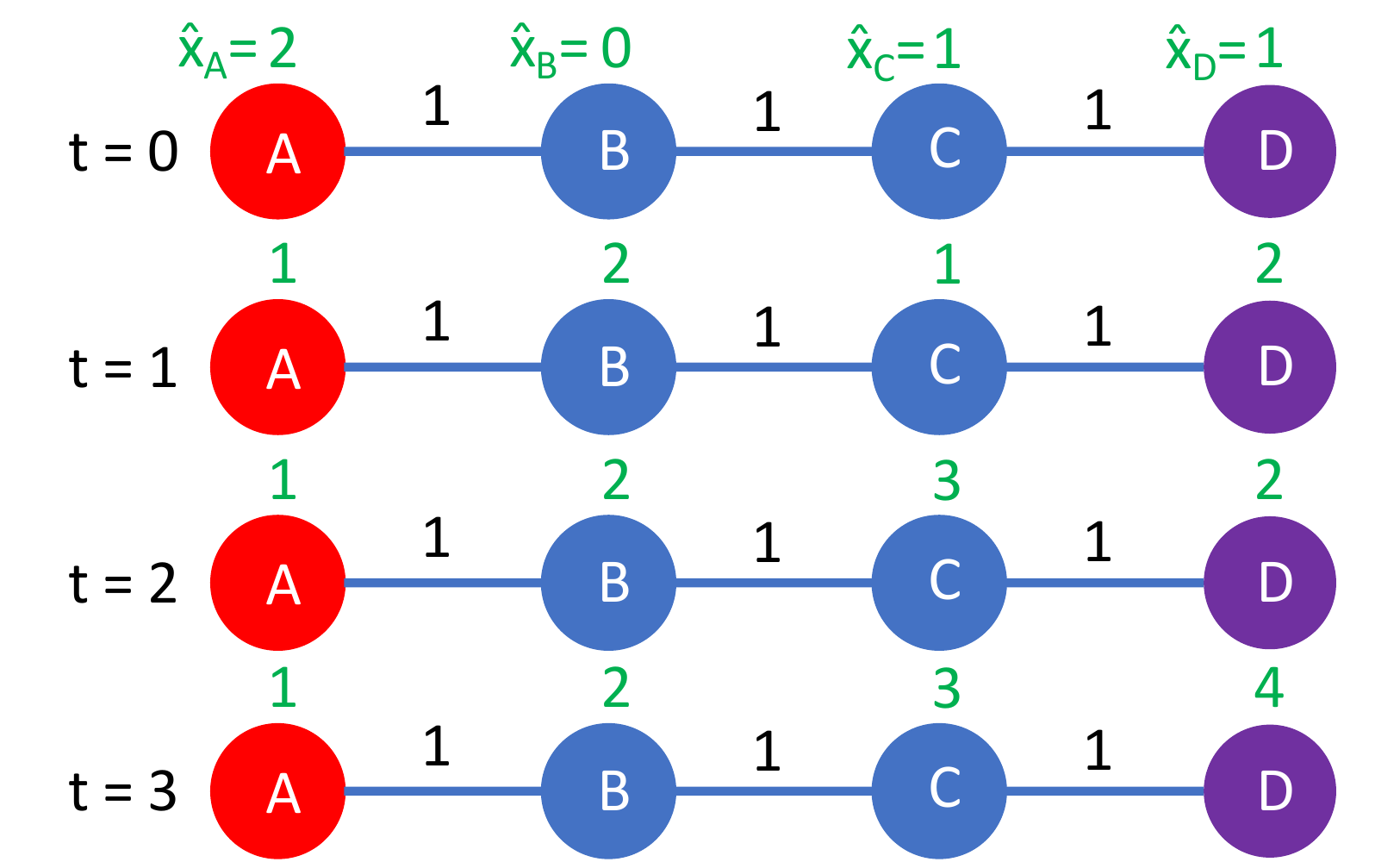}
	\caption{Non-zero $s_i$ representing external gateways of a tactical wireless network: red and purple nodes are high and low speed external links, respectively, while blue are nodes without external links. Black numbers represent edge weights $e_{ik}$, green numbers represent state estimates $\xh_i$. After 3 rounds, all nodes, including the low-speed link, have converged to route through the high-speed link.  }
	\label{fig:two_links}
\end{figure}

We may observe that an ABF stationary point must have all nodes with $ s_i=0 $ also having $ d_i=0. $
If we allow other values of $s_i$, then not all members of $ \cS^* $ necessarily have stationary states $ s_i $.
Figure \ref{fig:two_links} shows one such example, in which $s_i$ represents transport lag for external gateways in a tactical wireless network.
In this example, as with (\ref{abf}),  
\begin{equation}\label{sum}
f(\xh_k(t), e_{ik}) = \xh_k(t) + e_{ik}.
\end{equation} 
Here, node A (red) is a high-speed gateway with $s_A = 1$, node D (purple) is a low-speed gateway with $s_D = 5$, and the others are non-gateways with $s_i = \infty$.
Through (\ref{spreading}), all nodes try to route to  external networks through the shortest effective path. After 3 rounds, all nodes, including the low-speed link,  converge to route  through the high-speed link. In this case the stationary state of the low speed link is $ x_D=4 $ and does not equal $ s_i=5 $ even though $ s_i $ is finite. Should the high speed link represented by node A disappear,  then the state estimate of D does converge to its maximum value 5, while those of nodes B and C converge to 7 and 6, respectively, i.e. nodes reroute through the still available low speed link.
\begin{figure}[h]
	\centering
	\includegraphics[width = 1\columnwidth]{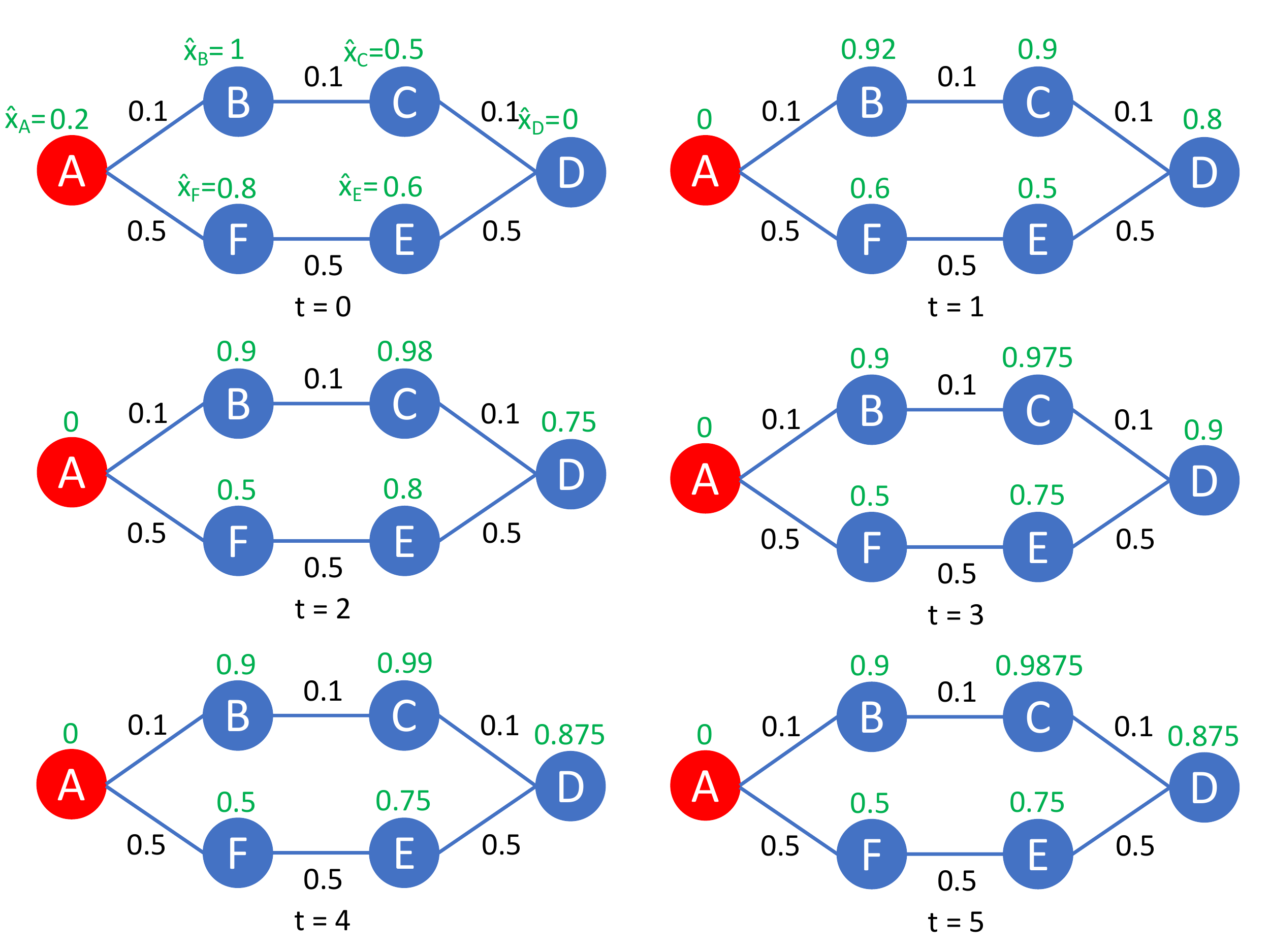}
	\caption{Illustration of the spreading block. In this example, node in red represents the source with a maximum value of 0, while nodes in blue have a maximum value of 1. The edge value $e_{ik}$ reflects the success rate of delivery. After 5 rounds, each node finds a path with the smallest failure rate of delivery to the source. }
	\label{fig:probability}
\end{figure}

While in the prior examples $ f(a,b) $ is linear and increasing in   $ b $, this need not be the case. 
A specialization of (\ref{spreading}) violating both these properties  finds the   \emph{most probable path}  (MPP) from each node in a network to a source. In this case $ e_{ik} $ represents the probability of successful traversal or delivery between neighbors $ i $ and $ j $. The stationary   value $ x_i $ is the \emph{smallest} probability of failure in movement from  node $ i $ to the source. 
In this case $ x_i=0 $ for sources. For all other nodes  
\[ x_i=\min_{k\in \cN(i)}\{1-(1-x_k)e_{ik}\}, \forall~i\neq 1. \]
The sequence of minimizing nodes $ k $ then indicates the MPP from node $ i  $ to any source and can be computed using (\ref{spreading}) with 
\begin{equation}\label{eqpro}
f(\xh_k(t), e_{ik}) = 1 - (1 - \xh_k(t))e_{ik}.
\end{equation}
If $ 0<\sigma \leq e_{ik}<1-\sigma, $ this is progressive and increasing in $ \xh_k(t),  $  though \emph{decreasing in $ e_{ik} $.}
Figure \ref{fig:probability}  illustrates an example execution, with node A (red) as the source, in which each state estimate converges within 5 rounds.


A key shortcoming of (\ref{spreading}), however, is that underestimates can rise very slowly in the presence of small $ e_{ik} $. 
Consider for example (\ref{abf}) with nodes 1 and 2 having the smallest estimates and sharing a short edge $ e. $ At successive instants $ \dh_1(t+1)=\dh_2(t)+e $ and $ \dh_2(t+1)=\dh_1(t)+e $,  i.e. each rises in small increments of $ e $ (and as shown in \cite{TAC}) converge slowly. The generalization below accelerates this slow convergence.

\subsection{A more general spreading block}\label{sgalg}

The most general $ G $-block, given in \cite{viroli2018engineering}, is a generalization of (\ref{genG}) in that  state estimates are updated as
\begin{equation}\label{gengenG}
\xh_i(t+1)=F\left (\tilde{x}_i(t+1), \xh_i(t), v_i\right ) 
\end{equation}
with $\tilde{x}_i(t+1)$ obeying
\begin{equation}\label{genG}
\tilde{x}_i(t+1)=\min\left\lbrace \min_{k\in \cN(i)}\left\lbrace f\left (\xh_k(t), e_{ik} \right ) \right\rbrace, s_i  \right\rbrace, \forall t\geq t_0 ,
\end{equation}
where $v_i$ are  environmental variables and $ f(\cdot,\cdot) $  remains progressive and monotonic. The function $ F(\ell_1,\ell_2,v) $ is \emph{raising},  i.e. for  finite $M\geq 0$, $\delta > 0$ and $D \geq 0,$  obeys
\begin{equation}\label{raising}
F(\ell_1,\ell_2,v) = \begin{cases}
\ell_1 & \ell_2  \geq M  \mbox{ or } |\ell_2-\ell_1 | \leq D\\
g(\ell_2) & \mbox{otherwise}
\end{cases},
\end{equation}
where   $g(x)$ is finite for finite $x$ and obeys
\begin{equation}\label{strg}
g(x) \geq x+\delta.
\end{equation}
Invocation of the second bullet of 
 (\ref{raising}),  speeds the initial ascent of $ \xh_i(t)$, ameliorating the problem of  the slow rise in underestimates experienced by (\ref{spreading}). 
 On the other hand, the first bullet renders (\ref{genG}) identical to (\ref{spreading}). 
 As the second bullet of (\ref{raising})  changes  $ \xh_i(t) $, the stationary point of (\ref{genG}-\ref{strg}) is identical to that of (\ref{stationary}). Thus this  algorithm  spreads the same information as (\ref{spreading}), while  \emph{accelerating the rise of underestimates}. Observe also that  $ D=\infty $ and/or $ M=0 $, reduces (\ref{genG}-\ref{strg}) to (\ref{spreading}).
 
 The version of (\ref{genG}-\ref{strg}) in \cite{viroli2018engineering} sets the \emph{dead zone} variable as $ D=0 $. In  face of persistent structural perturbations in $ e_{ik} $, $ l_2=l_1 $ cannot be sustained. Consequently, regardless of the size of  perturbations, with $ D=0 $, $ \xh_i $ will regularly rise to the limit of the {\em modulation threshold} $ M $, then fall, and then rise again. On the other hand we will show that if $ D $ is sufficiently greater than $ \epsilon $, the bound on the perturbation, then (\ref{genG}-\ref{strg}) will have ultimate bounds proportional to $ \epsilon. $ \emph{This raises an essential trade-off}. Too large a $ D $ slows convergence though imparts greater robustness to perturbations. Such a compromise is inherent to most dynamic systems. Slower convergence improves noise performance.
 
 Another key difference is that \cite{viroli2018engineering}
assumes that $ \xh_i $ belong to a Noetherian ring with $ M $
its \emph{maximal element.} This implicitly assumes that the algorithm is \emph{a priori} bounded. For  distance estimation this means a  prior assumption on the diameter, which is unappealing in the context of open systems.

\begin{figure}[h]
	\centering
	\includegraphics[width = 1\columnwidth]{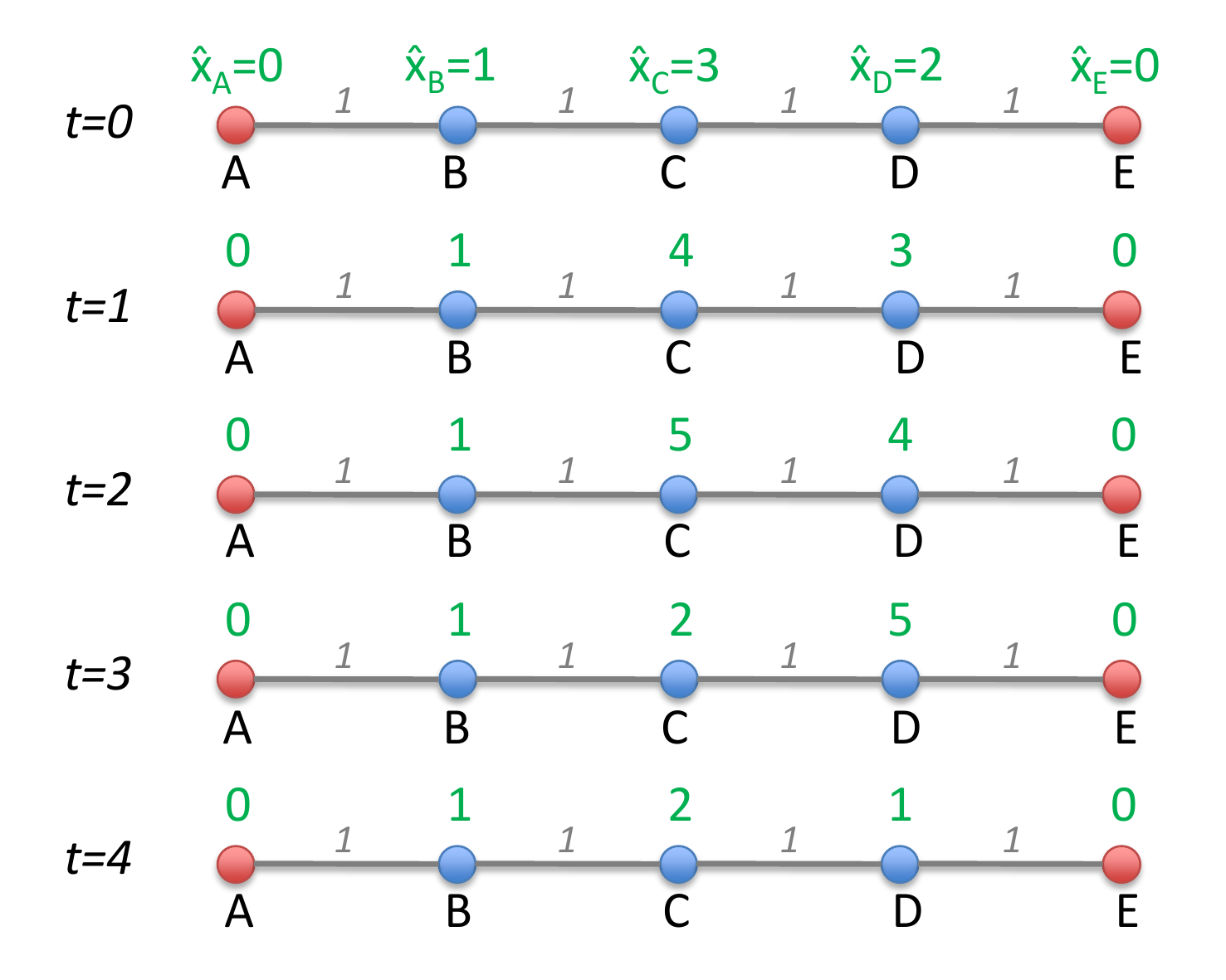}
	\caption{Illustration of GABF and sets $\cA(t), \cE(t), \cR(t)$ and $\cU(t)$. Each edge length in the graph is 1, $M = 4, \delta = 1$, $D = 0$, $s_i = 0$ for $i = A, E$ and $s_i = \infty$ otherwise. In this case, $\cR(1) = \cA(1) = \{A,B,E \}$, $\cU(1) = \cE(1) = \{C,D\}$. }
	\label{fig:sets}
\end{figure}

\todo[inline]{Consider dropping Figure 3, as we don't use it for much, along with the examples that reference it.}

The generalized Adaptive Bellman-Ford algorithm (GABF), presented and analyzed \emph{without proofs} in \cite{mo2019global},  is  a specific example of (\ref{gengenG}), that is an accelerated ABF. In GABF, $f(\xh_k(t), e_{ik})$  follows (\ref{sum}) with $e_{ik}$ the edge length between $i$ and $k$, $\xh_k(t)$ the distance estimate of $k$ at time $t$, $s_i = 0$ if $i$ is a source while $s_i = \infty$ if $i$ is a non-source node. In Figure \ref{fig:sets}, nodes in red are sources, each edge length in the graph is 1, and numbers in green represent the state estimates. Variables $M, D$ and $\delta$ in (\ref{raising}) and (\ref{strg}) are set as $4, 0$ and $1$ respectively.  Convergence occurs in four rounds.

\subsection{Definitions and Assumptions}\label{sdef}
We define $\cS(t)$ as comprising nodes in $\cS^*$ that acquire their maximum values at time $t$,
\begin{equation}\label{st}
\cS(t) = \{i \in \cS^*~|~ \xh_{i}(t) = s_i\},
\end{equation}
and we say $i$ is a source at time $t$ if $i \in \cS(t)$.
The following assumption holds in this paper.
\begin{assumption}\label{amain}
Graph $\cG$ is connected, $\infty>e_{ik}=e_{ki}\geq e_{\min} >0$,   $\cS^*$ defined in (\ref{source}) in nonempty and
\begin{equation}\label{mins}
 s_{\min} = \underset{j \in \cS^*}{\min}\{s_j \} \geq 0, ~ \forall i \in V.
\end{equation}
Further 
\begin{equation}\label{setsmin}
\cS_{\min} = \{i \in V | s_i = s_{\min}  \}.
\end{equation}
\end{assumption}
As in any given iteration the estimated state of a node is obtained by one of the bullets in (\ref{raising}), at each \emph{t}, we partition \emph{V} into two sets defined below.
\begin{definition}\label{dAE}
The set $\cA(t)$ (ABF type nodes) comprises all nodes that use the first case in (\ref{raising}) to obtain $ \xh_i(t) $,  i.e. in (\ref{genG}), $ \xh_i(t)=\xt_i(t) $. Define  the set of extraordinary nodes $\cE(t)=V\setminus \cA(t)$ to be those that use the second case in (\ref{raising}).
\end{definition}
The next definition defines a \emph{(current) constraining node}.
\begin{definition}\label{dcons}
For $i \in \cA(t)$, if $ \xh_i(t)=s_i=x_i $ then $ i $ is its own current constraining, or constraining node at $ t $. Otherwise the minimizing $k\neq i$ in (\ref{genG}) used to find  $ \xh_i(t) $, is $ i $'s  constraining node at  $t$. If  $i \in \cE(t)$, then $ i $ is its own  constraining node at  $t$. The  constraining node of $ i $ at  $t$ is said to constrain $ i $ at $ t. $
\end{definition}

\section{Characterizing Stationary Points}\label{sstat}
This section characterizes the stationary point of (\ref{genG}-\ref{strg}) which as explained in Section \ref{salg} is identical to the stationary point of (\ref{spreading}) given in (\ref{stationary}). Observe that these comprise two sorts of values. Those where $ x_i=s_i. $ Those where $ x_i<s_i. $ We call the former \emph{sources} and their set is defined as 
\begin{equation}\label{sourceset}
\cS_\infty=\{i| x_i=s_i\}.
\end{equation}
Evidently 
\begin{equation}\label{sta}
x_i = \begin{cases}
s_i & i \in \cS_{\infty} \\
\underset{k \in \cN(i)}{\min} \{f(x_k,e_{ik})\} & i \notin \cS_{\infty}
\end{cases}
\end{equation}
As shown by example in Section \ref{scdc}, not all members of $ \cS^* $ are sources. 
To establish the existence of at least one stationary point we  make a definition.
\begin{definition}\label{xij}
	As the graph is connected, 
	there is a path from every node to every other node. Define $ \cP_{ji} $ to be the set of all paths from  $ j $ to $ i $, including $ j=i $.  Denote such a path $ \cP\in \cP_{ji} $, e.g. $ l_0 \rightarrow l_1 \rightarrow \cdots,\rightarrow l_L = i $,  by the ordered set  
	$\cP=\{j = l_0,l_1, \cdots, l_L = i\}$. In particular the path from  $ i $ to $ i $, will be the solitary node:
	\begin{equation}\label{Pii}
	\cP_{ii}=\{\{i\}\}.
	\end{equation}
	Consider  the  recursion,
	\begin{equation}\label{rucom}
	x^*_{l_k}(\cP) = \begin{cases}
	s_{l_k} & k = 0  \\
	f(x^*_{l_{k-1}}(\cP),e_{l_{k-1}l_k})  & k\in \{1,\cdots, L\}
	\end{cases}.
	\end{equation}
	Define $x_{ji}$ as the smallest value $x^*_i(\cP)$ can have among all the paths from $j$ to $i$,  i.e.
	\begin{equation}\label{xji}
	x_{ji}=\min_{ \cP\in \cP_{ji}}\{x^*_i(\cP)\}.
	\end{equation} 
	Further define
	\begin{equation}\label{xb}
	\xb_i=\min_{j\in V}\{x_{ji}\}.
	\end{equation}
\end{definition}
This sequence mimics the second case of (\ref{sta}) sans minimization. The $ \xb_{i} $ are uniquely determined by the structure of the graph and  will provide a characterization of  $ \cS_{\infty} $ and the $ x_i. $  Key points stemming from the fact that $ f(a,b)>a $ are:

\begin{itemize}
	\item[(i)] Only sequences that commence at a node with finite maximum values yield finite $ x_k^* $,  i.e.
	$x _{ji}=\infty $ iff $ j\notin\cS^*. $
	\item[(ii)] One has $ x_{ij}=s_i $ iff $ j=i $ and further
	\begin{equation}\label{barsource}
	\xb_{i}=s_i, \mbox{ iff }  x_{ji}\geq s_i, ~\forall ~j\neq i.
	\end{equation}
	\item[(iii)] Given an $ \xb_{i}\neq s_i $ there is a $ j\neq i $,  $ \cP_i\in\{\cP_{ji}\} $ and $ k\in \cN(i) $, the penultimate node in $ \cP_i $ such that
	\begin{equation}\label{irecur}
	\xb_i=f(x_k^*(P_i),e_{ik}).
	\end{equation}
	Either for this $ k, $ $\xb_k=s_k  $ or there are  $ m $,  $ \cP_k\in\{\cP_{mk}\} $ and $ l\in \cN(k) $, the penultimate node in $ \cP_k $ such that
	\begin{equation}\label{krecur}
	\xb_k=f(x_l^*(P_k),e_{lk}).
	\end{equation}
	\item[(iv)] Because $ f(\cdot,\cdot) $ is progressive, in the sequence (\ref{rucom}), $ x_{l_k}^*(\cP)>x_{l_{k-1}}^*(\cP) $.
\end{itemize}
The next lemma  concerns the scenario in (iii).
\begin{lemma}\label{lnoi}
	Under assumption \ref{amain} consider the quantities defined in (iii) above. Suppose $ \xb_{i}\neq s_i $ and $ \xb_{k}\neq s_k $. Then $ i\notin \cP_k. $
\end{lemma}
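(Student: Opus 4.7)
The plan is to argue by contradiction: suppose $i \in \cP_k$, then derive simultaneously $\xb_i > \xb_k$ and $\xb_i < \xb_k$.

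For the first inequality, I would exploit the defining relation $\xb_i = f(x_k^*(\cP_i), e_{ik})$ with $\cP_i \in \cP_{ji}$ for some $j$, where $k$ is the penultimate node of $\cP_i$. Chopping off the last node $i$ from $\cP_i$ produces a path from $j$ to $k$ along which the $x^*$ recursion yields exactly $x_k^*(\cP_i)$; since $\xb_k$ is the minimum of $x_k^*(\cdot)$ over all paths to $k$, this immediately gives $x_k^*(\cP_i)\geq \xb_k$. Combining with the progression bound (\ref{sigma}) on $f$, one gets $\xb_i = f(x_k^*(\cP_i),e_{ik}) > x_k^*(\cP_i) + \sigma \geq \xb_k + \sigma$, hence $\xb_i > \xb_k$.

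For the second (opposite) inequality, I would use the supposition $i\in\cP_k$ together with property (iv) (strict monotonicity along any path). Write $\cP_k = \{m=l_0,l_1,\dots,l_L=k\}$ and let $p<L$ be the index with $l_p = i$. The prefix $\cP_k' = \{l_0,\dots,l_p=i\}$ is a path from $m$ to $i$ in $\cP_{mi}$, and because the recursion (\ref{rucom}) depends only on the prefix, $x_i^*(\cP_k') = x_{l_p}^*(\cP_k)$. When $p\geq 1$, repeated application of (iv) along the remainder $l_p\to\cdots\to l_L$ of $\cP_k$ gives $x_i^*(\cP_k') < x_k^*(\cP_k) = \xb_k$, and since $\xb_i \leq x_{mi} \leq x_i^*(\cP_k')$ we conclude $\xb_i < \xb_k$. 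The boundary case $p=0$ (i.e., $m=i$) is the only subtle one, because then $x_i^*(\cP_k') = s_i$ rather than an arbitrary path value; here I would invoke the hypothesis $\xb_i\neq s_i$ together with the always-true bound $\xb_i \leq s_i$ (via the trivial path $\{i\}\in\cP_{ii}$) to get $\xb_i < s_i$, and then apply (iv) along the full $\cP_k$ (which has length $L\geq 1$ since $k\neq i$) to get $s_i = x_{l_0}^*(\cP_k) < x_{l_L}^*(\cP_k) = \xb_k$, chaining to $\xb_i < \xb_k$.

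The two inequalities contradict, proving $i\notin \cP_k$. The main obstacle I anticipate is handling the $m=i$ boundary case cleanly, since one must combine the hypothesis $\xb_i\neq s_i$ with the trivial self-path bound (\ref{Pii}) to rule it out; the rest of the argument is a direct application of the progression and monotonicity properties already catalogued in (i)--(iv) preceding the lemma.
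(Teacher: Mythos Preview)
Your proof is correct and takes essentially the same approach as the paper, which compresses your two opposing inequalities into the single chain $\xb_i \leq x_i^*(\cP_k) < \xb_k \leq x_k^*(\cP_i) < \xb_i$. Your explicit treatment of the boundary case $m=i$ is in fact unnecessary (the bound $\xb_i \leq x_i^*(\cP_k)=s_i$ already holds there via the trivial self-path, and $s_i < \xb_k$ follows from (iv), so the hypothesis $\xb_i\neq s_i$ need not be invoked), but this does not affect correctness.
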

\begin{proof}
	To establish a contradiction suppose  $ i\in \cP_k. $ Then because of (\ref{rucom}-\ref{xb}) $ \xb_i\leq x_i^*(P_k) < \xb_k \leq x_k^*(P_i).  $ On the other hand as $ k\in \cP_i $, $ \xb_i > x_k^*(P_i) $ leading to a contradiction.
\end{proof}

We now show that $ \xb_{i} $ obey a recursion like (\ref{stationary}), thus proving that they represent a stationary point.

\begin{lemma}\label{lbrecursion}
	Under Assumption \ref{amain}, then $ \xb_{i} $ in Definition \ref{xij} obeys:
	\begin{equation}\label{brecursion}
	\xb_i=\min\left\lbrace \min_{k\in \cN(i)} f\left (\xb_k, e_{ik} \right ) , s_i  \right\rbrace, \forall ~i\in V .
	\end{equation}
\end{lemma}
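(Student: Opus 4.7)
The plan is to sandwich $\xb_i$ between the two sides of (\ref{brecursion}), proving $\xb_i \leq \min\{s_i,\min_{k\in\cN(i)} f(\xb_k,e_{ik})\}$ and the reverse inequality separately. Both parts pass through concrete choices of paths supplied (or precluded) by (\ref{rucom})--(\ref{xb}), and both lean on the progressive/monotone properties of $f$ together with the strict monotonicity (iv) already noted in the excerpt.

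For the upper bound, first $\xb_i \leq s_i$ is immediate: (\ref{Pii}) and (\ref{rucom}) applied to $\{i\}\in\cP_{ii}$ give $x_{ii}=s_i$, so $\xb_i=\min_j x_{ji}\leq s_i$. Next, fix $k\in\cN(i)$ and let $j^*$, $\cP^*\in\cP_{j^*k}$ realize $\xb_k=x_k^*(\cP^*)$. If $i\notin\cP^*$, then appending $i$ at the end of $\cP^*$ produces a valid member of $\cP_{j^*i}$ whose terminal $x^*_i$-value, by (\ref{rucom}), is exactly $f(\xb_k,e_{ik})$, yielding $\xb_i \leq f(\xb_k,e_{ik})$. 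If instead $i\in\cP^*$, the prefix of $\cP^*$ up to $i$ lies in $\cP_{j^*i}$, and property (iv) forces its $x^*_i$-value to be strictly less than $x_k^*(\cP^*)=\xb_k$; combining with progressivity (\ref{sigma}) gives $\xb_i \leq x_i^*(\text{prefix}) < \xb_k < f(\xb_k,e_{ik})$. Either way $\xb_i \leq f(\xb_k,e_{ik})$, completing the upper bound.

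For the lower bound, select $j^\dagger$ and $\cP^\dagger\in\cP_{j^\dagger i}$ realizing $\xb_i=x_i^*(\cP^\dagger)$. If $\cP^\dagger=\{i\}$, then $\xb_i=s_i$ directly. Otherwise $\cP^\dagger$ has a penultimate neighbor $k'\in\cN(i)$, and its prefix $\cP'$ up to $k'$ satisfies $\cP'\in\cP_{j^\dagger k'}$, hence $x_{k'}^*(\cP')\geq \xb_{k'}$; monotonicity (\ref{monotonic}) then yields $\xb_i=f(x_{k'}^*(\cP'),e_{ik'})\geq f(\xb_{k'},e_{ik'})\geq \min_{k\in\cN(i)} f(\xb_k,e_{ik})$. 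In both subcases $\xb_i\geq \min\{s_i,\min_{k\in\cN(i)} f(\xb_k,e_{ik})\}$, which combined with the upper bound yields (\ref{brecursion}). The one nontrivial step, and the only place where the proof is not a direct manipulation of (\ref{rucom}), is the subcase $i\in\cP^*$ of the upper bound: the naive ``append $i$'' argument would revisit a node, which convention (\ref{Pii}) disallows, so one must instead invoke (iv) and (\ref{sigma}) to obtain the even stronger strict inequality $\xb_i<f(\xb_k,e_{ik})$ in that degenerate configuration.
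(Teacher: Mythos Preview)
Your proof is correct and takes a genuinely different route from the paper's. The paper unfolds $\xb_i$ as a chain of nested minimizations over $j\in V\setminus\{i\}$, $\cP\in\cP_{ji}$, and $k\in\cN(i)$, then swaps the inner minimizations so that the optimization over paths terminates at $k$ rather than $i$; the delicate point---that the minimizing path in $\cP_{jk}$ does not pass through $i$---is handled by invoking Lemma~\ref{lnoi}. Your sandwich argument avoids this machinery entirely: for the upper bound you either append $i$ to the optimal $k$-path, or, when $i$ already lies on that path, cut to the prefix ending at $i$ and use (iv) and progressivity to get the even stronger bound $\xb_i<\xb_k<f(\xb_k,e_{ik})$; for the lower bound you simply strip the last edge from an optimal $i$-path. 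The payoff is that your proof is self-contained and does not require Lemma~\ref{lnoi} as a prerequisite (indeed, your prefix argument in the $i\in\cP^*$ subcase is essentially a localized version of the idea behind that lemma). The paper's approach, by contrast, is closer in spirit to a min-plus algebraic manipulation and keeps the two directions fused into a single equality chain, at the cost of the extra lemma.
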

\begin{proof}
	From Definition \ref{xij}, in particular, (\ref{xji}) and (\ref{barsource}) and the recursion in (\ref{rucom}), and the fact that $ s_i=x_{ii} $ there holds:
	\begin{flalign}
	\xb_i=&\min\left \lbrace\min_{j\in V\setminus \{i\}}\{x_{ji}\},s_i\right \rbrace\nonumber\\
	=&\min\left \lbrace\min_{j\in V\setminus \{i\}}\left \{ \min_{ \cP\in \cP_{ji}}\{x^*_i(\cP)\}\right\},s_i\right \rbrace\nonumber\\
	=&\min\left \lbrace\min_{j\in V\setminus \{i\}}\left \{ \min_{ \cP\in \cP_{ji}}\left \{\min_{k\in \cN(i)}\{f(x_k^*(\cP),e_{ki})\}\right \}\right\},s_i\right \rbrace.\label{{initkeysequence}}
	\end{flalign}
The nature of the recursion in (\ref{rucom}) ensures  that for every $ k\in\cN(i) $ and $ \cP\in \cP_{ji} $ there is a 
	$ \bar{\cP}\in \cP_{jk} $ such that in (\ref{{initkeysequence}}), the minimizing $ x_k^*(\cP) $ equals the minimizing $ x_k^*(\bar{\cP}) $. Thus,
	\[ \xb_i= \min\left \lbrace\min_{j\in V\setminus \{i\}}\left \{ \min_{ \cP\in \cP_{jk}}\left \{\min_{k\in \cN(i)}\{f(x_k^*(\cP),e_{ki})\}\right \}\right\},s_i\right \rbrace.\]
	As from Lemma \ref{lnoi},  the minimizing path $ \cP_{jk} $ cannot include $ i, $
	\[ \xb_i= \min\left \lbrace\min_{j\in V}\left \{ \min_{ \cP\in \cP_{jk}}\left \{\min_{k\in \cN(i)}\{f(x_k^*(\cP),e_{ki})\}\right \}\right\},s_i\right \rbrace.\]
	As $ f(\cdot,\cdot) $ is monotonically increasing in the first argument, from (\ref{xji}) and (\ref{barsource}), (\ref{brecursion}) is proved by
	\begin{flalign}
	\xb_i=&\min\left \lbrace\min_{k\in \cN(i)}\left \{f\left ( \min_{  j\in V,\cP\in \cP_{jk}}\left \{x_k^*(\cP)\right \},e_{ki}\right )\right\},s_i\right \rbrace\nonumber\\
	=&\min\left \lbrace\min_{k\in \cN(i)}\left \{f\left ( \xb_k,e_{ki}\right )\right\},s_i\right \rbrace.
	\end{flalign}
\end{proof}

Thus we have established the existence of at least one stationary point.  To establish its uniqueness  we make a definition.

 \begin{definition}\label{dtrue}
 	In (\ref{sta}), if $x_i = s_i$, then we say that $i$ is its own true constraining node. Otherwise, any minimizing $k$ in the second bullet of (\ref{sta}) is a true constraining node of $i$. As $i$ may have more than one true constraining node, its set of true constraining nodes is designated as $\cC(i)$.	
 \end{definition}
As $ f(\cdot,\cdot) $ is progressive we have that,
\begin{equation}\label{ineqconst}
x_i>x_k, ~\forall k\in \cC(i) \mbox{ and } i\notin \cS_{\infty}.
\end{equation}
The following lemma catalogs some crucial properties of true constraining nodes and their implications to stationary points.

\begin{lemma}\label{lnonempty}
	Consider $ x=[x_1,\cdots, x_n]^T $ whose elements obey 
	(\ref{stationary}).  Then the following hold under Assumption \ref{amain}. (A)  Consider any sequence of nodes, without loss of generality $ \{1,2,\cdots,l\} $ such that $ i+1\in \cC(i) $   as defined in Definition \ref{dtrue}. Then this sequence is finite and its last element is in $ \cS_{\infty}, $ defined in (\ref{sourceset}). (B)  The set $ \cS_{\infty} $ is nonempty. (C) The set $ \cS_{\min}\subset \cS_{\infty}. $ (D) All $ x_i $ are finite.
\end{lemma}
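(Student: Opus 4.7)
The backbone of the plan is the strict-decrease property (\ref{ineqconst}): along any chain of true constraining nodes, once we are outside $\cS_\infty$ the stationary values drop at every step. For (A), I would argue that in the sequence $1,2,\dots,l$ with $i+1\in\cC(i)$, each transition from a node $i\notin\cS_\infty$ to $i+1$ satisfies $x_i>x_{i+1}$ by (\ref{ineqconst}). Strict monotonicity rules out repeats, so the chain of distinct nodes is bounded by $|V|$ and hence finite. The last element $l$ must lie in $\cS_\infty$: if $l\notin\cS_\infty$, then by Definition~\ref{dtrue} any member of $\cC(l)$ is a neighbor of $l$ and hence distinct from $l$, so the chain could be extended with a strictly smaller value, contradicting maximality. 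Conversely, if $l\in\cS_\infty$ then $\cC(l)=\{l\}$ and the sequence of distinct nodes stops on its own. Part (B) is then immediate: start a chain at any $i_0$; if $i_0\in\cS_\infty$ we are done, otherwise apply (A) to the maximal extension to find a terminal node in $\cS_\infty$.

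For (C) I would argue by contradiction. Suppose $i\in\cS_{\min}$ but $i\notin\cS_\infty$. Then (\ref{stationary}) yields $x_i<s_i=s_{\min}$. Build a maximal chain from $i$; by (A) it terminates at some $j\in\cS_\infty$, with $x_j<x_i<s_{\min}$ obtained by iterating (\ref{ineqconst}). Since $j\in\cS_\infty$, $x_j=s_j$. If $s_j=\infty$, then $x_j=\infty$, contradicting $x_j<s_{\min}$. If $s_j<\infty$ then $j\in\cS^*$, so $s_j\geq s_{\min}$ by (\ref{mins}), again contradicting $x_j<s_{\min}$. Hence $i\in\cS_\infty$.

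For (D) I would use a connectedness/propagation argument. Let $F=\{i\in V:x_i<\infty\}$. By Assumption~\ref{amain} and part (C), $\cS_{\min}\subset\cS_\infty$ is nonempty and contained in $F$ since $s_{\min}$ is finite. Suppose for contradiction that $F\neq V$; by connectedness of $\cG$ there exist $i\notin F$ and $j\in F$ with $j\in\cN(i)$. Then (\ref{stationary}) gives $x_i\leq f(x_j,e_{ij})<\infty$ using the finiteness-on-finite-inputs property of $f$ stated just after (\ref{monotonic}), contradicting $i\notin F$. Therefore $F=V$.

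The subtlest step is the chain-termination argument in (A): one must carefully link Definition~\ref{dtrue} (which causes $i\in\cC(i)$ exactly when $i\in\cS_\infty$) with the strict decrease (\ref{ineqconst}) so that a maximal chain of \emph{distinct} nodes terminates precisely on $\cS_\infty$. Once (A) is in hand, parts (B), (C), and (D) follow almost mechanically via the chain construction, a contradiction on $s_{\min}$, and one-step propagation of finiteness across the connected graph.
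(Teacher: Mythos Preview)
Your proposal is correct and follows essentially the same approach as the paper: both use (\ref{ineqconst}) to show the chain in (A) has no repeats and terminates in $\cS_\infty$, derive (B) directly from (A), prove (C) by contradiction via a chain from a node in $\cS_{\min}$ ending at some $j\in\cS_\infty$ with $s_j<s_{\min}$, and prove (D) by propagating finiteness from $\cS_{\min}$ across the connected graph using (\ref{stationary}). Your treatment of (C) is slightly more explicit in separating the $s_j=\infty$ and $s_j<\infty$ cases (the former is in fact vacuous since $x_j<s_{\min}<\infty$ forces $s_j$ finite), and your (D) uses a boundary-edge argument rather than the paper's path argument, but these are cosmetic differences, not genuinely different routes.
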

\begin{proof}
	Due to (\ref{ineqconst}) the chain in (A) cannot have cycles. As there are only $ N $ nodes it must  end, and  the last element $ l $ must be its own true constraining node  i.e.  $l\in \cS_{\infty}.$ This proves (A), and also (B). Without loss of generality suppose  $ s_1=s_{\min} $.  To establish a contradiction, suppose $ 1\notin \cS_{\infty}. $ Then from (A) there is a sequence of nodes starting from 1 and terminating in $ j\in \cS_{\infty} $, such that each is the true constraining node of its predecessor.  Thus from (\ref{ineqconst}) $ x_j=s_j<s_1=s_{\min} $, violating the definition of $ s_{\min}, $ proving (C). 
	To prove (D) consider $ i\neq 1. $  As the graph is connected there is a path from $ 1 $ to $ i $, comprising nodes $ \{1=l_1\rightarrow l_2\rightarrow \cdots l_k=i.\} $ Then from (\ref{sta}) for each $ n\in \{2,\cdots, k\} $ there holds
	\[ x_{l_n}\leq f(x_{l_{n-1}}, e_{l_n,l_{n-1}}). \]
	Due to the fact that $ f(a,b) $ is  finite for finite $ a,b, $ $ x_{l_n} $ is finite if $x_{l_{n-1}}$ is finite. The result follows as $ x_1 $ is finite.
\end{proof}

We make another definition for proving  uniqueness of the stationary point and  convergence of the algorithm.
\begin{figure}[htb]
	\centering
	\includegraphics[width = 0.5\columnwidth]{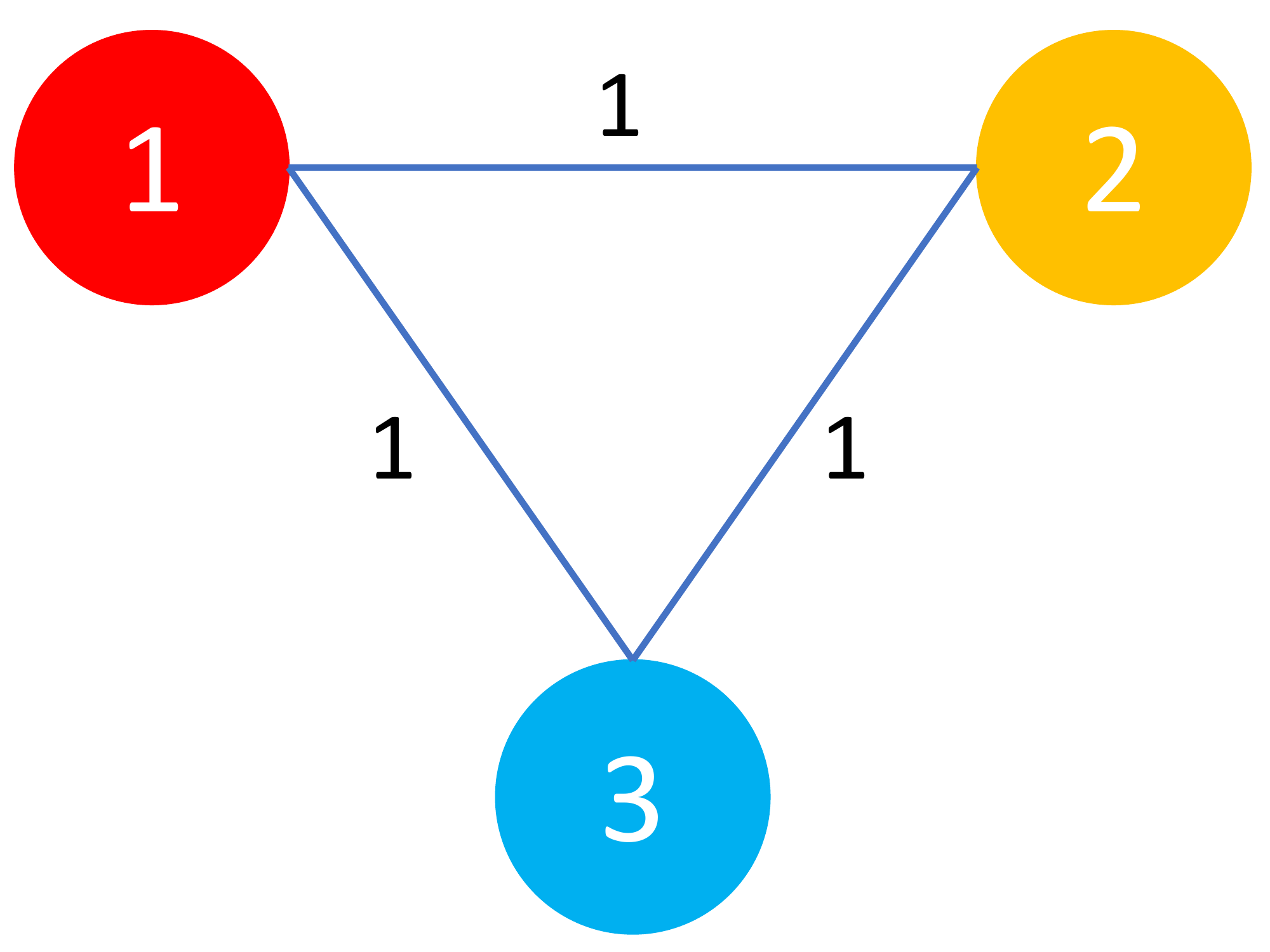}
	\caption{Illustration of graph where $ \cS_{\infty}$ is not a subset of $ \cF_0 $. Here $ s_1=0 $, $ s_2=1 $ and $ s_3=\infty $. All edge lengths are $ 1 $ and $ f(a,b)=a+b. $ In this case $ x_1=0, x_2=1 $ and $ x_3=1. $ Here $ 2\in \cS_\infty $ as $ x_2=s_2. $  However, as $ x_2=x_1+1, $ $ 2\in \cF_1. $}
	\label{fig:3nodes}
\end{figure}

\begin{definition}\label{dF}
	We call a path from a node $i$ to $j \in \cS_{\infty}$ a shortest path, if it starts at $i$, ends with $j \in \cS_{\infty}$, and each node in the path is a true constraining node of its predecessor. We call a shortest path from $i$ the longest shortest path if it has the most nodes among all shortest paths of $i$. The set $\cF_i$ is the set of nodes whose longest shortest paths to the source set have $i + 1$ nodes. We call $\cD(\cG)$ the effective diameter of $\cG$ if the longest shortest path among all $i \in V$ has $\cD(\cG)$ nodes.
\end{definition}
From Lemma \ref{lnonempty}, the effective diameter is always finite.
If a node $i$ has two shortest paths, one with two and the other with three nodes, then $i \notin \cF_1$ but $i \in \cF_2$.
It is tempting to believe that $ \cF_0=\cS_{\infty} $. However,  the scenario of Figure \ref{fig:3nodes} provides a counterexample. In this case $ s_1=0 $, $ s_2=1 $ and $ s_3=\infty $. All edge lengths are $ 1 $ and $ f(a,b)=a+b. $ In this case $ x_1=0, x_2=1 $ and $ x_3=1. $ Here $ 2\in \cS_\infty $ as $ x_2=s_2. $  However, as $ x_2=x_1+1, $ $ 2\in \cF_1. $

The following lemma  exposes a key property of the sets $ \cF_i. $

\begin{lemma}\label{lFF}
	Under the conditions of Lemma \ref{lnonempty}, consider $ \cF_i $ given 	in Definition \ref{dF}.  If for some   $ k\in \{1,\cdots, \cD(\cG)-1\} $,   $ \cF_k $ is nonempty then every node in $ \cF_k $ has a true constraining node in $ \cF_{k-1} $. Further $ \cS_{\min}\subset \cF_0\subset \cS_{\infty}. $
\end{lemma}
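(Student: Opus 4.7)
The plan is to prove the two assertions in order, relying on the fact that $x$ values strictly decrease along any chain of true constraining nodes; this follows from progressivity, $f(x_k,e_{ik}) > x_k+\sigma$, which forces $x_i > x_k$ whenever $k \in \cC(i)$ with $k \neq i$.

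For the first assertion, fix $i \in \cF_k$ with $1 \leq k \leq \cD(\cG)-1$ and let $P = \{i = l_1, l_2, \ldots, l_{k+1}\}$ with $l_{k+1} \in \cS_\infty$ be a longest shortest path from $i$. Then $l_2 \in \cC(i)$ and the tail $\{l_2, \ldots, l_{k+1}\}$ is itself a shortest path from $l_2$ having $k$ nodes, so $l_2$'s longest shortest path has at least $k$ nodes. To show it has exactly $k$ nodes, so that $l_2 \in \cF_{k-1}$, I argue by contradiction: suppose $l_2 \in \cF_j$ for some $j \geq k$ and let $Q$ be a longest shortest path from $l_2$ with $j+1 \geq k+1$ nodes. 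Prepending $i$ to $Q$ yields a candidate shortest path from $i$ with $j+2 \geq k+2$ nodes. This candidate is simple (no repeated node), because strict decrease forces every node along $Q$ to have $x$-value no larger than $x_{l_2} < x_i$, so $i$ cannot reappear in $Q$. The resulting path contradicts the maximality of $P$ and establishes $l_2 \in \cF_{k-1}$.

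The inclusion $\cF_0 \subset \cS_\infty$ is immediate from Definition \ref{dF}: a one-node shortest path $\{i\}$ must terminate at a node of $\cS_\infty$, namely $i$ itself. For $\cS_{\min} \subset \cF_0$, Lemma \ref{lnonempty}(C) gives $\cS_{\min} \subset \cS_\infty$, so $\{i\}$ is a valid one-node shortest path for each $i \in \cS_{\min}$. To rule out any longer shortest path it suffices to show such $i$ has no true constraining node other than itself. If such $k \neq i$ existed, then $x_i = f(x_k,e_{ik}) > x_k$, and applying Lemma \ref{lnonempty}(A) to a chain of true constraining nodes starting at $k$ yields termination at some $j \in \cS_\infty$ with $s_j = x_j < x_i = s_{\min}$. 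Since $x_j$ is finite by Lemma \ref{lnonempty}(D), $j \in \cS^*$, contradicting the definition of $s_{\min}$ in (\ref{mins}).

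The main obstacle is the prepending step in the first assertion: one must verify that extending $Q$ by $i$ produces a genuine simple path rather than one that loops back to $i$. This is precisely where strict monotonicity of $x$ along shortest paths, a consequence of progressivity, is indispensable; everything else reduces to careful bookkeeping of path lengths and invocation of the already-established structural properties from Lemma \ref{lnonempty}.
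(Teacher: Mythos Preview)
Your proof is correct and follows essentially the same route as the paper's. Both arguments take the second node $l_2$ of a longest shortest path from $i\in\cF_k$, observe it is a true constraining node of $i$, and derive a contradiction from $l_2\in\cF_j$ with $j\geq k$ by prepending $i$ to a longest shortest path out of $l_2$; the paper states this tersely while you spell out the tail argument and the prepending step, including the simplicity check via strict monotonicity of $x$-values along constraining chains (a point the paper's proof leaves implicit). Your treatments of $\cF_0\subset\cS_\infty$ and $\cS_{\min}\subset\cF_0$ likewise match the paper's, the latter via the same contradiction $s_j<s_{\min}$ obtained from progressivity along a constraining chain.
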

\begin{proof}
Consider any $ i\in \cF_k. $   From Definition \ref{dF}, starting from   $ i$  there is a sequence containing $ k+1 $ nodes to a $ j\in \cS_{\infty} $  in which each node is the true constraining node of its predecessor. Suppose the second node in this sequence is $ l $. By definition $ l $ is a true constraining node of $ i. $ Also by definition $ l\in \cF_m, $ where $ m\geq k-1. $ If $ m>k-1 $, then for some $ M>k $, $ i\in \cF_M. $ This contradicts the assumption that $ i\in \cF_k. $  Thus $ l\in \cF_{k-1} $. By definition, every node in $ \cF_0 $ is its own true constraining node as otherwise it will belong to some $ \cF_i $, $ i>0$. Thus from Definition \ref{dtrue}, $  \cF_0\subset \cS_{\infty}. $  

Finally consider $ j\in \cS_{\min}. $ By definition $ s_j=s_{\min}\leq s_i $  for all $ i. $ If $ j\in \cF_k, $ with $ k>0 $, then there is a sequence starting from $ j $ to an $ l\in \cS_{\infty} $, such that each node is the true constraining of its predecessor. Thus from the progressive property of $ f(\cdot,\cdot) $, $ s_{\min}=s_j\geq x_j >s_l, $ establishing a contradiction. Thus $ j\in \cF_0 $ and  $ \cS_{\min}\subset \cF_0. $
\end{proof}
\begin{lemma}\label{lF}
Under the conditions of Lemma \ref{lnonempty}, with $ \cF_i $ defined 	in Definition \ref{dF}, 
\begin{equation}\label{pf}
 \cF_i \neq \emptyset, ~\forall ~  i \in \{0,1,\cdots,\cD(\cG) - 1\},
\end{equation}
and for all $ i \in \{0,1,\cdots,\cD(\cG) - 2\} $ each node in $\cF_{i+1}$ has a true constraining node in $\cF_i$.

\end{lemma}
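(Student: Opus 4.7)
The plan is to establish both claims by a single downward induction on the index $i$, running from $i = \cD(\cG)-1$ down to $i = 0$. The second claim will fall out as a byproduct of the induction step itself, since it is precisely the conclusion supplied by Lemma \ref{lFF}.

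For the base case, I would argue that $\cF_{\cD(\cG)-1}$ is nonempty directly from Definition \ref{dF}: the effective diameter is by definition the number of nodes in the longest shortest path among all nodes in $V$, so there exists at least one node $v \in V$ whose longest shortest path contains exactly $\cD(\cG)$ nodes, placing $v \in \cF_{\cD(\cG)-1}$.

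For the inductive step, suppose $\cF_{k}$ is nonempty for some $k \in \{1,\ldots,\cD(\cG)-1\}$. Then Lemma \ref{lFF} applies and asserts that every node in $\cF_k$ has a true constraining node lying in $\cF_{k-1}$. In particular, $\cF_{k-1}$ is nonempty, and the ``constraining node in $\cF_i$'' clause of the lemma being proved is exactly the statement already established by Lemma \ref{lFF}. Iterating this step from $k = \cD(\cG)-1$ down to $k = 1$ yields nonemptiness of $\cF_i$ for every $i \in \{0,1,\ldots,\cD(\cG)-1\}$, together with the desired constraining-node property for every $i \in \{0,1,\ldots,\cD(\cG)-2\}$.

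I do not expect any significant obstacle here, because Lemma \ref{lFF} has already done the structural work of connecting consecutive layers $\cF_{k}$ and $\cF_{k-1}$. The only subtlety worth flagging is confirming that the induction remains inside the valid range of Lemma \ref{lFF} (i.e., $k \geq 1$), which it does by construction, and noting that the base-case existence in $\cF_{\cD(\cG)-1}$ is an immediate consequence of the definition of $\cD(\cG)$ rather than something needing separate combinatorial argument.
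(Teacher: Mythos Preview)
Your proposal is correct and mirrors the paper's own proof: both initiate a downward induction at $\cF_{\cD(\cG)-1}$ (nonempty by Definition~\ref{dF}) and invoke Lemma~\ref{lFF} at each step to conclude $\cF_{k-1}\neq\emptyset$ together with the constraining-node property. There is no meaningful difference in approach.
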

\begin{proof}

We first show by induction that for each $ k\in \{1,\cdots,\cD(\cG)-1\} $, $ \cF_k $ is nonempty.	
From Definition \ref{dF}, $ \cF_{\cD(\cG)-1}\neq \emptyset $, initiating the induction.  Now suppose for some $  L\in \{1,\cdots, \cD(\cG)-1\} $,  $ \cF_L\neq \emptyset. $ Then from Lemma \ref{lFF} every member of $ \cF_{L} $ has a true constraining node in $ \cF_{L-1} $, making $ \cF_{L-1}\neq \emptyset $.  Further, again from Lemma \ref{lFF} every member of $ \cF_1 $ has a true constraining node in $ \cF_0 $ making the latter nonempty. Then Lemma \ref{lFF} proves the result.
\end{proof}

We can now prove the uniqueness of the stationary point.

\begin{theorem}\label{tstat}
		Under the conditions of Lemma \ref{lnonempty},  $ x_i=\xb_i, $ defined in  Definition \ref{xij}, represents the unique stationary point  obeying (\ref{stationary}). Further the  source set is given by:
		\begin{equation}\label{trsource}
		\cS_{\infty} = \{i\in V|\xb_{i} = s_i \}.
		\end{equation}
\end{theorem}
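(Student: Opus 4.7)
The plan is to prove the uniqueness claim by showing, for \emph{any} vector $x=[x_1,\ldots,x_N]^T$ obeying (\ref{stationary}), that $x_i=\bar{x}_i$ for every $i\in V$. Since Lemma \ref{lbrecursion} already exhibits $\bar{x}$ as one such stationary point, the uniqueness together with (\ref{trsource}) will follow by straightforward substitution.

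First I would establish $x_i\le \bar{x}_i$ by a ``path-relaxation'' argument. Take any $j\in\cS^*$ and any path $\cP=\{j=l_0,l_1,\ldots,l_L=i\}\in\cP_{ji}$. From (\ref{stationary}) the stationary vector obeys $x_j\le s_j=x^*_{l_0}(\cP)$ and, for each $k\ge 1$, $x_{l_k}\le f(x_{l_{k-1}},e_{l_{k-1}l_k})$. Using the monotonicity (\ref{monotonic}) of $f$ in its first argument, a one-line induction on $k$ gives $x_{l_k}\le x^*_{l_k}(\cP)$, so $x_i\le x^*_i(\cP)$. Minimizing over $j$ and $\cP$ and using (\ref{xb}) yields $x_i\le \bar{x}_i$ (paths starting at $j\notin\cS^*$ contribute only $+\infty$, by point (i) after Definition \ref{xij}, so they do not tighten the bound).

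Next I would establish the reverse inequality $x_i\ge \bar{x}_i$ by exhibiting a specific witnessing path. Starting from $i$ and repeatedly passing to a true constraining node in the sense of Definition \ref{dtrue}, Lemma \ref{lnonempty}(A) guarantees that the resulting chain terminates at some $j\in\cS_{\infty}$ after finitely many steps; reverse this chain to form a path $\cP=\{l_0=j,l_1,\ldots,l_L=i\}\in\cP_{ji}$. Because each $l_{k-1}$ is a true constraining node of $l_k$, Definition \ref{dtrue} gives $x_{l_0}=s_{l_0}$ and $x_{l_k}=f(x_{l_{k-1}},e_{l_{k-1}l_k})$ for $k\ge 1$; a second one-line induction, comparing with (\ref{rucom}), shows $x_{l_k}=x^*_{l_k}(\cP)$. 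In particular $x_i=x^*_i(\cP)\ge \bar{x}_i$ by (\ref{xji})-(\ref{xb}).

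Combining the two inequalities yields $x_i=\bar{x}_i$ for every $i\in V$, which proves uniqueness of the stationary point. The characterization (\ref{trsource}) then follows immediately: by definition $\cS_{\infty}=\{i\mid x_i=s_i\}$, and substituting $x_i=\bar{x}_i$ gives $\cS_{\infty}=\{i\mid \bar{x}_i=s_i\}$. The main delicate step is the second inequality, since it requires producing an \emph{equality} $x_{l_k}=x^*_{l_k}(\cP)$ rather than just an inequality; this is precisely where the ``true constraining node'' construction and Lemma \ref{lnonempty}(A) (finiteness and termination in $\cS_{\infty}$ of the constraining chain, together with progressivity of $f$) do the essential work.
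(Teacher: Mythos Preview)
Your proof is correct, and for the inequality $x_i\ge\bar{x}_i$ it coincides with the paper's argument: both build the constraining chain of Lemma~\ref{lnonempty}(A), reverse it into a path $\cP\in\cP_{ji}$ with $j\in\cS_\infty$, and observe that along this path the stationary recursion matches (\ref{rucom}) exactly, yielding $x_i=x^*_i(\cP)\ge\bar{x}_i$.

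Where you differ is in establishing $x_i\le\bar{x}_i$. The paper does not use your path-relaxation argument; instead, having already secured $x_i\ge\bar{x}_i$, it runs an induction over the layers $\cF_L(\bar{x})$ of Definition~\ref{dF}: for $i\in\cF_0(\bar{x})\subset\cS_\infty(\bar{x})$ one has $\bar{x}_i=s_i\ge x_i$, and for $i\in\cF_{L+1}(\bar{x})$ with a true constraining node $k\in\cF_L(\bar{x})$ the induction hypothesis $x_k=\bar{x}_k$ gives $x_i\le f(x_k,e_{ik})=f(\bar{x}_k,e_{ik})=\bar{x}_i$. This route leans on Lemmas~\ref{lFF} and~\ref{lF}. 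Your approach is more elementary and self-contained, needing only monotonicity (\ref{monotonic}) and the definition of $\bar{x}_i$, whereas the paper's approach reuses the $\cF_L$ machinery that it develops anyway for the convergence analysis in Section~\ref{guas}. Both are valid; yours is the cleaner argument for this theorem taken in isolation.
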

\begin{proof}
	From Lemma \ref{lbrecursion} $ x_i=\xb_i $  is a stationary point and from (\ref{sourceset}), (\ref{trsource}) is the corresponding source set. 
	
	Call $ \xb=[\xb_1,\cdots,\xb_n]^T $ and consider a potentially different stationary point $ x=[x_1,\cdots,x_n]^T. $ As constraining nodes, source sets and the sets $ \cF_i $ depend on the stationary point, in this proof we will distinguish them with the additional argument of the stationary point, e.g. $ \cC(i,x) $.
	
	We first assert that for all $ i, $ $ x_i\geq\xb_i. $ To establish a contradiction suppose for some $ x_i<\xb_i. $ From Lemma \ref{lnonempty}, there is a $ j\in \cS_{\infty}(x)\subset \cS^* $ and a sequence of nodes
	$ i=l_1\rightarrow\cdots \rightarrow l_L=j $ such that
	\begin{equation}\label{newsequence}
	x_{l_{k+1}}=f\left (x_{l_k}, e_{l_k,l_{k+1}}\right ).
	\end{equation}
	From Definition \ref{xij} this means
	\[ x_{ij}\leq x_i<\xb_i, \]
	violating the definition of $ \xb_i. $  Thus indeed $ x_i\geq\xb_i. $

	As from Lemma \ref{lFF} for all $ j\in\{0,\cdots, \cD(\cG)-1\} $, $ \cF_j(\xb)\neq \emptyset $, we  use induction to show that $ x_i=\xb_i $, for all $ i\in \cF_j(\xb). $ 
	Consider any $ i\in \cF_0(\xb). $ As $\cF_0(\xb) \subset \cS_{\infty}(\xb) $,  
	 $ \xb_i= s_i. $ As by definition,  $ s_i\geq x_i\geq \xb_{i}=s_i $, one must have $ x_i=s_i. $

	To sustain the induction  assume that for some $ 0\leq L<\cD(\cG)-1 $, 
	 $ x_k=\xb_k, $ for all $ k\in \cF_L(\xb). $ To establish a contradiction suppose for some $ i\in \cF_{L+1}(\xb) $, $ x_i\neq \xb_i. $ By Lemma \ref{lF} there is a $ k\in \{\cC(i,\xb)\bigcap\cF_L(\xb)\} $, By the induction hypothesis, $ \xb_k=x_k. $ Then as $ k $ is a neighbor of $ i $, from (\ref{sta})
\begin{eqnarray}
x_i=\min_{l\in \cN(i)}\{f(x_l,e_{il})\}\leq f(x_k,e_{ik}) =f(\xb_k,e_{ik}) =\xb_i.
\end{eqnarray}
As $ x_i\geq \xb_i $, one obtains, $ x_i=\xb_i. $

\end{proof}

Thus we have characterized the stationary point given by (\ref{stationary}) and proved its uniqueness.


\section{Global uniform  asymptotic stability}\label{guas}
Having established the existence and uniqueness of the stationary point $ x_i $, in this section, we prove that  the state estimates $ \xh_i(t) $ yielded  by (\ref{gengenG})-(\ref{strg}), globally, uniformly converge  to these $ x_i $, in graphs   without perturbations  i.e. when  $e_{ik}$ and $s_i$  do not change.
The key steps of the proof are:
\begin{itemize}
\item[(a)] We  show that all underestimates are eventually eliminated,  i.e. for all $ i, $  and sufficiently large $ t, $ $ \xh_i(t)\geq x_i. $  This is done by using the progressive property of $ f(\cdot,\cdot) $  and the second case of (\ref{raising}), which causes $ \xh_{i} $ to increase.
\item[(b)] We show that once underestimates are eliminated, the moment a source node $ i\in \cS_{\infty}\supset\cF_0 $ invokes the first bullet of (\ref{raising}),  i.e. lies in $ \cA(t) $, it converges to $ s_i. $ Similarly if at a time $ t' $ and beyond,  $ \cF_0$  to $\cF_{L} $ defined in Definition \ref{dF}  have converged, if  $ i\in  \cF_{L+1} $ invokes the first case of (\ref{raising}) then $ \xh_i(t) $ converges to $ x_i. $
\item[(c)] We then show that over every finite interval, the first bullet of (\ref{raising}) must be invoked at least once as each invocation of the second case of (\ref{raising}) by $ i  $ increases $ \xh_i $ by $ \delta, $. Thus   the repeated use of the second bullet by $ i $ will eventually induce
$ \xh_i>M $, forcing $ i $ to use the first bullet of (\ref{raising}).

\item[(d)] To show (a) we define two time varying sets that partition $ V $. The first, $ \cR(t), $ the set of nodes \emph{rooted} to sources,  comprises elements of $ \cS(t) $, the  source set at time $ t $,  and all nodes constrained by elements of $ \cR(t-1). $ We show that for all  $i\in \cR(t) $, $ \xh_i(t)\geq x_i. $  Nodes in the second \emph{unrooted} set $ \cU(t) $ must also eventually have no  state estimates that are underestimates.


\end{itemize}

\subsection{Key lemmas underlying (b,c)}\label{sbc}
This section is dedicated to key lemmas that underlie  (b) and (c).
The first lemma provides and proves a key mechanism behind (b). Specifically, should after a time $ t_1 $  no neighbor of a node $ i $ ever carry underestimates and its true constraining node converges, then for all $ t> t_1 $    $ \xt_i(t) $ in (\ref{genG})  equals $ x_i $.


\begin{lemma}\label{lgenG}
	Consider (\ref{genG}) and (\ref{raising}) under Assumption \ref{amain} and a node $ i $ and a time $ t_1 $ such that the following hold for all $ t\geq t_1. $  If $ i\in \cS_{\infty} $, $  \xh_j(t)\geq x_j $ for all  $ j\in \cN(i) $.  If $ i\notin \cS_{\infty} $, (i) m $  \xh_j(t_1)\geq x_j $ for all $ j\in \cN(i) $; and   (ii) with $ k $ a true constraining node of $ i , $ $ \xh_k(t) =x_k$.  
	Then for all $ t> t_1, $  $ \xt_i(t) $  defined in (\ref{genG}) equals $ x_i. $
\end{lemma}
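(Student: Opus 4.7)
My plan is to prove the lemma by splitting along the dichotomy $i\in\cS_\infty$ vs $i\notin\cS_\infty$ and, in each case, directly evaluating the right-hand side of (\ref{genG}) at a generic time $t\geq t_1$ and showing it collapses to $x_i$. Both branches are driven by the same two ingredients: monotonicity of $f(\cdot,\cdot)$ in its first argument, and the stationary-point characterization (\ref{sta}) obtained in Theorem \ref{tstat}.

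For the case $i\in\cS_\infty$, where $x_i=s_i$, I will rearrange (\ref{stationary}) to obtain $\min_{j\in\cN(i)}f(x_j,e_{ij})\geq s_i$. Combining this with the hypothesis $\xh_j(t)\geq x_j$ for every $j\in\cN(i)$ and with monotonicity of $f$ yields $f(\xh_j(t),e_{ij})\geq f(x_j,e_{ij})\geq s_i$ for every neighbour at every $t\geq t_1$. Substituting into (\ref{genG}) then forces $\xt_i(t+1)=\min\{\,\cdot\,,\,s_i\,\}=s_i=x_i$, which is the desired conclusion.

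For $i\notin\cS_\infty$, Theorem \ref{tstat} gives $x_i=\min_{j\in\cN(i)}f(x_j,e_{ij})$ together with $x_i<s_i$. Hypothesis (i) combined with monotonicity furnishes a uniform lower bound $f(\xh_j(t),e_{ij})\geq f(x_j,e_{ij})\geq x_i$ across all neighbours, while hypothesis (ii) together with Definition \ref{dtrue} supplies a matching upper bound $f(\xh_k(t),e_{ik})=f(x_k,e_{ik})=x_i$ at the true constraining node $k\in\cN(i)$. Hence the inner minimum in (\ref{genG}) equals exactly $x_i$, and since $x_i<s_i$ the outer minimum with $s_i$ leaves $\xt_i(t+1)=x_i$.

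The main delicacy I anticipate is in the non-source case: hypothesis (i) as written mentions the single instant $t=t_1$, whereas the one-line algebra above needs $\xh_j(t)\geq x_j$ at every $t\geq t_1$. I read the umbrella qualifier ``the following hold for all $t\geq t_1$'' as binding to (i), so that no auxiliary propagation step is required; if a narrower reading were intended, one would additionally have to show by induction on $t$ that no neighbour of $i$ can reacquire an underestimate, leveraging the converged constraining value $\xh_k(t)=x_k$ and the raising structure of (\ref{raising}). Modulo this bookkeeping, the substantive content of the lemma is just that, once no neighbour is an underestimate and a true constraining neighbour has fully converged, monotonicity of $f$ pins the inner minimum in (\ref{genG}) to the stationary value $x_i$.
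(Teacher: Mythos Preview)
Your proof is correct and follows essentially the same route as the paper's: split on $i\in\cS_\infty$ versus $i\notin\cS_\infty$, use monotonicity of $f$ in its first argument together with the stationary-point identity (\ref{stationary})/(\ref{sta}) to sandwich the inner minimum in (\ref{genG}), and finish by comparing with $s_i$. Your reading of the umbrella quantifier ``for all $t\geq t_1$'' as binding hypothesis (i) is exactly how the paper uses it in its own argument, so no auxiliary induction is needed.
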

\begin{proof}
	Suppose $ i\in \cS_{\infty} $, then $ i $ is its own true constraining node. Then from Definition \ref{dcons} and (\ref{stationary}),   and the fact that $ f(a,b) $ is strictly increasing in $ a $, from (\ref{genG}) there holds for all $ t\geq t_1 $
	\begin{align*}
	\xt_i(t+1)&=\min\left \{\min_{j\in \cN(i)}\{f(\xh_j(t),e_{ij})\},s_i  \right \}\\
	&=\min\left \{\min_{j\in \cN(i)}\{f(x_j,e_{ij})\},s_i  \right \}=s_i=x_i.
	\end{align*}
	If $i \notin \cS_{\infty} $ then from Definition \ref{dcons} and (ii)
	\begin{align*}
	x_i&=\min_{j\in \cN(i)}\{f(x_j,e_{ij})\}=f(x_k,e_{ik})\\
	&=f(\xh_k(t),e_{ik}), ~\forall ~t\geq t_1.
	\end{align*}
	Further  from (i)
	\[ \min_{j\in \cN(i)}\{f(x_j,e_{ij})\}\leq \min_{j\in \cN(i)}\{f(\xh_j(t),e_{ij})\} ~\forall ~ t\geq t_1. \]
	As $ k\in \cN(i) $ one thus has that
	\begin{align*}
\min_{j\in \cN(i)}\{f(\xh_j(t),e_{ij})\}=f(x_k,e_{ik}), =x_i ~ t\geq t_1.
\end{align*}
	By definition  $ i\notin \cS_{\infty} $ implies that $ x_i< s_i. $ Thus from (\ref{genG})   for all $ t\geq t_1, $ $ \xt_i(t+1) $ equals
	\begin{align*}
	\min\left \{\min_{j\in \cN(i)}\{f(\xh_j(t),e_{ij})\},s_i  \right \}=\min\left \{x_i,s_i  \right \}=x_i.
	\end{align*}

\end{proof}

In view of Definition \ref{dAE} under the conditions of Lemma \ref{lgenG}, if at any $ t>t_1 $, $ i\in \cA(t)\bigcap \cS_{\infty} $, then $\xh_i(t)= \xt_i(t)=s_i. $ The next lemma shows that if after $ t_1 $ defined in Lemma \ref{lgenG}, $ i $ ever enters $ \cA(t) $ then $ \xh_i(t) $ converges immediately to $ x_i. $
\begin{lemma}\label{lArecur}
	Consider (\ref{gengenG}-\ref{strg}). Suppose the conditions of Lemma \ref{lgenG} hold, and for some $  t_2>t_1 $, $ i\in \cA(t_2) $ defined in Definition \ref{dAE}. Then for all $ t\geq t_2 $, $ \xh_i(t) =x_i.$
\end{lemma}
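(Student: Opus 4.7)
The plan is to combine Lemma \ref{lgenG}, which locks in $\xt_i(t) = x_i$ for all $t > t_1$, with the raising rule (\ref{raising}) to argue by induction on $t \geq t_2$ that the raising case of interest is invoked at every subsequent step, so the update collapses to $\xh_i(t+1) = \xt_i(t+1) = x_i$.

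First, I would establish the base case $t = t_2$. Since $i \in \cA(t_2)$, Definition \ref{dAE} gives $\xh_i(t_2) = \xt_i(t_2)$, and Lemma \ref{lgenG} (applied at time $t_2 > t_1$) gives $\xt_i(t_2) = x_i$. Hence $\xh_i(t_2) = x_i$.

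For the inductive step, suppose $\xh_i(t) = x_i$ for some $t \geq t_2$. Since $t+1 > t_1$, Lemma \ref{lgenG} again yields $\xt_i(t+1) = x_i$. Now apply (\ref{gengenG})–(\ref{raising}) with $\ell_1 = \xt_i(t+1) = x_i$ and $\ell_2 = \xh_i(t) = x_i$. The condition $|\ell_2 - \ell_1| = 0 \leq D$ is satisfied (regardless of whether $\ell_2 \geq M$), so the first branch of (\ref{raising}) is triggered, giving $\xh_i(t+1) = \ell_1 = x_i$. This closes the induction and proves $\xh_i(t) = x_i$ for all $t \geq t_2$.

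There is no real obstacle here; the only subtlety worth flagging is making sure that the conditions of Lemma \ref{lgenG} remain applicable at every step $t \geq t_2$, which they do because those hypotheses are stated uniformly for all $t \geq t_1$, and $t_2 > t_1$. The lemma is essentially a ``trapping'' argument: once $i$ first lands in $\cA$ after its neighbors and true constraining node are in place, the dead-zone condition $|\ell_2 - \ell_1| \leq D$ is automatically preserved at the correct value, and $i$ never leaves $\cA$ again.
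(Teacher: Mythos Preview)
Your proof is correct and essentially matches the paper's own argument: both induct from $t_2$, invoke Lemma \ref{lgenG} to get $\xt_i(t)=x_i$ and $\xt_i(t+1)=x_i=\xh_i(t)$, and then use the dead-zone clause $|\ell_2-\ell_1|=0\leq D$ in (\ref{raising}) to conclude $i\in\cA(t+1)$ and $\xh_i(t+1)=x_i$. The only cosmetic difference is that the paper phrases the induction hypothesis as ``$i\in\cA(t)$'' rather than ``$\xh_i(t)=x_i$'', but these are equivalent under Lemma \ref{lgenG}.
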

	\begin{proof}
In view of Lemma \ref{lgenG}, we need to show that for all  $ t\geq t_2 $, $ i\in \cA(t). $
Use induction. By  hypothesis, $ i\in \cA(t_2). $ Now suppose for some $ t\geq t_2>t_1 $ , $ i\in \cA(t). $ Then from Definition \ref{dAE} and Lemma \ref{lgenG}, $ \xh_i(t)=\xt_i(t)=x_i. $ Further, also from  Lemma \ref{lgenG}, $ \xt_i(t+1)=x_i=\xh_i(t). $ Thus from (\ref{genG}) and the first bullet of (\ref{raising}), $ \xh_i(t+1) = \xt_i(t+1), $   i.e. $ i\in\cA(t+1). $

\end{proof}
Thus under the conditions of Lemma  \ref{lgenG} all it takes for $ \xh_i(t) $ to converge after $ t_1 $ is for $ i $ to enter the ABF set.
The next lemma  bounds the time, described in (c), for this to  happen.

\begin{lemma}\label{lconverge}
	Under the conditions of Lemma \ref{lgenG}, consider (\ref{gengenG})-(\ref{strg}). Suppose $ \xh_i(t_2)=m_i $ for some $t_2 > t_1$.
	Define
	\[ t_3=t_2 + 1 + \min\left \{\left \lceil \frac{M-m_i}{\delta} \right \rceil, 0\right \} .\]
	Then for all $ t>t_3 $, $ \xh_i(t)=x_i. $
\end{lemma}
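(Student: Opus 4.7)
The plan is to combine the growth property of the second branch of (\ref{raising}) with Lemma \ref{lArecur}: I want to show that $i$ is forced into the ABF set $\cA$ at some time no later than $t_3$, at which point Lemma \ref{lArecur} pins $\xh_i$ at $x_i$ from then on. (The outer operator in the formula for $t_3$ reads most naturally as $\max$ rather than $\min$, since the bound must be $t_2+1$ when $m_i\ge M$ and must grow like $\lceil(M-m_i)/\delta\rceil$ when $m_i<M$; I will write the argument with that intended reading.)

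First I would dispose of the easy case $m_i\ge M$, where $\max\{\lceil(M-m_i)/\delta\rceil,0\}=0$ and hence $t_3=t_2+1$. At step $t_2+1$ the first branch of (\ref{raising}) is triggered because $\ell_2=\xh_i(t_2)=m_i\ge M$, so $\xh_i(t_2+1)=\xt_i(t_2+1)$. Since $t_2>t_1$, Lemma \ref{lgenG} gives $\xt_i(t_2+1)=x_i$, so $i\in\cA(t_2+1)$ with $\xh_i(t_2+1)=x_i$. Lemma \ref{lArecur}, invoked with $t_2+1$ in place of its ``$t_2$'', yields $\xh_i(t)=x_i$ for every $t\ge t_2+1=t_3$, which covers $t>t_3$ a fortiori.

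For the main case $m_i<M$, set $k_0=\lceil(M-m_i)/\delta\rceil\ge 1$, so $t_3=t_2+1+k_0$. I would argue by contradiction: suppose $i\notin\cA(t)$ for every $t\in\{t_2+1,\ldots,t_3\}$. Then at each such step the second branch of (\ref{raising}) is used, so by (\ref{strg}) we have $\xh_i(t_2+j)=g(\xh_i(t_2+j-1))\ge\xh_i(t_2+j-1)+\delta$ for every $j=1,\ldots,k_0+1$. Iterating gives $\xh_i(t_2+k_0)\ge m_i+k_0\delta\ge M$. But then at step $t_2+k_0+1=t_3$ the condition $\ell_2=\xh_i(t_2+k_0)\ge M$ holds, forcing the first branch and hence $i\in\cA(t_3)$, contradicting the assumption. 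Therefore $i\in\cA(t_*)$ for some $t_*$ with $t_2<t_*\le t_3$, and Lemma \ref{lArecur} applied at $t_*$ gives $\xh_i(t)=x_i$ for all $t\ge t_*$, in particular for every $t>t_3$.

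The main obstacle here is bookkeeping rather than any deeper mathematical step: one must line up the ceiling-based count with the $\cE$-iterations carefully, make sure Lemma \ref{lArecur} is invoked with a time strictly greater than $t_1$ where $i$ actually sits in $\cA$, and coordinate the two halves of the dichotomy (``$i$ enters $\cA$ early'' versus ``$i$ stays in $\cE$ and is pushed past $M$'') so that neither case is double-counted. No new inequality beyond the progressive/raising properties of $f$ and $g$ is needed.
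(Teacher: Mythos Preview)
Your proof is correct and follows essentially the same strategy as the paper: use the $\delta$-growth from the second branch of (\ref{raising}) to force $\xh_i$ past $M$ in at most $\lceil (M-m_i)/\delta\rceil$ steps, which pushes $i$ into $\cA$, and then invoke Lemma \ref{lArecur}. Your explicit case split on $m_i\ge M$ versus $m_i<M$ and your observation that the outer operator in the definition of $t_3$ should read $\max$ rather than $\min$ are both sound; the paper's own proof is terser and glosses over these details, but the argument is the same.
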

\begin{proof}
	Suppose $ \xh_i(t_2)\neq x_i. $ Then from Lemma \ref{lArecur}, $ i\in \cE(t_2) $. Now suppose $i \in \cE(t)$ for all $t_2 \leq t \leq t'$. In this case from the second bullet of (\ref{raising}), $t' \leq t_3. $ From the first bullet of (\ref{strg}) this means $ i\in \cA(t_3+1). $ Then Lemma \ref{lArecur} proves the result.
\end{proof}

So if no neighbor of a non-source $ i $ carries an underestimate and  at least one of its true constraining nodes  has converged, then $ \xh_i(t) $  converges the moment $ i $ enters $ \cA(t), $ which it must in a time quantified in Lemma \ref{lconverge}. The same is true if the states of all neighbors of a source $ i $ have exceeded $ s_i $, albeit under a weaker condition. 
The next subsection proves a key property that facilitates this convergence: the eventual removal of underestimates noted in (a) at the start of this section.

\subsection{Disappearance of underestimates}\label{sad}
We first define the two time varying sets $\cU(t)$ and $\cR(t)$ mentioned in (d)  at the beginning of the section. 
\begin{definition}\label{dur}
	The set of nodes rooted to sources is $\cR(t+1) = \cS(t+1)\bigcup P(t+1)$ with $\cS(t+1)$ as in (\ref{st}) and $P(t+1)$ comprising those whose  constraining node at  $t+1$ is in $\cR(t)$. Further $\cR(t_0) = \cS(t_0)$. The  unrooted set  is  $\cU(t) = V\setminus \cS(t)$.
\end{definition}  
Evidently, $ \cU(t+1) \bigcap \cS(t+1)=\emptyset. $ 
As every node must have a  constraining node at every $ t $, and members of $ \cR(t+1) $ are either in $ \cS(t+1) $ or are constrained by members of $ \cR(t) $, each member of $ \cU(t+1) $ must be  constrained at time $ t+1 $ by one of $ \cU(t) $.  
Thus 
\begin{equation}\label{implies}
\cU(t) = \emptyset \implies \cU(t+1) = \emptyset.
\end{equation}
Sets  in definitions \ref{dAE} and  \ref{dur} are exemplified through GABF in Figure \ref{fig:sets}. In this case, $\cS(0) = \{A, E\}$ as $\xh_i(0) = 0 = s_i$ for $i = A$ or $E$. At $t = 1$, $\tilde{x}_B(1) = \xh_A(0) + e_{AB} = \xh_B(0) = 1$, as $D = 0$, node $B$ will take $A$ as the current constraining node and use the first bullet of (\ref{raising}) to update its estimate, leading to $B \in \cA(1) \cap \cR(1)$. Meanwhile, as $\tilde{x}_D(1) = \xh_E(0) + e_{DE} = 1 \neq \xh_4(0)$, $D = 0$ and $\xh_D(0) < M$, node $D$ will update its estimate using the second bullet of (\ref{raising}) and take itself as the current constraining node, then $D \in \cE(1)\cap\cU(1)$.

We will now show that underestimates in $ \cU(t) $ must eventually disappear. To this end define
\begin{equation}\label{minU}
\xh_{\min}(t) = \underset{j \in \cU(t)}{\min}~\{\xh_j(t) \}~ \mathrm{if}~ \cU(t) \neq \emptyset,
\end{equation}
Define:
\begin{equation}\label{xmax}
x_{\max} = \underset{k \in V}{\max}~ \{x_k\},
\end{equation}
and
\begin{equation}\label{tstar}
T = \bigg\lceil \frac{x_{\max} - \xh_{\min}(t_0)}{\min\{\sigma,\delta \}} \bigg\rceil.
\end{equation}
\begin{lemma}\label{luin}
	Consider (\ref{gengenG})-(\ref{raising}) under Assumption \ref{amain}, with  $\cU(t)$, $\xh_{\min}(t)$, $ x_{\max} $ and $ T $ defined in Definition \ref{dAE},  (\ref{minU}), (\ref{xmax}) and (\ref{tstar}), respectively. Then  (\ref{uin}) and (\ref{aft}) hold   while  $\cU(t) \neq \emptyset$:
	\begin{equation}\label{uin}
	\xh_i(t) \geq \xh_{\min}(t_0) + \min\{\sigma,\delta\}(t - t_0),~ \forall i \in \cU(t)
	\end{equation}
	\begin{equation}\label{aft}
	\xh_{i}(t) \geq x_{\max} \geq x_i,~ \forall i \in \cU(t)\mbox{ and }\forall t \geq t_0 + T.
	\end{equation}
\end{lemma}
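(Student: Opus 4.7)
The plan is to prove (\ref{uin}) by induction on $t \geq t_0$ and then read off (\ref{aft}) from the definition of $T$. The base case $t = t_0$ is immediate, since every $i \in \cU(t_0)$ satisfies $\xh_i(t_0) \geq \xh_{\min}(t_0)$ by (\ref{minU}).

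For the inductive step I would fix any $i \in \cU(t+1)$ and exploit the structural observation made just before (\ref{implies}): $i$'s constraining node at $t+1$ must lie in $\cU(t)$. The argument then splits on which branch of (\ref{raising}) governs $i$ at $t+1$. If $i \in \cA(t+1)$, then $\xh_i(t+1) = \xt_i(t+1)$, and the minimum in (\ref{genG}) cannot be achieved at $s_i$ (otherwise $\xh_i(t+1) = s_i$ would put $i \in \cS(t+1) \subset \cR(t+1)$, contradicting $i \in \cU(t+1)$). By Definition \ref{dcons} a minimizing neighbor $k \neq i$ is therefore the constraining node of $i$ at $t+1$ and hence belongs to $\cU(t)$; progressivity (\ref{sigma}) gives $\xh_i(t+1) \geq \xh_k(t) + \sigma$, and the inductive hypothesis applied to $k$ closes this case. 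If instead $i \in \cE(t+1)$, then Definition \ref{dcons} makes $i$ its own constraining node at $t+1$, forcing $i \in \cU(t)$; (\ref{strg}) then yields $\xh_i(t+1) = g(\xh_i(t)) \geq \xh_i(t) + \delta$, and the inductive hypothesis applied to $i$ itself closes this case. Taking $\min\{\sigma,\delta\}$ across the two cases gives (\ref{uin}) at time $t+1$.

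With (\ref{uin}) in hand, (\ref{aft}) is immediate: for any $t \geq t_0 + T$ with $\cU(t) \neq \emptyset$ and $i \in \cU(t)$, the choice of $T$ in (\ref{tstar}) gives $\xh_i(t) \geq \xh_{\min}(t_0) + \min\{\sigma,\delta\}\,T \geq x_{\max} \geq x_i$.

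The main obstacle, modest as it is, is verifying in the $\cA(t+1)$-branch that the minimum in (\ref{genG}) cannot collapse onto $s_i$; this is what disqualifies $i$ from being rooted at $t+1$ and is the only nontrivial use of Definition \ref{dur}. The remaining inputs --- progressivity, the $\delta$-rise in (\ref{strg}), and the inheritance (\ref{implies}) of unrootedness --- are supplied directly by the material preceding the lemma.
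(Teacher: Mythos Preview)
Your proposal is correct and follows essentially the same route as the paper's proof: induction on $t$, splitting on whether $i\in\cE(t+1)$ or $i\in\cA(t+1)$, using (\ref{strg}) in the former case and progressivity (\ref{sigma}) together with $i\notin\cS(t+1)$ in the latter, with the constraining node lying in $\cU(t)$ via the remark before (\ref{implies}). The only cosmetic difference is that the paper carries the induction through the minimizer $\xh_{\min}(t+1)$ rather than an arbitrary $i\in\cU(t+1)$, but the content is the same.
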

\begin{proof}
	Because of (\ref{implies}), $ \cU(t) $ is nonempty only on a single contiguous time interval commencing at $ t_0 $.
	Thus, from (\ref{xmax}) and (\ref{tstar}), (\ref{aft}) will hold if (\ref{uin}) holds.
	
	We prove (\ref{uin}) by induction in $ t\geq t_0. $ It clearly holds for $t = t_0$. Thus suppose it holds at some $ t\geq t_0. $    If $ \cU(t+1) $ is empty then it remains so for all future values. So assume $\cU(t+1)\neq \emptyset$  i.e.  $\cU(t)\neq \emptyset$.  Suppose $ i\in \cU(t+1) $ is such that
	 $\xh_i(t + 1) = \xh_{\min}(t + 1)$. From the remark after Definition \ref{dur}, $j$ the current constraining node of $i $ is in $\cU(t)$. Suppose $ i\in \cE(t+1) $ defined in Definition \ref{dAE}, then from Definition \ref{dcons}, $ j=i $.    The induction hypothesis and (\ref{strg}) yield:
	\begin{eqnarray}
	\xh_{i}(t + 1) & = & \xh_{\min}(t + 1) \nonumber \\
	& \geq & \xh_{j}(t) + \delta \geq \xh_{\min}(t)+\min\{\sigma,\delta\} \label{eq:ude} \\
	&\geq& \xh_{\min}(t_0) + \min\{\sigma,\delta\}(t + 1 - t_0). \nonumber
	\end{eqnarray}
	If $i \in \cA(t + 1)$, then, $ i\notin \cS(t+1) $,  i.e. $ \xh_i(t+1)\neq s_i. $ Thus,
	\begin{align*}
	\xh_{i}(t + 1)  = & \xh_{\min}(t + 1) = f(\xh_{j}(t),e_{ij}) \\
	 \geq & \xh_{j}(t) + \sigma 
	\geq \min\{\sigma,\delta\}(t + 1 - t_0). \nonumber
	\end{align*}
\end{proof}
We now show that after $ t_0+T $ \emph{all}  $ \xh_{i}(t) $ are overestimates.

\begin{lemma}\label{lover}
	Under the conditions of Lemma \ref{luin}, 
	\begin{equation}\label{all}
	\xh_i(t) \geq x_i, ~\forall~ i\in V, \mbox{ and } t\geq T+t_0.
	\end{equation}
\end{lemma}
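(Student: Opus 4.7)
The plan is to leverage the partition $V = \cR(t) \cup \cU(t)$ and dispatch the two sets separately. For any $t \geq t_0 + T$, Lemma \ref{luin} already gives $\xh_i(t) \geq x_{\max} \geq x_i$ for every $i \in \cU(t)$, so all that remains is to verify $\xh_i(t) \geq x_i$ for $i \in \cR(t)$, and I would prove this for every $t \geq t_0$ by induction on $t$.

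The base case $t = t_0$ is immediate, since $\cR(t_0) = \cS(t_0)$, so each $i \in \cR(t_0)$ has $\xh_i(t_0) = s_i$, and (\ref{stationary}) forces $x_i \leq s_i$. For the inductive step, take $i \in \cR(t+1)$. If $i \in \cS(t+1)$, then $\xh_i(t+1) = s_i \geq x_i$ as before. Otherwise $i \in P(t+1)$, so by Definition \ref{dur} its constraining node $k$ at $t+1$ lies in $\cR(t)$, and the induction hypothesis delivers $\xh_k(t) \geq x_k$. When $i \in \cA(t+1)$, Definition \ref{dcons} identifies $k \in \cN(i)$ as the minimizer in (\ref{genG}), so by monotonicity of $f$ and (\ref{stationary}),
\[
\xh_i(t+1) = f(\xh_k(t), e_{ik}) \geq f(x_k, e_{ik}) \geq \min_{l \in \cN(i)} f(x_l, e_{il}) \geq x_i.
\]
When $i \in \cE(t+1)$, Definition \ref{dcons} makes $i$ its own constraining node, forcing $i \in \cR(t)$; then (\ref{raising}) and (\ref{strg}) give $\xh_i(t+1) = g(\xh_i(t)) \geq \xh_i(t) + \delta \geq x_i + \delta > x_i$. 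Combining this with the $\cU(t)$ bound from Lemma \ref{luin} yields (\ref{all}) for every $t \geq t_0 + T$.

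The main obstacle I anticipate is keeping the $\cA(t)/\cE(t)$ split and the $\cS(t)/P(t)$ split consistent at the same time: the two cases of the raising function interact with the ABF-versus-extraordinary partition in a way that is easy to confuse. The crucial observation that keeps things clean is that whenever a node is updated via the second bullet of (\ref{raising}) it is its own constraining node (Definition \ref{dcons}), so membership in $P(t+1)$ forces it to have already been in $\cR(t)$. This matches the algorithm's structure of raising estimates monotonically by at least $\delta$ and allows the induction to close without any delicate bookkeeping involving $M$, $D$, or $\delta$.
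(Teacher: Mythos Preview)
Your proposal is correct and follows essentially the same approach as the paper: partition $V=\cR(t)\cup\cU(t)$, invoke Lemma~\ref{luin} on $\cU(t)$, and prove $\xh_i(t)\geq x_i$ for $i\in\cR(t)$ by induction on $t$, splitting the inductive step into the $\cS(t+1)$ case and the $P(t+1)$ case, and the latter further into $\cA(t+1)$ versus $\cE(t+1)$. The only cosmetic difference is that the paper restarts its induction at an intermediate time $t_4$ where $\cR(t_4)=\cS(t_4)$, whereas you induct directly from $t_0$; your version is slightly cleaner since the claim is vacuous whenever $\cR(t)=\emptyset$, so no special restart is needed.
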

\begin{proof}
	We will first show by induction that whenever $ \cR(t) $ given in Definition \ref{dur} is nonempty, $ \xh_i(t)\geq x_i $ for all $ i\in \cR(t). $ Then as $ \cU(t)=V\setminus \cR(t) $, the result will follow from Lemma \ref{luin}.  If $ \cR(t')\neq \emptyset  $, then there is a $t_0\leq  t_4 \leq t'$  such that $ \cR(t)\neq \emptyset $, for all $ t_4\leq t\leq t' $ and $ \cR(t_4)=\cS(t_4). $ Clearly by definition of $ \cS(t)  $, $ \xh_i(t_4)=s_i \geq x_i, $ for all $ i\in \cS(t_4)=\cR(t_4) $. If $ t'=t_4, $ then all elements of $ \cR(t') $ carry overestimates.
	
	If $ t'>t_4 $ then use induction on $ t_4\leq t\leq t'. $  Suppose $ x_i\leq \xh_i(t) $ for some $ t_4\leq t<t', $ and  all $ i\in \cR(t). $ Consider any $ i\in \cR(t+1) $. Then from Definition \ref{dur}, either $ i\in \cS(t+1) $, in which case the result holds,  or $ j  $ the current constraining node of $ i$ is in $\cR(t). $  Then  by the induction hypothesis, $ \xh_j(t)\geq x_j. $ As $ \xh_i(t+1)\neq s_i $,  and $ f(a,b) $  is increasing in $ a, $
if $ i\in \cA(t+1) $,
there follows:
\begin{align*}
\xh_i(t+1)&=f(\xh_j(t), e_{ij})\geq f(x_j, e_{ij})\geq x_i.
\end{align*}
If $ i\in\cE(t+1) $ then it is its own true constraining node and $i\in \cR(t).   $ Thus by the induction hypothesis, $ \xh_i(t)\geq x_i. $ Thus from (\ref{strg})
 $ \xh_i(t+1)\geq \xh_i(t)+\delta>x_i. $

\end{proof}

Thus we have  established (a)  described at the beginning of this section. In the next section we prove GUAS.

\subsection{Proof of convergence}\label{scomplete}

Define the smallest stationary value in $\cF_i$ as
\begin{equation}\label{ximin}
x_{i\min} = \underset{j \in \cF_i}{\min}~ \{x_j\}.
\end{equation}
From  Lemma \ref{lF}, we have $x_{0\min} = s_{\min}$. Define a sequence 
\begin{equation}\label{ti}
T_i = \max \bigg\{0,\bigg\lceil \frac{M - x_{i\min}}{\delta} \bigg\rceil\bigg\} + 2.
\end{equation}

We then have the main result of this section, proving the convergence of each $ \xh_i(t) $ to $ x_i. $   Specifically, we will show by induction that with $ T $ defined in (\ref{tstar}),   for  all $ i\in \{0,\cdots,\cD(\cG)-1\} $,  the elements  of $ \cF_0, \cdots, \cF_i$, defined in Definition \ref{dF}, converge by the time $ T+\sum_{j=0}^{i}T_j. $


\begin{theorem}\label{ttime}
	Consider (\ref{gengenG}) - (\ref{raising}) under Assumption \ref{amain}, with $T_i$ and $T$  defined in (\ref{ti}) and (\ref{tstar}), respectively. Then $ \forall $ $i \in V$, 
	\begin{equation}
	\xh_{i}(t) = x_i,~ \forall t > t_0 + T  + \sum_{i = 0}^{\cD(\cG)-1}T_i.
	\end{equation}
\end{theorem}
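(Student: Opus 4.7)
The plan is to prove Theorem \ref{ttime} by induction on the level $L \in \{0, 1, \ldots, \cD(\cG)-1\}$ in the hierarchy of sets $\cF_L$ given by Definition \ref{dF}, establishing the stronger statement that for every $i \in \bigcup_{j=0}^{L}\cF_j$, one has $\xh_i(t) = x_i$ for all $t > t_0 + T + \sum_{j=0}^{L} T_j$. Since every node lies in some $\cF_j$ with $j \leq \cD(\cG)-1$ by the definition of effective diameter, the theorem follows by taking $L = \cD(\cG)-1$.

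The starting point is Lemma \ref{lover}, which guarantees that for $t \geq t_0 + T$ no node carries an underestimate, i.e.\ $\xh_j(t) \geq x_j$ for every $j \in V$. This is exactly the hypothesis of Lemma \ref{lgenG} applied to the neighbors of $i$. For the base case $L=0$, pick any $i \in \cF_0 \subset \cS_\infty$ (inclusion from Lemma \ref{lFF}); since $i$ is its own true constraining node, Lemma \ref{lgenG}'s source-case hypothesis is met with $t_1 = t_0 + T$. Applying Lemma \ref{lconverge} with $m_i = \xh_i(t_1+1) \geq x_i \geq x_{0\min}$ yields $\xh_i(t) = x_i$ for all $t > t_0 + T + T_0$, where the ``$+2$'' in the definition of $T_0$ absorbs the constants coming from Lemma \ref{lconverge}.

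For the inductive step, suppose the claim holds at level $L$, and pick any $i \in \cF_{L+1}$. By Lemma \ref{lF}, $i$ has a true constraining node $k \in \cF_L$, and by the induction hypothesis $\xh_k(t) = x_k$ for all $t > t_0 + T + \sum_{j=0}^L T_j$. Combining this with Lemma \ref{lover}, the non-source case of Lemma \ref{lgenG} applies to $i$ with $t_1 = t_0 + T + \sum_{j=0}^L T_j$, so $\tilde{x}_i(t) = x_i$ for $t > t_1$. Invoking Lemma \ref{lconverge} with $m_i \geq x_i \geq x_{(L+1)\min}$ then yields convergence within $T_{L+1}$ further steps, giving $\xh_i(t) = x_i$ for $t > t_0 + T + \sum_{j=0}^{L+1} T_j$, which closes the induction.

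The main subtlety is bookkeeping: each invocation of Lemma \ref{lconverge} is parameterized by the current estimate $m_i$, whereas $T_i$ uses the uniform lower bound $x_{i\min}$, so I must note that the map $m \mapsto \lceil (M-m)/\delta\rceil$ is non-increasing and that Lemma \ref{lover} together with the inductive hypothesis guarantees $\xh_i \geq x_i \geq x_{i\min}$ at the moment we enter each level's convergence phase. The only other wrinkle is verifying that the inductive hypothesis at level $L$ (``for $t$ strictly exceeding $t_0 + T + \sum_{j=0}^L T_j$'') matches the ``$t \geq t_1$'' quantifier in Lemma \ref{lgenG} when we move to level $L+1$, which is handled by choosing $t_1$ one step after the inductive threshold; the two additive units built into each $T_i$ make this compatible with the ``$>$'' in the theorem's conclusion.
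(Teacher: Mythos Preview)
Your proposal is correct and follows essentially the same approach as the paper: induction on the level $L$ over the sets $\cF_L$, with the base case handled via $\cF_0\subset\cS_\infty$ (Lemma~\ref{lFF}) together with Lemmas~\ref{lover}, \ref{lgenG}, and \ref{lconverge}, and the inductive step via Lemma~\ref{lF} to obtain a true constraining node in $\cF_L$ before reapplying the same trio of lemmas. Your explicit treatment of the bookkeeping (monotonicity of $m\mapsto\lceil(M-m)/\delta\rceil$ to replace $m_i$ by $x_{L\min}$, and the reconciliation of the strict versus non-strict inequalities absorbed by the ``$+2$'' in $T_i$) is in fact more careful than the paper's own presentation.
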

\begin{proof}
	We will prove by induction that for every $ L\in \{0,\cdots,\cD(\cG)-1\}, $ 
	\begin{equation}\label{iduction}
	\xh_{i}(t) = x_i , ~\forall~ t\geq t_0+T+\sum_{j = 0}^{L}T_j \mbox{ and } i\in \bigcup_{j= 0}^{L}\cF_j
	\end{equation} 
	Then the result will follow as the  $ \cF_j $ partition $ V. $
	
	Consider $ i \in\cF_0$ and $ t>t_0+T+T_0. $ From Lemma \ref{lFF}, $ i\in \cS_{\infty}. $ As from Lemma \ref{lover}, $ \xh_k(t)\geq x_k  $, for all $ t>t_0+T $ and $ k\in V $, $ i $ satisfies the conditions of Lemma \ref{lgenG} and thus of Lemma \ref{lconverge}. Thus, from Lemma \ref{lconverge}, (\ref{iduction}) holds for $ L=0 $. 
	
	Suppose (\ref{iduction}) holds for some $ N\in \{0,\cdots, \cD(\cG)-2\} $. Consider $ i\in \cF_{N+1}. $ By Lemma \ref{lF}, $ i $ has a true constraining node   $k\in \cF_N. $ By the induction hypothesis $ \xh_k(t)=x_k $ for all $ t>t_0+T+\sum_{j = 0}^{N}T_j $, and $ \xh_l(t)\geq x_l $, for all $ l\in V $. Thus from Lemma \ref{lover}, this $ i $ satisfies the conditions of Lemma \ref{lgenG} and thus of Lemma \ref{lconverge}. Thus, from Lemma \ref{lconverge}, (\ref{iduction}) holds for $ L=N+1 $, completing the proof.
\end{proof}

In fact one can show that this theorem also holds with 
\begin{equation}\label{ts0}
T_0 = \max \bigg\{0,\bigg\lceil \frac{M - \min\{\delta + s_{\min}, x_{\max} \}}{\delta} \bigg\rceil\bigg\} + 2,
\end{equation}
as $ \xh_i(t_0 + T)\geq \min\{\delta + s_{\min}, x_{\max} \}$ for all $i \in \cF_0$.
The fact that the time elapsed between the initial time $ t_0 $ and the time to converge is independent of $ t_0 $ proves  GUAS.

\section{Robustness under perturbations}\label{robust}
In this section, we prove that  (\ref{gengenG}) is ultimately bounded under persistent perturbations in the $ e_{ij} $, albeit with some additional assumptions. 
In particular, the dead zone parameter $ D $ must exceed a value proportional to the magnitude of the perturbation. Otherwise, with probability one $ \xh_i(t) $ will persistently rise to $ M. $ This value is provided in this section. Proofs of this section are in the appendix.

 The first  additional assumption extends the monotonicity property to the second argument of $f(\cdot,\cdot)$. Given that this argument represents edge lengths in most applications, this is a reasonable assumption. As is also standard in most robustness analysis, we also impose a Lipschitz condition.
\begin{assumption}\label{sec}
The function $f(\cdot,\cdot)$ is monotonically increasing with respect to its second argument,  i.e. $f(a,b)$ obeys
\begin{equation}\label{monotonic1}
f(a,b_1) \geq f(a,b_2), ~\mbox{if}~ b_1\geq b_2.
\end{equation}
Further, there exist $ L_i>0 $, such that
\begin{equation}
|f(a,b_1) - f(a,b_2)| \leq L_1|b_1 - b_2|    \label{ul1}
\end{equation}
and
\begin{equation}\label{aa}
|f(a_1,b) - f(a_2,b)| \leq L_2|a_1 - a_2|
\end{equation}
\end{assumption}
The perturbations on $e_{ij}$ are modeled as,
\begin{equation}\label{vare}
e_{ij}(t) = e_{ij} + \epsilon_{ij}(t)
\end{equation}
with 
\begin{equation}\label{ep}
|\epsilon_{ij}(t)| \leq \epsilon < e_{\min},
\end{equation}
where $e_{\min}$ is defined in Assumption \ref{amain}. Notice that the perturbations need not be symmetric,  i.e. we permit
\begin{equation}\label{nonsy}
\epsilon_{ij}(t) \neq \epsilon_{ji}(t). 
\end{equation}
Such perturbations could reflect noise, localization error, or (if coherent) movement of devices. 
In this case, (\ref{genG}) becomes 
\begin{equation}\label{genGt}
\tilde{x}_i(t+1)=\min\left\lbrace \min_{k\in \cN(i)}\left\lbrace f\left (\xh_k(t), e_{ik}(t) \right ) \right\rbrace, s_i  \right\rbrace.
\end{equation}
We  define a \emph{shrunken graph}, for deriving    bounds on underestimates.  It corresponds to the smallest possible values of $e_{ij}$. 
\begin{definition}\label{shrunken}
Given a graph $\cG$, define $\cG^-$ as a shrunken version of $\cG$ such that, $\forall i\in V$ and $j \in \cN(i)$ in $\cG$, $e_{ij}$ becomes $e_{ij}^-$ in $\cG^-$: With $\epsilon$ defined in (\ref{ep})
\begin{equation}\label{shru}
e_{ij}^- = e_{ij} - \epsilon.
\end{equation}
Also consider (\ref{gengenG}) implemented on this shrunken graph,  i.e.
\begin{equation}\label{algX}
\hat{X}_i(t+1)= F( \tilde{X}_i(t+1), \hat{X}_i(t), v_i ), ~\hat{X}_i(0)\leq \xh_i(0).
\end{equation} 
with $\tilde{X}_i(t+1)$ obeying
\begin{equation}
\tilde{X}_i(t+1) = \min\left\lbrace \min_{k\in \cN(i)}\left\lbrace f\left (\hat{X}_k(t), e_{ik}^- \right ) \right\rbrace, s_i  \right\rbrace
\end{equation} 
As $\cG^-$ satisfies the same assumptions as $\cG$ and is perturbation free, we define $X = [X_1,\cdots,X_N]$ as the unique stationary point in $\cG^-$ to which (\ref{algX}) converges.
Further, $\cS_{\infty}^-$ and $\cD(\cG^-)$ denote the source set and the effective diameter of $\cG^-$.
\end{definition}
The unique stationary point in $\cG^-$ obeys
\begin{equation}
\label{recm}
X_i=\min\left\lbrace \min_{k\in \cN(i)} f\left (X_k, e_{ik}^- \right ) , s_i  \right\rbrace, \forall ~i\in V .
\end{equation}
Specifically, from (\ref{sourceset}) and (\ref{sta}), the source set in $\cG^-$ obeys
\begin{equation}\label{trsourcet}
\cS_{\infty}^- = \{i | X_i = s_i   \},
\end{equation}
and the stationary point obeys: 
\begin{equation}\label{stat}
X_i = \begin{cases}
s_i & i \in \cS_{\infty}^- \\
\underset{k \in \cN(i)}{\min}~ \{f(X_k,e_{ik}^-) \} & i \notin \cS_{\infty}^-.
\end{cases}
\end{equation}
Evidently, the following holds in $\cG^-$:
\begin{equation}\label{lem}
s_{\min} \leq X_i \leq s_i, ~ \forall i \in V.
\end{equation}
Define 
\begin{equation}\label{Xmax}
X_{\max} = \underset{k \in V}{\max}~ \{X_k\},
\end{equation}
and 
\begin{equation}\label{tm}
T^- = \bigg\lceil \frac{X_{\max} - \xh_{\min}(t_0)}{\min\{\delta,\sigma\}} \bigg\rceil
\end{equation}
The next lemma shows that the lower bound in $\cU(t)$ will exceed $X_{\max}$ after $t_0 + T^-$.
\begin{lemma}\label{sluin}
Consider (\ref{gengenG}), (\ref{raising}) and (\ref{genGt}), with $\cU(t)$, $\xh_{\min}(t)$, $X_{\max}$ and $T^-$ defined in Definition \ref{dur}, (\ref{minU}), (\ref{Xmax}) and (\ref{tm}), respectively. Then (\ref{suin}) and (\ref{exX}) hold while the set $\cU(t) \neq \emptyset$:
\begin{equation}\label{suin}
\xh_i(t) \geq \xh_{\min}(t_0) + \min\{\sigma,\delta\}(t - t_0),~ \forall i \in \cU(t)
\end{equation}
\begin{equation}\label{exX}
\xh_i(t) \geq X_{\max} \geq X_i,~\forall i \in \cU(t) \mbox{ and } \forall t \geq t_0 + T^-
\end{equation}
\end{lemma}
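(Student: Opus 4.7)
The plan is to follow the blueprint of Lemma~\ref{luin} essentially verbatim, with the perturbed update (\ref{genGt}) replacing (\ref{genG}); the shrunken graph $\cG^-$ enters only through the numerical values $X_{\max}$ and $T^-$. First, I would observe that (\ref{exX}) is an immediate consequence of (\ref{suin}) together with the definition of $T^-$ in (\ref{tm}): once $t \geq t_0 + T^-$, the right side of (\ref{suin}) dominates $X_{\max}$, and $X_{\max} \geq X_i$ by (\ref{Xmax}). So the task reduces to establishing (\ref{suin}).

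I would prove (\ref{suin}) by induction on $t \geq t_0$, under the hypothesis that $\cU(t)$ is nonempty (recall from (\ref{implies}) that once $\cU$ empties it stays empty, so the set on which we induct is a contiguous initial interval). The base case $t = t_0$ is immediate from (\ref{minU}). For the inductive step, choose $i \in \cU(t+1)$ attaining $\xh_i(t+1) = \xh_{\min}(t+1)$, and let $j$ be its current constraining node at $t+1$. The rooting argument preceding (\ref{implies}) is purely structural and still applies in the perturbed setting, so $j \in \cU(t)$ and hence $\xh_j(t) \geq \xh_{\min}(t)$. If $i \in \cE(t+1)$, then $j = i$ by Definition~\ref{dcons}, and (\ref{strg}) yields $\xh_i(t+1) \geq \xh_i(t) + \delta \geq \xh_{\min}(t) + \min\{\sigma,\delta\}$. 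If $i \in \cA(t+1)$, then $i \notin \cS(t+1)$, so the $s_i$-branch of (\ref{genGt}) is inactive and $\xh_i(t+1) = f(\xh_j(t), e_{ij}(t))$; the progressive property (\ref{sigma}), applied to the (still finite and positive) perturbed edge weight, gives $\xh_i(t+1) \geq \xh_j(t) + \sigma \geq \xh_{\min}(t) + \min\{\sigma,\delta\}$. Chaining with the induction hypothesis delivers (\ref{suin}).

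The main subtlety is to verify that the progressive property and the rooting structure both survive the switch from fixed edges $e_{ij}$ to the time-varying $e_{ij}(t)$. Because (\ref{ep}) guarantees $e_{ij}(t) \geq e_{\min} - \epsilon > 0$ and $f$ is finite at finite arguments, every perturbed weight remains a valid second argument for $f$ on which (\ref{sigma}) applies with the same $\sigma$; this uniformity is exactly how (\ref{sigma}) was originally stated. The rooting partition of Definition~\ref{dur} depends only on the identity of the minimizing neighbor in (\ref{genGt}) and on $\cS(\cdot)$, not on the numerical edge values, so the identities $\cU(t+1) \cap \cS(t+1) = \emptyset$ and ``the constraining node of an element of $\cU(t+1)$ is in $\cU(t)$'' transfer unchanged. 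With these two points verified, the induction proceeds exactly as in the unperturbed Lemma~\ref{luin}, and no new machinery is needed.
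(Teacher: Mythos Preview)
Your proposal is correct and follows essentially the same approach as the paper: an induction on $t$ tracking the minimum over $\cU(t)$, splitting on whether the minimizing node lies in $\cE(t+1)$ or $\cA(t+1)$, and deducing (\ref{exX}) from (\ref{suin}) via the definition of $T^-$. Your explicit verification that the progressive property (\ref{sigma}) and the rooting structure of Definition~\ref{dur} survive the passage to time-varying $e_{ij}(t)$ is more careful than the paper, which simply invokes these without comment.
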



We now turn to $\cR(t)$ and prove that under perturbations all estimates in $\cR(t)$ are lower bounded by their corresponding stationary values in $\cG^-$. 
\begin{lemma}\label{tover}
Consider (\ref{gengenG}), (\ref{raising}) and (\ref{genGt}), with $\cA(t)$, $\cE(t)$, $\cR(t)$ and $X_i$ as in definitions \ref{dAE}, \ref{dur} and \ref{shrunken}, respectively. There holds:
\begin{equation}\label{t{overestimate}}
\xh_i(t) \geq X_i,~ \forall i \in \cR(t).
\end{equation}
\end{lemma}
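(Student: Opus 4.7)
The plan is to adapt the proof of Lemma~\ref{lover} to the perturbed setting, replacing the unperturbed stationary values $x_i$ with the shrunken-graph stationary values $X_i$. As in that lemma, the set $\cR(t)$ may be empty over some initial stretch, so I would first argue that on any contiguous interval $[t_4, t']$ with $\cR(t) \neq \emptyset$, one has $\cR(t_4) = \cS(t_4)$. I would then establish the claim by induction on $t \in [t_4, t']$.

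For the base case, every $i \in \cR(t_4) = \cS(t_4)$ satisfies $\xh_i(t_4) = s_i \geq X_i$ by~(\ref{lem}). For the inductive step, assume the bound at time $t$ and pick any $i \in \cR(t+1)$. If $i \in \cS(t+1)$ the bound again follows from~(\ref{lem}). Otherwise $i$ has a constraining node $j \in \cR(t)$. When $i \in \cA(t+1)$, we must have $j \neq i$ (since $i$ being its own constraining node in $\cA(t+1)$ would force $\xh_i(t+1) = s_i$ and hence $i \in \cS(t+1)$), so by the definition of the constraining node, $\xh_i(t+1) = f(\xh_j(t), e_{ij}(t))$. Combining the induction hypothesis $\xh_j(t) \geq X_j$, the bound $e_{ij}(t) \geq e_{ij}^-$ coming from~(\ref{vare})--(\ref{ep}), the monotonicity~(\ref{monotonic1}) in the second argument together with the monotonicity~(\ref{monotonic}) in the first, and finally the fixed-point relation~(\ref{stat}), I obtain
\begin{equation*}
\xh_i(t+1) \;\geq\; f(X_j, e_{ij}^-) \;\geq\; \min_{k \in \cN(i)} f(X_k, e_{ik}^-) \;\geq\; X_i.
\end{equation*}
When $i \in \cE(t+1)$, $i$ is its own constraining node, so $i \in \cR(t)$ and by induction $\xh_i(t) \geq X_i$. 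Using the second bullet of~(\ref{raising}) and~(\ref{strg}), $\xh_i(t+1) = g(\xh_i(t)) \geq \xh_i(t) + \delta > X_i$.

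The main obstacle, and the reason Assumption~\ref{sec} becomes essential here, is that the comparison to $X_i$ only works if one can push \emph{both} arguments of $f$ downward simultaneously---replacing $\xh_j(t)$ by $X_j$ \emph{and} $e_{ij}(t)$ by $e_{ij}^-$---without increasing the value. The progressive and first-argument monotonicity properties used in the unperturbed case do not suffice; one also needs monotonicity in the second argument, exactly what~(\ref{monotonic1}) supplies. Once this double monotonicity is available, the rest of the induction is a routine perturbation of the argument in Lemma~\ref{lover}; note that the Lipschitz bounds~(\ref{ul1})--(\ref{aa}) are not yet needed at this stage, as they will only enter when we quantify the \emph{upper} bound on $\xh_i(t)$ in subsequent arguments leading to the ultimate bound.
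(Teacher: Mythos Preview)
Your proof is correct and follows essentially the same induction as the paper's own argument: establish the base case on $\cS(t_4)$ via~(\ref{lem}), then in the inductive step split according to whether $i\in\cS(t+1)$, $i\in\cA(t+1)$, or $i\in\cE(t+1)$, using monotonicity of $f$ in both arguments together with $e_{ij}(t)\geq e_{ij}^-$ and the fixed-point relation for $X_i$. Your added remarks---that only the monotonicity part of Assumption~\ref{sec} is needed here, not the Lipschitz bounds, and the explicit justification that $j\neq i$ when $i\in\cA(t+1)\setminus\cS(t+1)$---are accurate refinements of the same approach.
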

Consequently, with Lemma \ref{sluin} and Lemma \ref{tover}, there holds:
\begin{equation}\label{tall}
\xh_{i}(t) \geq X_i,~ \forall i \in V,~ \forall t \geq t_0 + T^-.
\end{equation}
By quantifying the relation between the stationary point in $\cG^-$ and that in $\cG$ in the next lemma, we will show that after $t_0 + T^-$ all estimates are lower bounded.

To this end we define the following function:
\begin{equation}\label{W}
W(L_2,D) = \sum_{i = 0}^{D-1}L_2^i = \begin{cases}
\frac{L_2^D - 1}{L_2 - 1} & L_2 \neq 1 \\
D & L_2 = 1
\end{cases},
\end{equation}
and the summation is zero if the lower limit exceeds the upper.
\begin{lemma}\label{tlbound}
Under Assumption \ref{sec}, with $\epsilon$, $W(\cdot)$ and $\cD(\cG^-)$ defined in (\ref{ep}), (\ref{W}) and Definition \ref{shrunken}, respectively. Then for all $i \in V$ and $t \geq T^- + t_0$, there holds:
\begin{equation}\label{resta}
x_i \leq X_i + W(L_2,\cD(\cG^-)-1)L_1\epsilon.
\end{equation}
\end{lemma}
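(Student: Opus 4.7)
The plan is to prove the bound by induction on the level sets $\cF_0(\cG^-), \ldots, \cF_{\cD(\cG^-)-1}(\cG^-)$ supplied by Lemma \ref{lF} applied to the shrunken graph $\cG^-$ and its unique stationary point $X$. The observation that enables this is that $\cG$ and $\cG^-$ share the same vertex and edge sets (differing only in weights), so any true constraining node of $i$ in $\cG^-$ is automatically a neighbor of $i$ in $\cG$, which is exactly what I need to upper-bound $x_i$ via the stationary recursion (\ref{stationary}).

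For the base case $i \in \cF_0(\cG^-)$, Lemma \ref{lFF} gives $i \in \cS_\infty^-$, so $X_i = s_i$, while (\ref{stationary}) forces $x_i \leq s_i = X_i$; since $W(L_2,0)=0$ by the empty-sum convention in (\ref{W}), the claim holds. For the inductive step, assume $x_j \leq X_j + W(L_2,k)L_1\epsilon$ for all $j \in \bigcup_{m=0}^{k}\cF_m(\cG^-)$. Given $i \in \cF_{k+1}(\cG^-)$, Lemma \ref{lF} supplies a true constraining node $j \in \cF_k(\cG^-)$ satisfying $X_i = f(X_j, e_{ij}^-) = f(X_j, e_{ij} - \epsilon)$. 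Combining (\ref{stationary}), the induction hypothesis, monotonicity of $f$ in its first argument (\ref{monotonic}), and the Lipschitz bounds (\ref{aa}) and (\ref{ul1}), I chain
\begin{align*}
x_i &\leq f(x_j, e_{ij}) \leq f\bigl(X_j + W(L_2,k)L_1\epsilon,\, e_{ij}\bigr) \\
&\leq f(X_j, e_{ij}) + L_2 W(L_2,k) L_1\epsilon \\
&\leq f(X_j, e_{ij}^-) + L_1\epsilon + L_2 W(L_2,k) L_1\epsilon \\
&= X_i + L_1\epsilon\bigl(1 + L_2 W(L_2,k)\bigr) = X_i + L_1\epsilon\, W(L_2, k+1),
\end{align*}
where the final equality uses the elementary recurrence $W(L_2,k+1) = 1 + L_2 W(L_2,k)$ read off definition (\ref{W}). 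Because $W(L_2,\cdot)$ is non-decreasing and $k+1 \leq \cD(\cG^-)-1$, the induction closes and gives the desired bound uniformly on $V$.

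I expect the main obstacle to be bookkeeping rather than conceptual difficulty: one has to perform the split $e_{ij} = e_{ij}^- + \epsilon$ so that a single $L_1\epsilon$ enters per inductive level while the error inherited from the hypothesis compounds through exactly one factor of $L_2$, and then recognize this accumulation as the partial geometric sum $W(L_2,\cdot)$. A secondary point of care is that the chain of true constraining nodes is defined in $\cG^-$, not in $\cG$, so the induction is driven by the level structure of $X$; the stationary recursion (\ref{stationary}) for $x_i$ is invoked only to bound $x_i$ by $f(x_j, e_{ij})$ for one particular neighbor $j$ at each step, which is valid since the minimum in (\ref{stationary}) is at most any single term of it.
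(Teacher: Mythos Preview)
Your proof is correct and follows essentially the same approach as the paper's. The paper performs induction along a chain $n_0\to n_1\to\cdots\to n_T$ in $\cG^-$ with $n_0\in\cS_\infty^-$ and each $n_i$ a true constraining node of $n_{i+1}$, deriving exactly the same chain of inequalities (monotonicity, then (\ref{aa}), then the split $e_{ij}=e_{ij}^-+\epsilon$ with (\ref{ul1}), then the recurrence for $W$); your organization via the level sets $\cF_k(\cG^-)$ is just a repackaging of the same induction.
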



With $\cF_i$ defined in Definition \ref{dF}, define $X_{i\min}$ as 
\begin{equation}\label{Ximin}
X_{i\min} = \underset{j \in \cF_i}{\min}~ \{X_{j}\}
\end{equation}
Note that $\cS_{\min}$ is a subset of $\cS_{\infty}^-$ as well as $\cF_0$, thus $X_{0\min} = s_{\min}$. Define a sequence 
\begin{equation}\label{tmi}
T_i^- = \max\left \{0,\bigg\lceil \frac{M - X_{i\min}}{\delta}\bigg\rceil   \right \} + 2.
\end{equation}
Then we have the following lemma that recognizes that
to behave acceptable under perturbations the dead zone $ D $ in (\ref{raising}) must be sufficiently large.
\begin{lemma}\label{tpre}
Consider (\ref{gengenG}), (\ref{raising}) and (\ref{genGt}) under Assumption \ref{sec}, with $\epsilon$,  $W(\cdot)$ defined in (\ref{ep}), (\ref{W}), respectively. Suppose $D$ in (\ref{raising}) obeys
\begin{equation}\label{D}
D \geq (W(L_2,\cD(\cG^-) - 1) + W(L_2,\cD(\cG) - 1))L_1\epsilon
\end{equation}
and at a time $ t' \geq t_0 + T^-$ defined in (\ref{tm}), for some  $L \in \{0,1,\cdots,\cD(\cG)-2\}$
\begin{equation}\label{pre}
\xh_{i}(t) \leq x_i + W(L_2,L)L_1\epsilon,~ \forall i \in \cF_L
, ~ \forall t \geq  t'.
\end{equation}
Then with $T_i^-$ define in (\ref{tmi}), there holds:
\[
\xh_{i}(t) \leq x_i + W(L_2,L + 1)L_1\epsilon, ~\forall ~i \in \cF_{L+1}, ~t \geq  t' + T_{L+1}^-.
\]
\end{lemma}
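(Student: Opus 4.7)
The plan is to establish the bound in three movements: propagate the bound from $\cF_L$ to the auxiliary variable $\xt_i$ for $i \in \cF_{L+1}$; show that $i$ enters the ABF set $\cA$ within $T_{L+1}^-$ iterations; and lock $i$ inside $\cA$ thereafter so that $\xh_i$ inherits the bound from $\xt_i$. For the first step, given $i \in \cF_{L+1}$, Lemma \ref{lF} provides a true constraining node $k \in \cF_L$. I would use the hypothesis $\xh_k(t) \leq x_k + W(L_2,L)L_1\epsilon$ together with the monotonicity of $f$ in both arguments, the bound $e_{ik}(t) \leq e_{ik} + \epsilon$, and the Lipschitz inequalities (\ref{ul1})--(\ref{aa}) to peel off two successive Lipschitz errors:
\[
\xt_i(t+1) \leq f(\xh_k(t), e_{ik}(t)) \leq f(x_k,e_{ik}) + (1 + L_2 W(L_2,L))L_1\epsilon = x_i + W(L_2,L+1)L_1\epsilon,
\]
using $f(x_k,e_{ik}) = x_i$ when $i \notin \cS_{\infty}$ (the case $i \in \cS_{\infty}$ being immediate from $\xt_i(t+1) \leq s_i = x_i$) and the recursion $1 + L_2 W(L_2,L) = W(L_2,L+1)$.

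For the second step, (\ref{strg}) forces $\xh_i$ to grow by at least $\delta$ whenever $i \in \cE$, while (\ref{tall}) guarantees $\xh_i(t) \geq X_i \geq X_{L+1,\min}$ throughout. A run of consecutive $\cE$ iterations therefore cannot last more than $\max\{0,\lceil (M - X_{L+1,\min})/\delta\rceil\} + 1$ steps before $\xh_i$ crosses $M$ and the first bullet of (\ref{raising}) fires. This places $i$ in $\cA$ at some $\tau \leq t' + T_{L+1}^- - 1$, and at that moment $\xh_i(\tau) = \xt_i(\tau) \leq x_i + W(L_2,L+1)L_1\epsilon$ by the first step.

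The third step is the crux. I would construct a lower-bound companion to the first step, namely $\xt_i(\tau+1) \geq X_i$, by applying the monotonicity of $f$ to $\xh_j(\tau) \geq X_j$ and $e_{ij}(\tau) \geq e_{ij}^-$ and comparing the result to (\ref{recm}); combined with the bound $x_i - X_i \leq W(L_2,\cD(\cG^-)-1)L_1\epsilon$ of Lemma \ref{tlbound}, both $\xh_i(\tau)$ and $\xt_i(\tau+1)$ lie in the interval $[X_i,\, x_i + W(L_2,L+1)L_1\epsilon]$, so
\[
|\xh_i(\tau) - \xt_i(\tau+1)| \leq (W(L_2,\cD(\cG^-)-1) + W(L_2,L+1))L_1\epsilon \leq D
\]
by (\ref{D}) and $L+1 \leq \cD(\cG)-1$. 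Hence $i$ remains in $\cA$ and the bound propagates by induction on $\tau$. The main obstacle is exactly this step: the condition (\ref{D}) must simultaneously absorb the \emph{downstream} slack of Lemma \ref{tlbound} and the \emph{upstream} slack $W(L_2,L+1)L_1\epsilon$ being built up by the outer induction on $L$; with any smaller $D$, a node that has achieved the bound could slip between $\cA$ and $\cE$ and lose it.
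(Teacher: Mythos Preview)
Your proposal is correct and follows essentially the same approach as the paper's proof: both arguments pick a true constraining node in $\cF_L$ via Lemma~\ref{lF}, propagate the bound through $f$ using the two Lipschitz constants to obtain $\xt_i \leq x_i + W(L_2,L+1)L_1\epsilon$, use the $\delta$-increment of (\ref{strg}) together with the lower bound (\ref{tall}) to force entry into $\cA$ within $T_{L+1}^-$ steps, and then trap both $\xh_i(\tau)$ and $\xt_i(\tau+1)$ in $[X_i,\, x_i + W(L_2,L+1)L_1\epsilon]$ so that Lemma~\ref{tlbound} and (\ref{D}) keep $i$ in $\cA$ by induction. Your explicit separation of the $i\in\cS_\infty$ case and your remark about why (\ref{D}) must absorb both slacks are welcome clarifications, but the logical skeleton is identical.
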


The next theorem proves that the algorithm is ultimately bounded under bounded persistent perturbations and provides an upper bound on the time to attain the ultimate bound.
\begin{theorem}\label{bound}
Under the conditions of Lemma \ref{tpre},
 for all $i \in V$ and $t \geq t_0 + T^- + \sum_{i = 0}^{\cD(\cG)-1}T^-_i$,
\[|\xh_{i}(t) - x_i| \leq \epsilon L_1 \max\left \{W(L_2,\cD(\cG)-1), W(L_2,\cD(\cG^-)-1)\right \}.\]
\end{theorem}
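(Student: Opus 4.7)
The proof combines a lower bound obtained from Lemmas \ref{sluin}, \ref{tover}, and \ref{tlbound} with an upper bound established by iterating Lemma \ref{tpre} across the level sets $\cF_L$.

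For the lower bound, (\ref{tall}) establishes $\xh_i(t) \geq X_i$ for every $i \in V$ and $t \geq t_0 + T^-$, and rearranging the conclusion (\ref{resta}) of Lemma \ref{tlbound} gives $X_i \geq x_i - W(L_2, \cD(\cG^-)-1) L_1 \epsilon$. Chaining the two inequalities yields
\begin{equation*}
\xh_i(t) - x_i \geq -W(L_2, \cD(\cG^-)-1) L_1 \epsilon
\end{equation*}
on this time window.

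For the upper bound, the plan is induction on $L \in \{0, 1, \ldots, \cD(\cG)-1\}$ to show
\begin{equation*}
\xh_i(t) \leq x_i + W(L_2, L) L_1 \epsilon, \ \forall i \in \cF_L,\ t \geq t_0 + T^- + \sum_{j=0}^{L} T_j^-.
\end{equation*}
The inductive step from $L$ to $L+1$ is exactly the conclusion of Lemma \ref{tpre}, since the dead-zone hypothesis (\ref{D}) holds by assumption and the induction hypothesis supplies (\ref{pre}). The base case $L = 0$ requires a separate argument. Since $\cF_0 \subset \cS_\infty$, one has $x_i = s_i$ and $W(L_2, 0) = 0$, so the claim reduces to $\xh_i(t) \leq s_i$ after $t_0 + T^- + T_0^-$. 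The mechanism mirrors that of Lemma \ref{tpre}: $\tilde{x}_i(t+1) \leq s_i$ always, so any Case 1 update in (\ref{raising}) respects the bound, while each Case 2 update adds at least $\delta$ to $\xh_i$; thus within $T_0^-$ rounds either $\xh_i$ has already settled below $s_i$ or has risen past $M$, and once $\xh_i \geq M$ Case 1 is forced and gives $\xh_i(t+1) = \tilde{x}_i(t+1) \leq s_i$. From that point on, the dead-zone size (\ref{D}) is wide enough to prevent further Case 2 excursions under the persistent perturbation.

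Taking $L = \cD(\cG)-1$ and invoking Lemma \ref{lF} (whose level sets partition $V$) delivers $\xh_i(t) \leq x_i + W(L_2, \cD(\cG)-1) L_1 \epsilon$ for every $i \in V$ once $t \geq t_0 + T^- + \sum_{j=0}^{\cD(\cG)-1} T_j^-$. Combining the two one-sided bounds yields the claimed two-sided ultimate bound, with the maximum of $W(L_2,\cD(\cG)-1)$ and $W(L_2,\cD(\cG^-)-1)$ arising because the upper and lower sides are controlled by the effective diameters of $\cG$ and $\cG^-$ respectively. The main obstacle I foresee is precisely the base case: rigorously ruling out the possibility that a $\cF_0$ node, having momentarily dipped below $s_i$, is kicked above $s_i$ again by a Case 2 event under perturbation. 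Once that containment is in place, the induction becomes an almost mechanical chaining of Lemma \ref{tpre}.
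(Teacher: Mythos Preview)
Your proposal is correct and follows essentially the same route as the paper: the lower bound via (\ref{tall}) and Lemma \ref{tlbound}, the upper bound by induction on the level sets $\cF_L$ using Lemma \ref{tpre}, with a separate base case for $\cF_0$. The obstacle you flag is resolved in the paper exactly as your sketch anticipates: once $i\in\cF_0$ first enters $\cA(t)$ one has $\xh_i(t)=\xt_i(t)\leq s_i=x_i$, and since (\ref{tall}) and the monotonicity of $f$ force $\xt_i(t+1)\geq X_i$, both $\xh_i(t)$ and $\xt_i(t+1)$ lie in the interval $[X_i,x_i]$ of width at most $W(L_2,\cD(\cG^-)-1)L_1\epsilon\leq D$, so the dead-zone clause of (\ref{raising}) keeps $i$ in $\cA$ thereafter.
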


This is a classical ultimate bound  with the bound proportional to $ \epsilon $, the magnitude of the disturbance.
Define
\begin{equation}\label{t0}
T_0^- = \max \bigg\{0,\bigg\lceil \frac{M - \min\{\delta + s_{\min}, X_{\max} \}}{\delta} \bigg\rceil\bigg\} + 2.
\end{equation}
One can show that Theorem \ref{bound} holds for a tighter time bound if one uses $T_0^-$ defined in (\ref{t0}). This is so as one can prove that $\xh_i(t_0 + T^-) \geq \min\{\delta + s_{\min}, X_{\max} \}$ for all $i \in \cF_0$. 

\section{Design choices and discussion}\label{sdesign}
Theorem \ref{bound}  verifies the intuitively clear requirement that the dead zone $ D $ should grow proportionally to the disturbance bound $ \epsilon. $ 
However, as this is a worst-case analysis, it masks the full effects of parameters $ M $, $ \delta $, and $ D. $ 
Looking beyond worst-case analysis, however, we can find that choosing these parameters involves tradeoffs between the convergence speed of underestimates and overestimates.

The convergence of underestimates is upper bounded by $ T $ in (\ref{tstar}), which is in turn  determined by (\ref{uin}), and thus conservatively by the smaller of $ \sigma $ and $ \delta $.
In practice, if $ \sigma  $ is small and the first bullet of raising is invoked too often then underestimates rise slowly,  i.e. the rising value problem will persist.
If the second bullet of (\ref{raising}) is invoked at most times and $ \delta \gg \sigma $   then underestimates decline fast. 
Large  $ D $ or  small $ M $  makes this less likely and  slows convergence, while a large $ \delta \geq M $  speeds convergence by reducing $ T $ and $ T_i. $

For the convergence of overestimates, $ T_i $ gives the worst case time to invoke the first case of (\ref{raising}), whereupon all elements in $ \cF_i $ converge forthwith. 
The worst case analysis  quantifies $ T_i $  by how long it takes for $ \xh_i(t) $ to exceed $ M $ and assumes that the second clause of (\ref{raising}) is invoked until this happens. 
With a large $ D $, however, this time shortens as the first bullet is likely to be invoked more quickly.  

In most cases, the need to alleviate the rising value problem is more compelling as overestimates in algorithms like plain ABF converge in at most $ \cD(\cG)-1 $ steps. 
Accordingly, the desirability of a smaller dead zone $ D $ competes with the requirement of resilience to persistent perturbations as quantified by  (\ref{D}). This of course is common to most dynamical systems where faster convergence generally comes at the price of reduced resilience. We note, however, the following appealing fact: \emph{both the ultimate bound and the required $ D $ are determined exclusively by the perturbation magnitude $ \epsilon $ and the effective diameters of the original and shrunken graph.}

Complementarily, note that in the special case of the algorithm in \cite{mo2018robust}, we effectively have $ M=0 $ and $ D=\infty$.  In this case the second bullet of (\ref{raising}) is never invoked.
Accordingly, a small $ \sigma $ leads to large $ T $ and $ T^- $  and the rising value problem. 
In particular, the algorithm remains GUAS with the same ultimate bound as (\ref{D}) is automatically satisfied. Overestimates however, converge quickly as  $ T_i=T_i^-=2.  $

\section{Simulations}\label{ssim}

In this section, we empirically confirm the results presented in the prior sections through simulations.
We first investigate the effect of parameters in the general spreading block configured as GABF, then compare with the performance of ABF in the presence of persistent perturbations. 
Finally, we illustrate the applicability of the general spreading block to more complex cases with an example of a non-Euclidean distance metric.
Except where otherwise noted, all simulations use 500 nodes, randomly distributed in a 4 km $\times$ 1 km area, and communicating via broadcast within a 0.25 km radius. 
One node is designated as a source and initial distance estimates of all nodes follow a uniform distribution between 0 and $\sqrt{17}$ km (i.e., the longest possible distance for the simulated space).

\subsection{Effect of parameters}

We begin with an empirical investigation of the design choices and parameter effects discussed in Section~\ref{sdesign}, using GABF, as defined in Section \ref{sgalg}, as an example to demonstrate the impact of these parameters  on  convergence speed. 

Progress toward convergence may be measured using the greatest overestimate $\Delta^+(t)$ and least underestimate $\Delta^-(t)$:
\begin{eqnarray}
\Delta^+(t) & = & \max\left [0,\max_{i} \left\{\Delta_i(t) \right\}\right ]\label{D+}\\
\Delta^-(t) & = & \max\left [0,-\min_{i} \left\{ \Delta_i(t) \right\} \right ].\label{D-}
\end{eqnarray}
where $\Delta_i(t)=\hat{d}_i(t)-d_i$ the distance estimation error of node $i$. Then $\Delta^+(t) = \Delta^-(t) = 0$ indicates that all distance estimates converge to their true distances at time $t$.


\begin{figure}
	\centering
	\subfigure[$\Delta^+(t)$]{
		\includegraphics[width = 1\columnwidth]{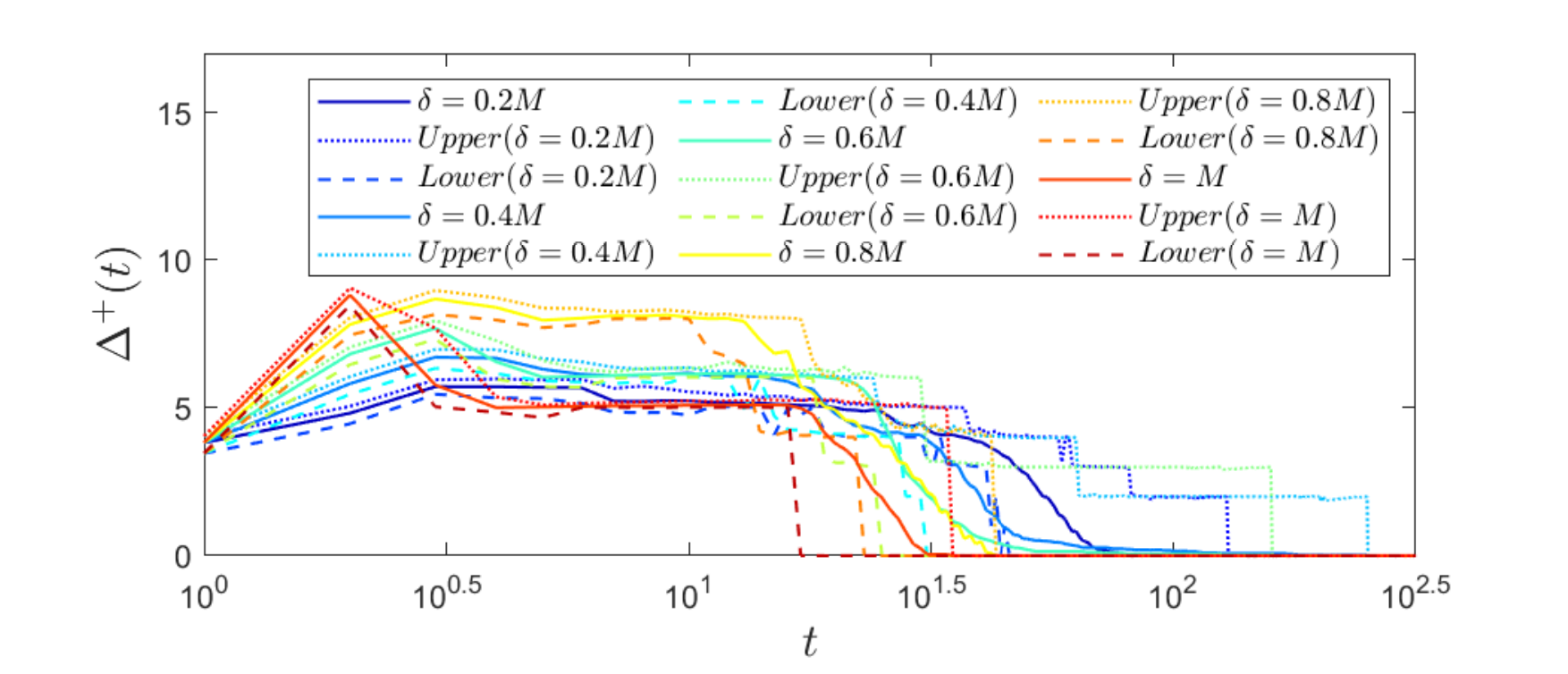}}
	\vfill
	\subfigure[$\Delta^-(t)$]{  
		\includegraphics[width = 1\columnwidth]{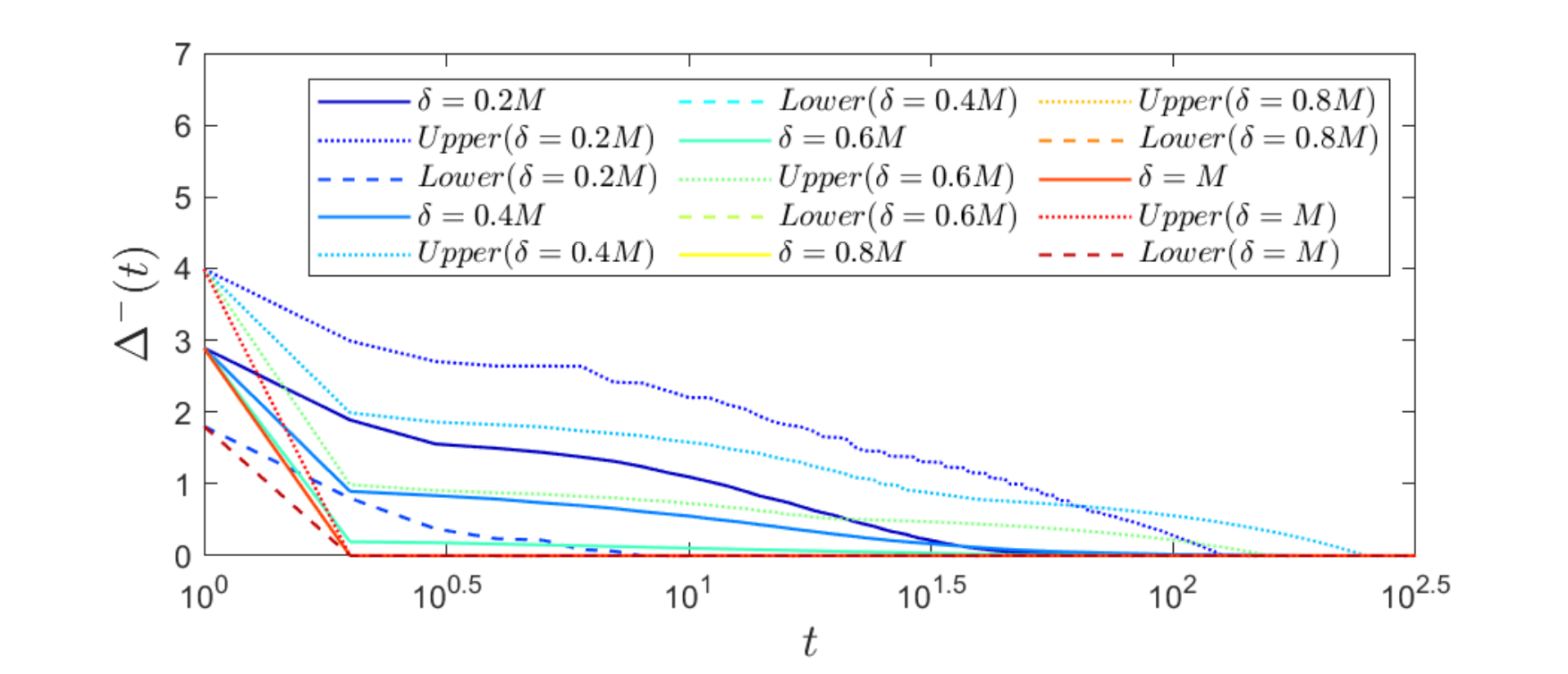}}
	\caption{Convergence time for (a) the greatest overestimate $\Delta^+(t)$ and (b) the least underestimate $\Delta^-(t)$ without perturbations, with $M = 5$, $D = 0$ and $\delta$ is varying from $0.2M$ to $M$ in steps of $0.2M$. The solid line represents the average value of 100 trials, the dotted and dashed lines represent upper and lower envelopes, respectively. 
	In (b) solid and dotted lines of $\delta = 0.8M$ and $\delta = M$ overlap, and dashed lines of $\delta = 0.4M$, $\delta = 0.6M$, $\delta = 0.8M$ and $\delta = M$ overlap.}
	\label{fig:delta}
\end{figure} 

We start with $\delta$, which controls how quickly $M$ is reached.
Figure~\ref{fig:delta} shows the results of 100 runs using $M = 5$, $D = 0$, and $\delta$ varying from $0.2M$ to $M$ in steps of $0.2M$.
The average $\cD(\cG)$ is 18.8.
With a fixed $M$, both $\Delta^+(t)$ and $\Delta^-(t)$ converge slower with a smaller $\delta$, since a smaller $\delta$ means estimates exceed $M$ later.
The corresponding average convergence times are 61.2, 54.1, 38.1, 32.6 and 24.1.
 

\begin{figure}
	\centering
	\subfigure[$\Delta^+(t)$]{
		\includegraphics[width = 1\columnwidth]{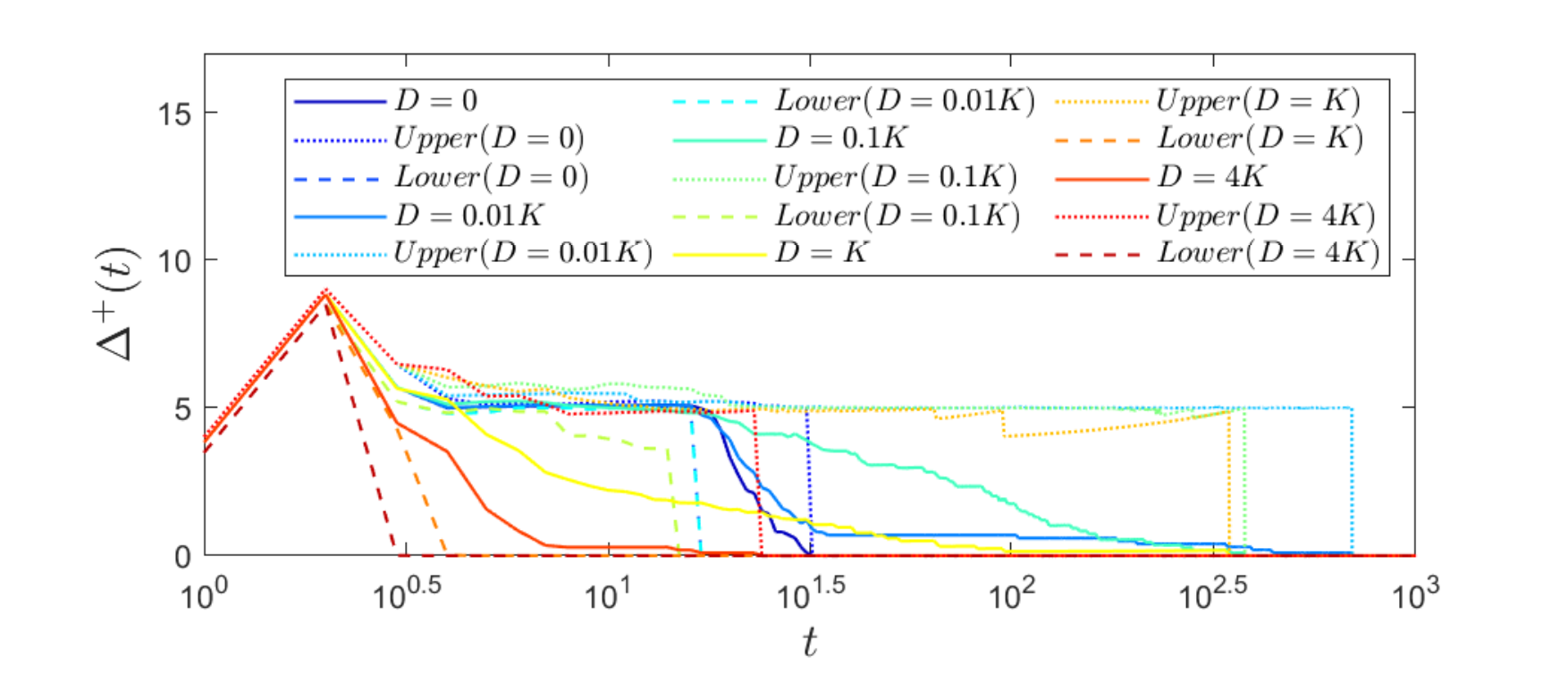}}
	\vfill
	\subfigure[$\Delta^-(t)$]{  
		\includegraphics[width = 1\columnwidth]{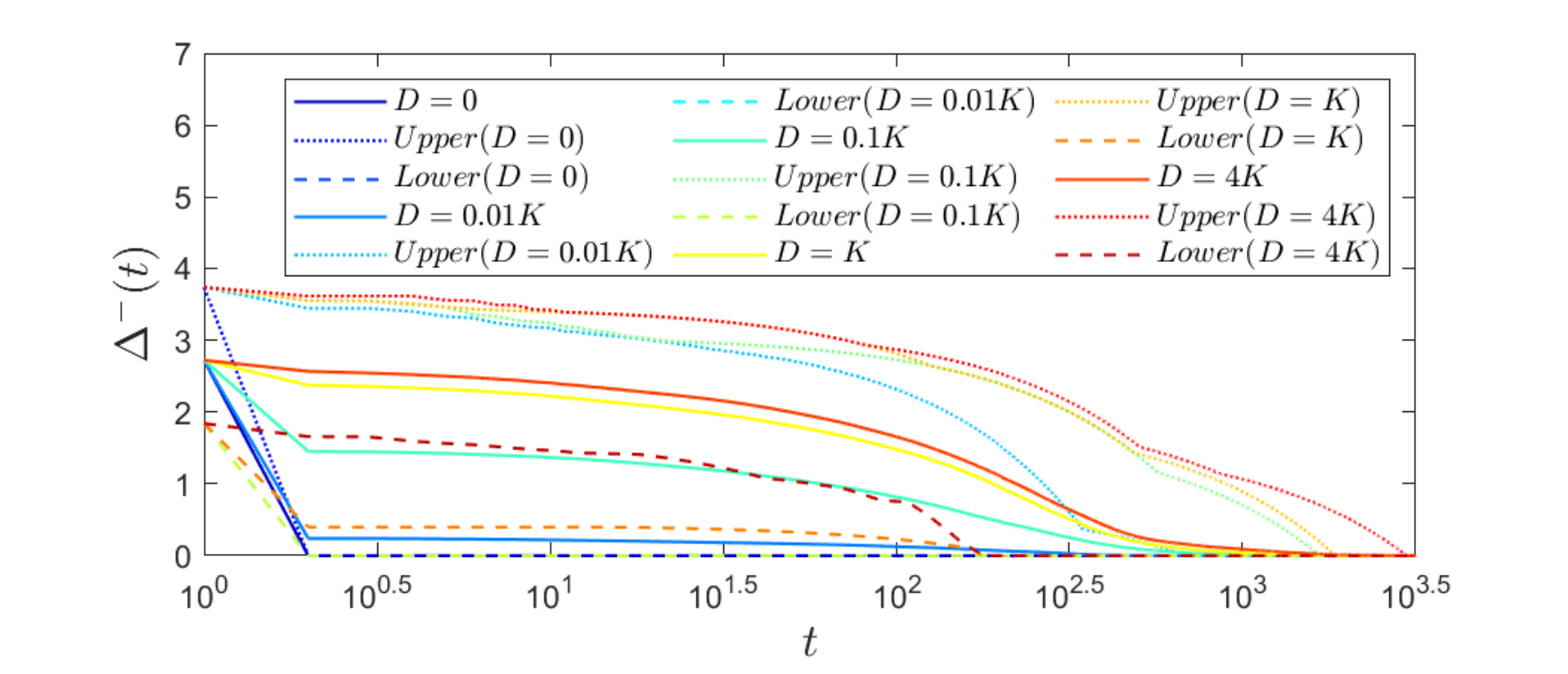}}
	\caption{Convergence time for (a) the greatest overestimate $\Delta^+(t)$ and (b) the least underestimate $\Delta^-(t)$ without perturbations, with $\delta = M = 5$ and $D = 0, 0.01K, 0.1K, K$ and $4K$, using $K = (\cD(\cG) + \cD(\cG^-) - 2)\epsilon$. The solid line represents the average value of 100 trials, the dotted and dashed lines represent upper and lower envelopes, respectively. 
	In (b) dashed lines of $D = 0, 0.01K$ and $0.1K$ overlap.}
	\label{fig:D}
\end{figure} 

The dead zone value $D$, on the other hand, has opposing effects on $\Delta^+(t)$ and $\Delta^-(t)$.
Figure~\ref{fig:D} shows the results of 100 runs using $\delta = M = 5$ and $D = 0, 0.01K, 0.1K, K$ and $4K$, using $K = (\cD(\cG) + \cD(\cG^-) - 2)\epsilon$.
In these simulations, the average value of $\epsilon$ is $2.5 \times 10^{-3}$ km and the average values of $\cD(\cG)$ and $\cD(\cG^-)$ are 17.9 and 26.8, respectively.
For $\Delta^+(t)$, a large $D$ (e.g., $D = 4K$) or a small $D$ (e.g., $D = 0$) accelerates convergence. 
In the former case GABF acts more like ABF in which case $\cD(\cG)$ will converge within $\cD(\cG) - 1$ rounds \cite{TAC}. In the latter case the second bullet of (\ref{raising}) will be frequently invoked such that underestimates will be eliminated more quickly with a large $M$, and thus $T$ defined in (\ref{tstar}) becomes smaller. 
Both of these phenomena are seen in these simulations, with the average convergence time of $\Delta^+(t)$ being 23.2, 67.9, 99.3, 34.0 and 6.5. 
As for $\Delta^-(t)$, A large $D$ always has a negative impact on the convergence speed of $\Delta^-(t)$ since the behavior of GABF is more like ABF in this situation, where the \emph{rising value problem} \cite{TAC,mo2018robust} will cause the underestimates rise very slowly.
Here, the average convergence time is 2, 49.3, 310.0, 535.9 and 631.7.
Note that these underestimates are more vulnerable to the change of $D$: while the time to convergence of $\Delta^+(t)$ is roughly 60 rounds faster by increasing $D$ from $0.1K$ to $K$, that of $\Delta^-(t)$ may be hundreds of rounds slower under such a change. 
Overall convergence time is thus regulated by $\Delta^+(t)$ for small $D$ and by $\Delta^-(t)$ for large $D$, with the joint average convergence time of 23.2, 67.9, 310.0, 535.9 and 631.7.

\begin{figure}
	\centering
	\subfigure[$\Delta^+(t)$]{
		\includegraphics[width = 1\columnwidth]{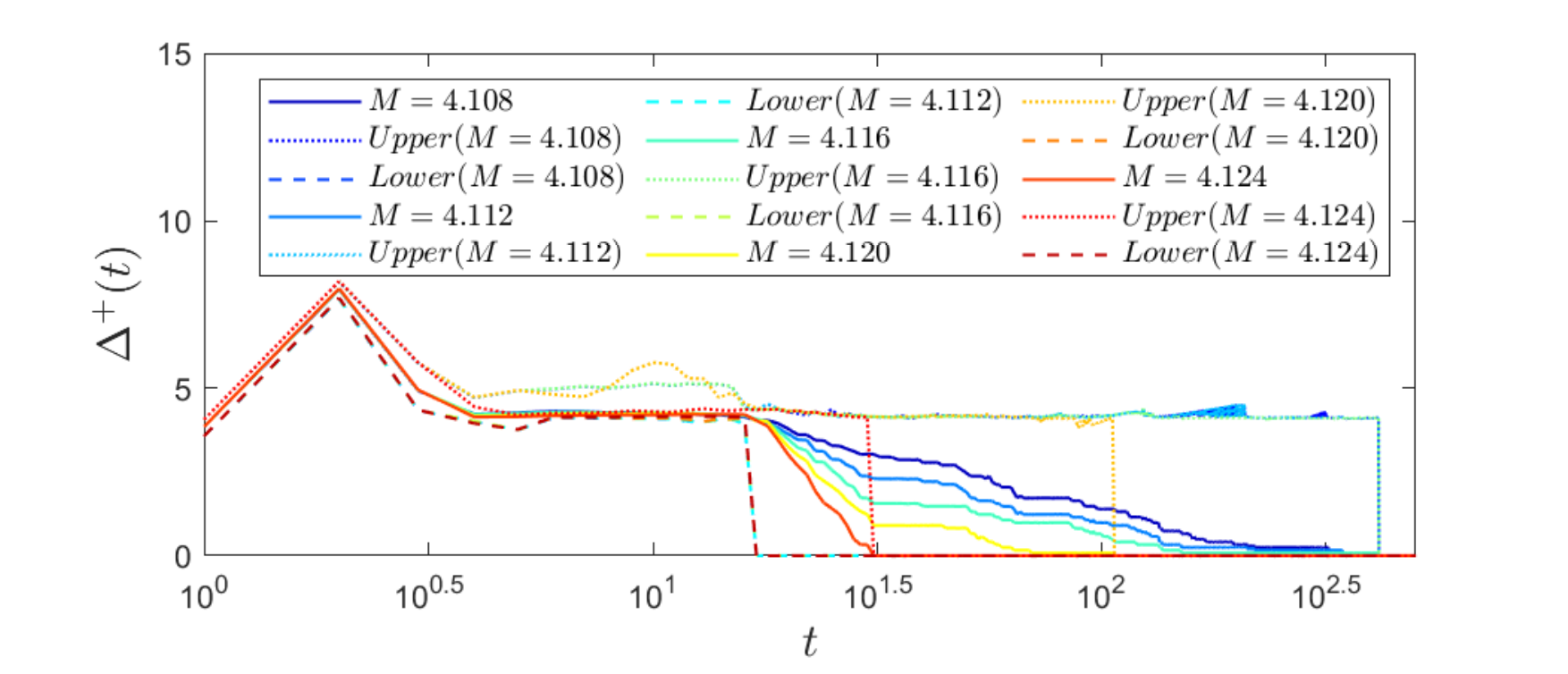}}
	\vfill
	\subfigure[$\Delta^-(t)$]{  
		\includegraphics[width = 1\columnwidth]{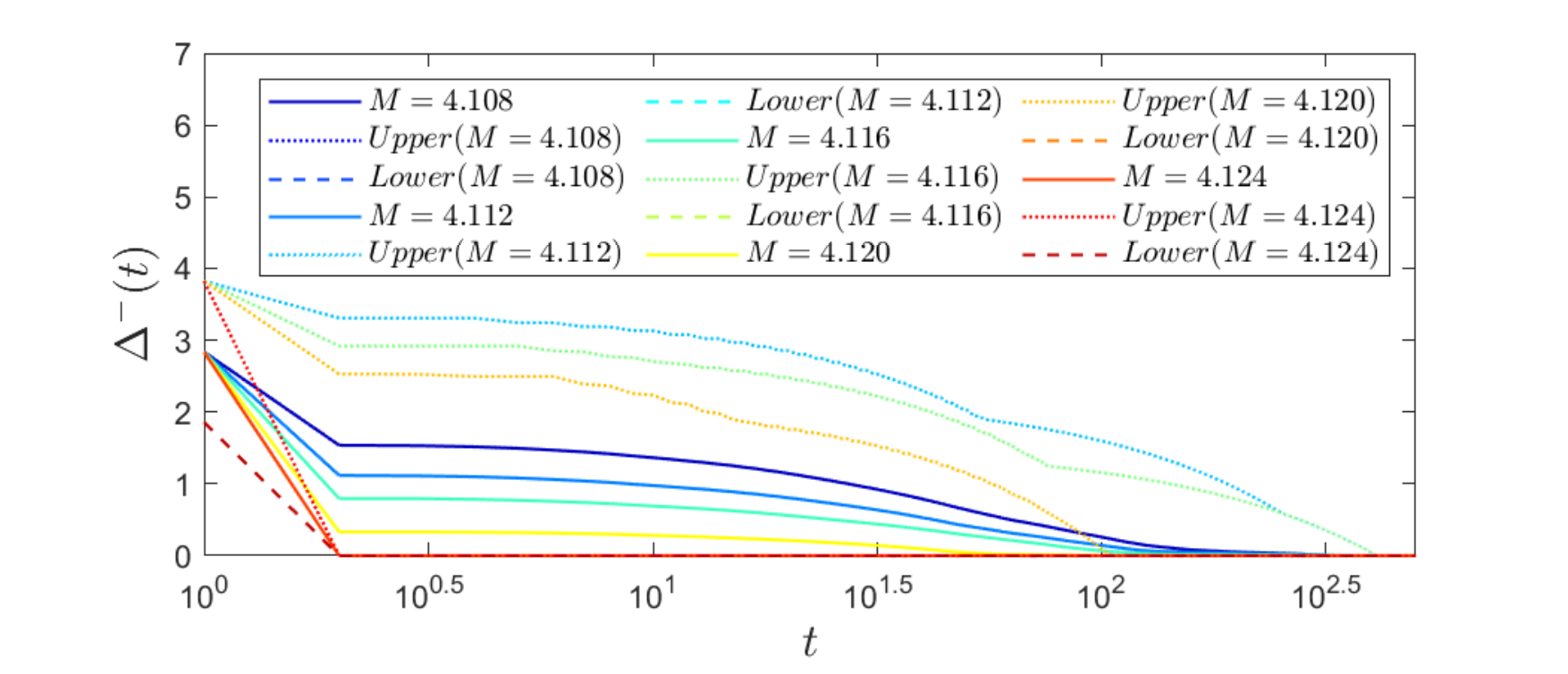}}
	\caption{Convergence time for (a) the greatest overestimate $\Delta^+(t)$ and (b) the least underestimate $\Delta^-(t)$ without perturbations, with $D = 0$, $\delta = M$, and $M$ varying from 4.108 to 4.124. The solid line represents the average value of 100 trials, the dotted and dashed lines represent upper and lower envelopes, respectively. 
	In (b) dashed lines of all different M overlap, and dotted lines of $M = 4.108$ and $4.112$ overlap.}
	\label{fig:M}
\end{figure} 

Finally, the impact of $M$ may be separated out from the other two parameters by setting $D=0$ and $\delta=M$.
In this condition, small changes in $M$ can result in large changes in convergence rate.
Figure~\ref{fig:M} illustrates this for 100 runs of GABF, with $D=0$, $\delta=M$, and $M$ increasing from 4.108 to 4.124 in steps of 0.004.
In these simulations, the average $\cD(\cG)$ is 18.5.
Here, convergence unconditionally improves with higher $M$: the average convergence times of $\Delta^+(t)$ are 92.9, 72.6, 55.1, 32.0 and 23.7 rounds, while those of $\Delta^-(t)$ are 87.3, 63.0, 43.5, 14.4 and 2 rounds.
Even though $T_i$ defined in (\ref{ti}) satisfies $T_i = 3$ for all different $M$, as we have set $\delta = M$, overestimates disappear more quickly with larger $M$ because a larger $\delta$ helps the time $T$ defined in (\ref{tstar}), after which all states are overestimates becoming smaller by (\ref{eq:ude}) in Lemma \ref{luin}. 
Underestimates converge more quickly once $M$ is greater than both the largest true distance and initial distance estimate. In this case, all nodes acquire overestimates in the first round by invoking the second bullet of (\ref{raising}),
and thus the underestimates converge in only 2 rounds.

\subsection{Robustness against persistent perturbations}

\begin{figure}
	\centering
	\subfigure[$\Delta^+(t)$ and $(\cD(\cG) - 1)\epsilon$]{
		\includegraphics[width = 0.94\columnwidth]{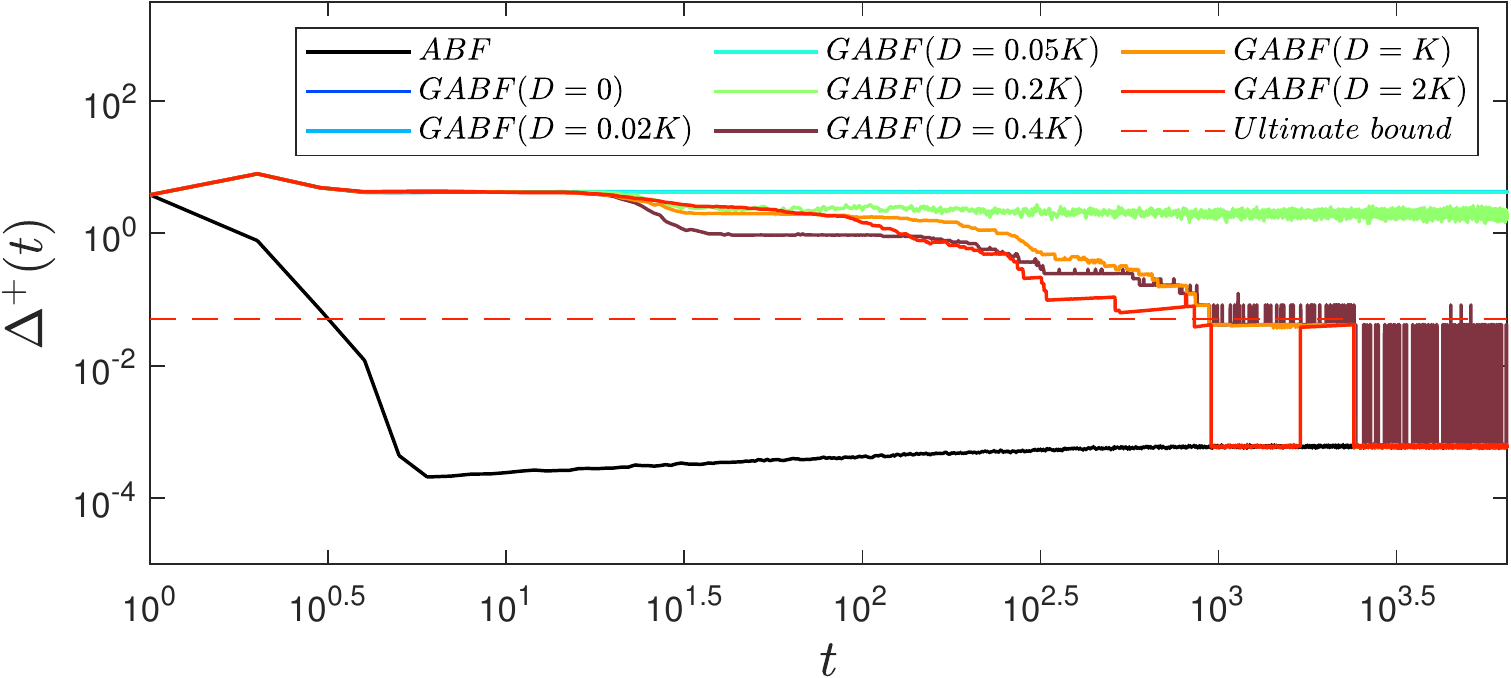}}
	\vfill
	\subfigure[$\Delta^-(t)$ and $(\cD(\cG^-) - 1)\epsilon$]{  
		\includegraphics[width = 0.94\columnwidth]{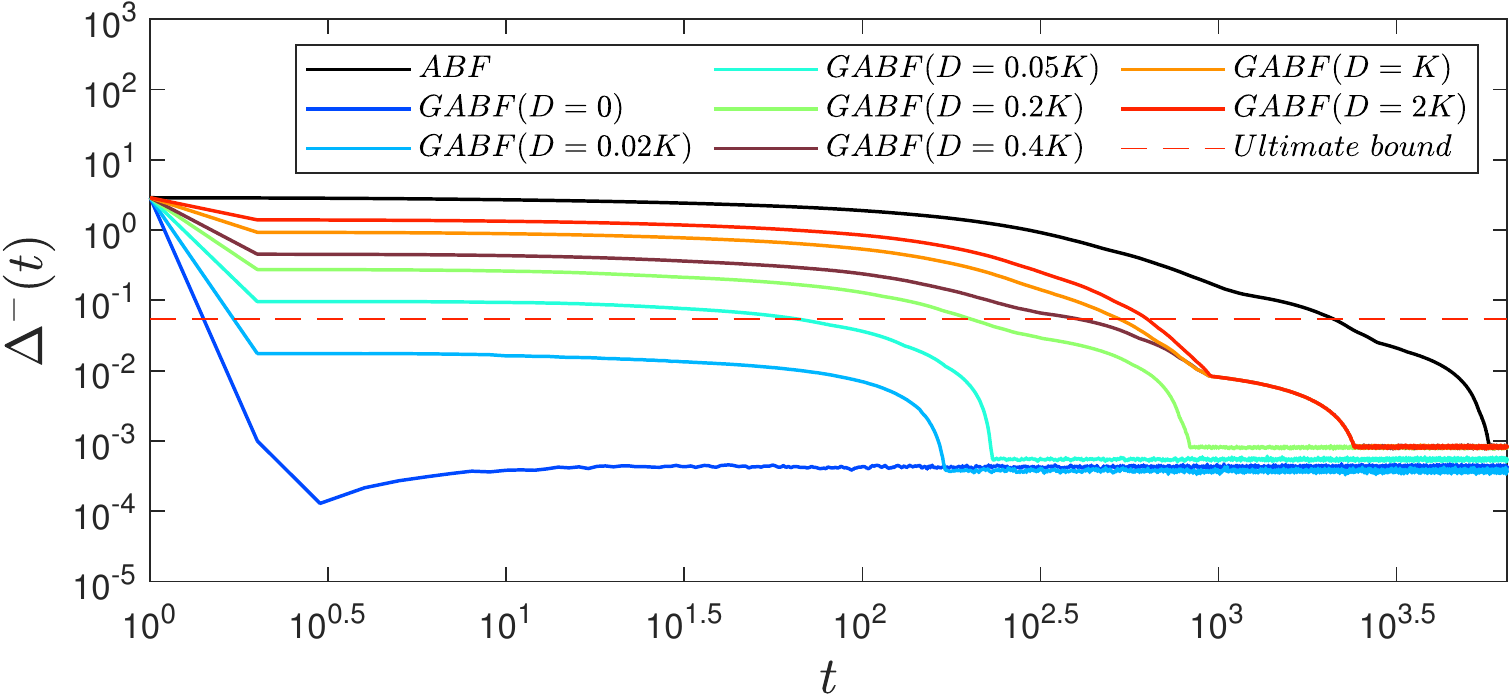}}
	\caption{Robustness against perturbation for GABF with various values of $D$, compared over 100 trials with ultimate bounds and with ABF: (a) mean values of $\Delta^+(t)$ and mean value of ultimate bound $(\cD(\cG) - 1)\epsilon$, and (b) mean values of $\Delta^-(t)$ and mean value of ultimate bound $(\cD(\cG^{-}) - 1)\epsilon$. Edge lengths are perturbed by measurement errors uniformly distributed between 0 and $e_{\min}$. Parameters for GABF are set as $\delta = M = \sqrt{17}$, $D = 0, 0.02K, 0.05K, 0.2K, 0.4K, K$ and $2K$, where $K = (\cD(\cG) + \cD(\cG^-) - 2)\epsilon$. In (a), lines of $D = 0, 0.02K$ and $0.05K$ overlap.}
	\label{fig:perturbation}
\end{figure} 

As discussed in Section~\ref{sdesign}, robustness against perturbation should be controlled primarily by parameter $D$.
In particular, $(\cD(\cG)-1)\epsilon$ and $(\cD(\cG^-)-1)\epsilon$ are the ultimate bounds of $\Delta^+(t)$ and $\Delta^-(t)$ under perturbations, respectively \cite{TAC} (i.e., $L_1W(L_2,\cD(\cG)-1)$ and $L_1W(L_2,\cD(\cG^-)-1)$ in Theorem \ref{tpre}, respectively), and thus $(\cD(\cG) + \cD(\cG^-) - 2)\epsilon$ is the minimum value of $D$ to guarantee the robustness of GABF under perturbations. 

Figure \ref{fig:perturbation} illustrates this for 100 runs of GABF, comparing this ultimate boundedness with ABF and with GABF using $\delta = M = \sqrt{17}$ and $D = 0, 0.02K, 0.05K, 0.2K, 0.4K, K$ or $2K$, where $K = (\cD(\cG) + \cD(\cG^-) - 2)\epsilon$, $\epsilon = 0.05e_{\min}$.
Perturbation is injected as asymmetric noise in the estimated $e_{ij}$, such that measurement errors $\epsilon_{ij}(t)$ defined in (\ref{vare}) follow a uniform distribution between 0 and $0.05e_{\min}$ in each round. 
In these simulations, the average value of $e_{\min}$ defined in (\ref{ep}) is $2.9 \times 10^{-3}$ km, $\cD(\cG) = 18.6$ and $\cD(\cG^-) = 20.0$ on average. 

The results show the tradeoffs in $\Delta^+(t)$ versus $\Delta^-(t)$ with GABF.
For $\Delta^-(t)$, ABF is constrained by the \emph{rising value problem} \cite{TAC,mo2018robust} such that $\Delta^-(t)$ needs a much longer time than $\Delta^+(t)$ to drop below than its ultimate bound.
With GABF, lower values of $D$ increase the speed of convergence, with $D=0$ achieving the fastest time.
For $\Delta^+(t)$, on the other hand, ABF converges extremely quickly, while GABF does not converge at all for low values of $D$.
In this case, GABF with $D = 0, 0.02K, 0.05K, 0.2K$ and $0.4K$ will not be ultimately bounded, while the average time for ABF and GBAF with $D = K$ and $2K$ to drop below the ultimate bounds follows 859.3, 196.5 and 300.1 rounds. Further, the average time for ABF and GABF with $D = K$ or $2K$ to reach the bottom is 5750 and 2406 rounds, respectively. 
Combining both, we find that GABF outperforms ABF under perturbations when the dead zone value $D$ is equal to or slightly larger than the minimum required value defined in (\ref{D}).
Thus, when the general spreading block is under perturbation, $D$ should be set as $(W(L_2,\cD(\cG^-) - 1) + W(L_2,\cD(\cG) - 1))L_1\epsilon$ defined in (\ref{D}) of Lemma \ref{tpre} in order to guarantee robustness and meanwhile attain a fast convergence speed. Observe though the floor is much below the theoretical ultimate bound, and even with $ D=.04K $, $ \Delta^+(t) $ though persistently rising from the floor rises only up to the unltimate bound.

\subsection{Non-Euclidean Distance}

\begin{figure}
	\centering
	\subfigure[The spreading block of (\ref{spreading})]{
		\includegraphics[width = 0.8\columnwidth]{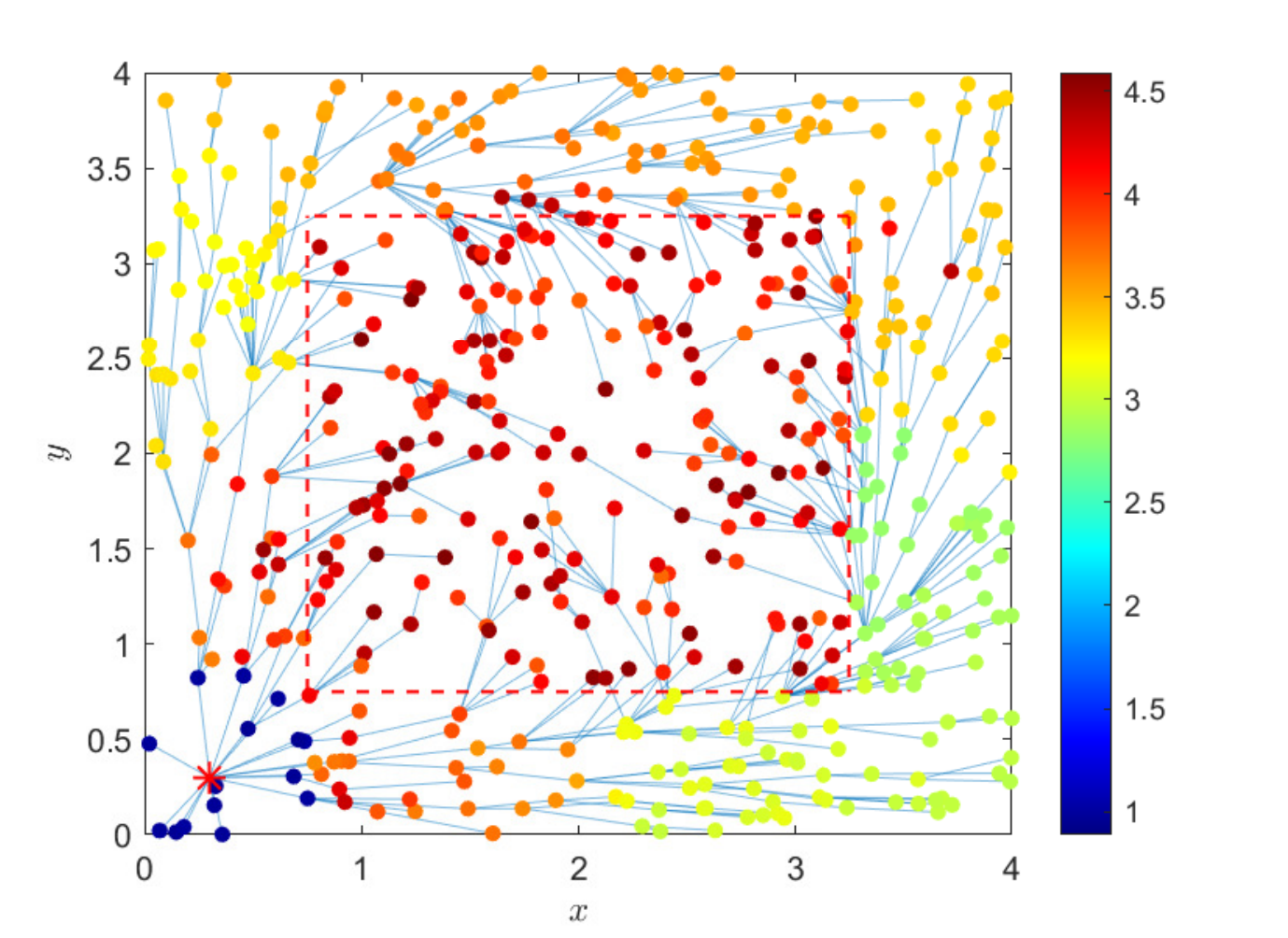}}
	\vfill
	\subfigure[The more general spreading block of (\ref{gengenG})-(\ref{raising})]{  
		\includegraphics[width = 0.8\columnwidth]{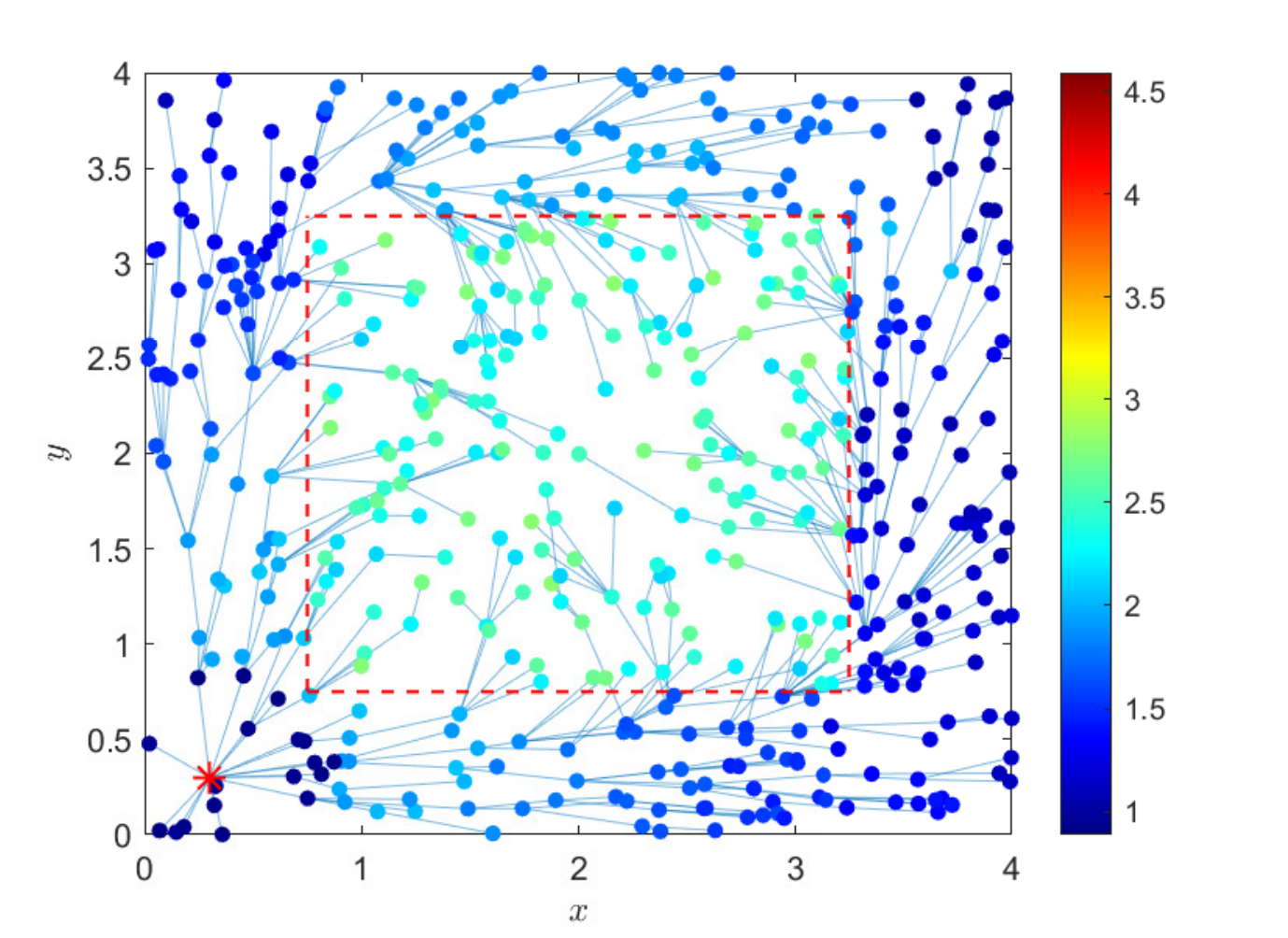}}
	\caption{In this example, 400 nodes are randomly located in a $4 \times 4$ $\mathrm{km}^2$ field. There is a source at the red asterisk located at (0.3, 0.3), and the middle of the field is a $2.5 \times 2.5$ $\mathrm{km}^2$ radiation zone. Color represents degree of contamination, with a logarithmic scale. While both spreading block and the general spreading block can achieve the shortest available path, the fast convergence of the general spreading block greatly reduces total contamination.}
	\label{fig:nonEuclidean}
\end{figure} 

Finally, we illustrate how the general spreading block can accommodate non-Euclidean distance metrics. 
Figure~\ref{fig:nonEuclidean} shows example of a  nonlinear $f(\cdot,\cdot)$ in (\ref{spreading}), for nodes to compute paths minimizing exposure to a hazard. In this scenario, 500 nodes are randomly distributed in a $4 \times 4$ $\mathrm{km}^2$ field, and communicate over a 0.6 km radius.  A source is located at (0.3, 0.3). In the middle of the area, there exists a $2.5 \times 2.5$ $\mathrm{km}^2$ \emph{radiation zone} centered at (1.95, 1.95). Define $\mathcal{M}$ as the set of nodes in the radiation zone, a node $i$ is radioactive if $i \in \mathcal{M}$ or $i$ has ever taken a radioactive node as its constraining node.  $f(\xh_k(t),e_{ik})$ in (\ref{spreading}) obeys $f(\xh_k(t),e_{ik}) = \xh_k(t) + e_{ik}, k \notin \mathcal{M}$, where $e_{ik}$ is the edge length between node $i$ and $k$. When $k \in \mathcal{M}$, $f(\xh_k(t),e_{ik}) = h(\xh_k(t) + 1000e_{ik})$, where $h(a) = a^{1.5}$ if $a > 1$ and $a$ otherwise. Further, $s_i$ defined in (\ref{spreading}) is 0 if $i$ is a source and $\infty$ otherwise. In each round a node $i$ will receive 100$\sim$120 units of radiation dose if it is radioactive and 0$\sim$1 unit otherwise. 
Figure~\ref{fig:nonEuclidean}(a) shows the result of using the spreading block defined in (\ref{spreading}), while Figure~\ref{fig:nonEuclidean}(b) shows the result from setting $D = 0, \delta = M > x_{\max}$ in the general spreading block. 
In both cases, nodes outside the radiation zone  never cross the zone due to the high cost, and nodes inside the zone  take the shortest path to exit the zone. However, the degree of contamination is greatly reduced when using the general spreading block with appropriately chosen parameters, due to the much faster time of convergence to a safe path.

\section{Conclusion}\label{sconc}
We have improved a general algorithm for spreading information across a network of devices by making it resilient to  perturbations and by removing a prior boundedness assumption.  
This algorithm, a key building block for aggregate computing and applicable to a wide range of distributed systems, has parameters that remove the rising value problem that appears in some of its special cases, such as ABF. 
Unlike ABF, however, the general spreading algorithm covers a much wider class of uses and application, such as dealing with non-Euclidean distance metrics. 
We have proven global uniform asymptotic stability for this algorithm and provide ultimate bounds in face of persistent network disturbances using  an additional Lipschitz condition. Notably, the ultimate bounds depend only on the largest perturbation and structural network properties. Finally, we provide design guidelines for the three new parameters, demonstrating how algorithm parameters have competing effects on performance.

These results are a crucial stepping stone in our long term goal of determining stability conditions for feedback interconnections of aggregate computing blocks, using possibly new small gain theorems, \cite{smallgain}, or equivalent techniques, \cite{pass}, like the passivity theorem and its variants, \cite{IQC}.
Progress in this program has broad applicability for the engineering of resilient distributed systems.






\bibliographystyle{IEEEtran}
\bibliography{preprint}

\appendix
\noindent
{\bf Proof of Lemma \ref{sluin}:}
From Definition \ref{dur}, (\ref{implies}) still holds if (\ref{genG}) is replaced by (\ref{genGt}). Thus $\cU(t)$ can be nonempty only on a single contiguous time interval commencing at $t_0$. 

We prove (\ref{suin}) by induction. As it holds for $t = t_0$, suppose (\ref{suin}) holds for some $t \geq t_0$. Consider $ i\in \cU(t+1) $ with $\xh_i(t + 1) = \xh_{\min}(t + 1)$. Then $\cU(t)\neq \emptyset$ and $j$ the  constraining node of $i $ is in $\cU(t)$. If $ i\in \cE(t+1) $  in Definition \ref{dAE} then from Definition \ref{dcons}, $ j=i $. From the induction hypothesis
and (\ref{strg})
\begin{eqnarray}
\xh_{i}(t + 1) & = & \xh_{\min}(t + 1) \nonumber \\
& \geq & \xh_{j}(t) + \delta \geq \xh_{\min}(t)+\min\{\sigma,\delta\} \nonumber \\
&\geq& \xh_{\min}(t_0) + \min\{\sigma,\delta\}(t + 1 - t_0). \nonumber
\end{eqnarray}
If $i \in \cA(t + 1)\cap \cU(t+1)$, then, $ i\notin \cS(t+1) $,  i.e. $ \xh_i(t+1)\neq s_i. $ From the induction hypothesis and (\ref{genGt}) we have
\begin{align*}
	\xh_{i}(t + 1)  = & \xh_{\min}(t + 1) = f(\xh_{j}(t),e_{ij}(t)) \\
	& \geq  \xh_{j}(t) + \sigma \geq \min\{\sigma,\delta\}(t + 1 - t_0). \nonumber
\end{align*}
Further with (\ref{Xmax}) and (\ref{tm}), (\ref{exX}) follows.

\noindent
{\bf Proof of Lemma \ref{tover}:}
If  $\cR(t) \neq \emptyset$, then $ \exists ~t_5,t_6 $ such that $\forall t_0 \leq t_5 \leq t\leq t_6$ and $\cR(t_5) = \cS(t_5)$. As $\xh_i(t_5) = s_i$ for all $i \in \cS(t_5) = \cR(t_5)$, from (\ref{lem}) the result holds for $t = t_5$.

 Suppose $\xh_i(t) \geq X_i$ for some $t_5 \leq t < t_6$ and  all $i \in \cR(t)$. Consider any $i \in \cR(t+1)$. From Definition \ref{dur}, either $i \in \cS(t+1)$ in which case the rsult holds, or or $i$ is constrained by some $j \in \cR(t).$ If $i \in \cA(t+1)$, then by the induction hypothesis, $\xh_j(t) \geq X_j$. As $i \notin \cS(t+1)$, there follows:
\begin{eqnarray}
\xh_{i}(t + 1) &=& f(\xh_{j}(t), e_{ij}(t)) \nonumber \\
&\geq& f(X_j,e_{ij}^-) \label{el} \\
&\geq& X_i \label{ki} 
\end{eqnarray}
where (\ref{el}) uses $e_{ij}(t) \geq e_{ij}^-$ for all $t$ and the fact that $f(\cdot,\cdot)$ is increasing in each argument, and (\ref{ki}) uses (\ref{stat}).
If $i \in \cE(t + 1)$, then $i$ is its own constraining node and $i \in \cR(t)$. Thus by our induction hypothesis, $\xh_i(t) \geq X_i$. From (\ref{strg}), 
\begin{equation}
\xh_{i}(t + 1) \geq \xh_{i}(t) + \delta 
>  X_i.\label{bkq}
\end{equation}

\noindent
{\bf Proof of Lemma \ref{tlbound}:}
 Consider nodes $n_0,n_1,\cdots,n_T$ such that $n_0 \in \cS_{\infty}^-$, and for all $i \in \{0,\dots,T - 1\}$, $n_i$ is a true constraining node of $n_{i+1}$ in $\cG^-$. Each node in $\cG^-$ is in one such sequence. As from Definition \ref{shrunken}, $T \leq \cD(\cG^-) - 1$, the result will follow if
\begin{equation}\label{induc}
x_{n_i} - X_{n_i} \leq W(L_2,i)L_1\epsilon, ~\forall i \in \{0,\cdots,T\}.
\end{equation}
Evidently, $x_{n_0} \leq s_{n_0} = X_{n_0}$. Suppose (\ref{induc}) holds for some $i \in \{0,\cdots, T-1\}$. As $n_i$ and $n_{i+1}$ are neighbors in both $\cG$ and $\cG^-$, $n_i$ is a true constraining node of $n_{i+1}$ in $\cG^-$ and $x_{n_i} \leq X_{n_i} + W(L_2,i)L_1\epsilon$ by our induction hypothesis, 
\begin{flalign}
x_{n_{i+1}} &\leq f(x_{n_i},e_{n_in_{i+1}}) \nonumber\\
&\leq f(X_{n_i} + W(L_2,i)L_1\epsilon, e_{n_in_{i+1}}) \nonumber \\
&\leq f(X_{n_i},e_{n_in_{i+1}}) + L_2W(L_2,i)L_1\epsilon\label{ul2} \\
&= f(X_{n_i},e_{n_in_{i+1}}^- + \epsilon) + L_2W(L_2,i)L_1\epsilon \label{eq}\\
&\leq f(X_{n_i},e_{n_in_{i+1}}^-) + L_1\epsilon +  L_2W(L_2,i)L_1\epsilon \label{em} \\
&= X_{n_{i+1}} + W(L_2,i+1)L_1\epsilon \label{ist}
\end{flalign}
where (\ref{ul2}) uses (\ref{aa}), (\ref{eq}) uses (\ref{shru}),
(\ref{em}) uses (\ref{ul1}), and (\ref{ist}) uses the fact that $n_{i}$ is a true constraining node of $n_{i+1}$ in $\cG^-$. 

\noindent
{\bf Proof of Lemma \ref{tpre}:}
Consider any $i \in \cF_{L+1}$. 
Because of 
  (\ref{raising}) and (\ref{strg}),  there is a $t' < t \leq t' + T_{L+1}^-$ such that $i \in \cA(t)$. This is so as $ i\in \cE(t) $ implies $ \xh_i(t+1)\geq \xh_i(t)+\delta $ and at some time  in the interval $ (t',t' + T_{L+1}^-] $, $ \xh_i(\cdot)>M$. From Lemma \ref{lF}, there is a $j \in \cF_L$ that is a true constraining node of $i$ in $\cG$. Thus $\xh_{j}(t - 1) \leq x_j + W(L_2,L)L_1\epsilon$ by (\ref{pre}). Then  
\begin{flalign}
\xh_{i}( t ) 
&=\min\left\lbrace \min_{k\in \cN(i)}\left\lbrace f\left (\xh_k(t-1), e_{ik}(t-1) \right ) \right\rbrace, s_i  \right\rbrace \nonumber \\
&\leq f(\xh_{j}(t-1),e_{ij}(t-1)) \nonumber \\
&\leq f(x_j + W(L_2,L)L_1\epsilon,e_{ij}+\epsilon) \label{usin} \\
&\leq f(x_j,e_{ij}) + L_2W(L_2,L)L_1\epsilon + L_1\epsilon \label{l1l2} \\
&= x_i + W(L_2,L+1)L_1\epsilon \label{ne1} 
\end{flalign}
where (\ref{usin}) uses (\ref{vare}), (\ref{ep}) and (\ref{pre}), (\ref{l1l2}) uses (\ref{ul1}) and (\ref{aa}). Similarly, as (\ref{pre}) holds for all $ t\geq T^-+t_0 $ for all $ j\in \cF_L $,
\begin{equation}\label{ne2}
\xt(t+1)\leq x_i + W(L_2,L+1)L_1\epsilon.
\end{equation}
As $t  > t_0 + T^-$,  (\ref{tall}) implies that $\xh_k(t) \geq X_k$ for all $k \in V$. As $f(\cdot,\cdot)$ is monotonically increasing in  both its arguments and $X_i \leq s_i$, (\ref{recm}) implies that 
\begin{flalign}
\xt_{i}(t+1)=\min\left\lbrace \min_{k\in \cN(i)}\left\lbrace f\left (\xh_k( t), e_{ik}( t) \right ) \right\rbrace, s_i  \right\rbrace \nonumber \\  \geq \min\left\lbrace\min_{k\in \cN(i)}\left\lbrace f\left (X_k, e_{ik}^- \right )  \right\rbrace, s_i \right \rbrace \nonumber  = X_i,
\end{flalign}
 i.e. $ \left [X_i , x_i + W(L_2,L+1)L_1\epsilon\right ]$ contains both $ \xh_i(t) $ and $ \xt_i(t+1) $. Then (\ref{D}) and Lemma \ref{tlbound} yield
\begin{eqnarray}
|\tilde{x}_i(t + 1) - \xh_{i}(t)| &\leq& |x_i + W(L_2,L+1)L_1\epsilon - X_i| \nonumber \\
& \leq & W(L_2,\cD(\cG^-) - 1)L_1\epsilon + \nonumber \\
&& W(L_2,L + 1)L_1\epsilon 
\leq D, \nonumber
\end{eqnarray}
i.e, $\xh_{i}(t + 1) = \tilde{x}_{i}(t + 1)$. An  induction proves the result. 

\noindent
{\bf Proof of Theorem \ref{bound}:}
From  Lemma \ref{tlbound} and (\ref{tall})
\begin{equation}\label{upper}
\xh_i(t) - x_i \geq -W(L_2,\cD(\cG^-) - 1)L_1\epsilon ~\forall ~t\geq t_0+T^-, 
\end{equation}
proving the lower bound on $ \xh_i(t) - x_i $ implicit in the theorem statement. To prove the upper bound
we will first show that 
\begin{equation}\label{T0}
\xh_i(t)\leq x_i=x_i+W(L_2,0), ~\forall i\in \cF_0, ~t\geq t_0+T^-+T_0^-.
\end{equation}
Then the repeated application of Lemma \ref{tpre} will prove  that
\[\xh_i(t) - x_i \leq W(L_2,\cD(\cG) - 1)L_1\epsilon ~\forall ~t \geq t_0 + T^- + \sum_{i = 0}^{\cD(\cG)-1}T^-_i \]
and thus the theorem.

Consider $ i\in \cF_0. $
As $i \in \cE(t)$ implies $\xh_i(t+1)\geq \xh_i(t)+\delta$ from (\ref{raising}), (\ref{strg}) and (\ref{tmi}), there is a $t_0+T^- < t \leq t_0+T^- + T_{0}^-$ such that $i \in \cA(t)$. As $\cF_0\subset \cS_\infty$,  from (\ref{sta}) 
\begin{flalign}
\xh_{i}(t)  = \xt_{i}(t) 
& \leq s_i = x_i. \label{mz}
\end{flalign}
As $t  > t_0 + T^-$, it follows from (\ref{tall}) that $\xh_k(t) \geq X_k$ for all $k \in V$. As $f(\cdot,\cdot)$ is monotonically increasing in  both its arguments and $X_i \leq s_i$, we obtain
\begin{flalign}
\xt_{i}(t+1) = \min\left\lbrace\min_{k\in \cN(i)}\left\lbrace f\left (\xh_k(t), e_{ik}(t) \right ) \right\rbrace, s_i \right\rbrace  \nonumber \\
\geq \min\left\lbrace\min_{k\in \cN(i)}\left\lbrace f\left (X_k, e_{ik}^- \right ) \right\rbrace, s_i \right\rbrace = X_i \label{fo}
\end{flalign}
where (\ref{fo}) uses (\ref{recm}). Therefore, $ \left [X_i, x_i\right]$ contains both $\xh_i(t)$ and $ \xt_i(t+1)$. Then (\ref{D}) and Lemma \ref{tlbound} yield
\begin{flalign}
|\xt_i(t + 1) - \xh_{i}(t)| &\leq |x_i - X_i| \nonumber \\
&= W(L_2,\cD(\cG^-)-1)L_1\epsilon \nonumber \\
&< D \nonumber
\end{flalign}
From (\ref{gengenG}-\ref{raising}), $\xh_{i}(t + 1) = \tilde{x}_{i}(t + 1)$. Am  induction proves (\ref{T0}). 
%
\begin{IEEEbiography}
	[{\includegraphics[width=1in,height=1.25in,clip,keepaspectratio]{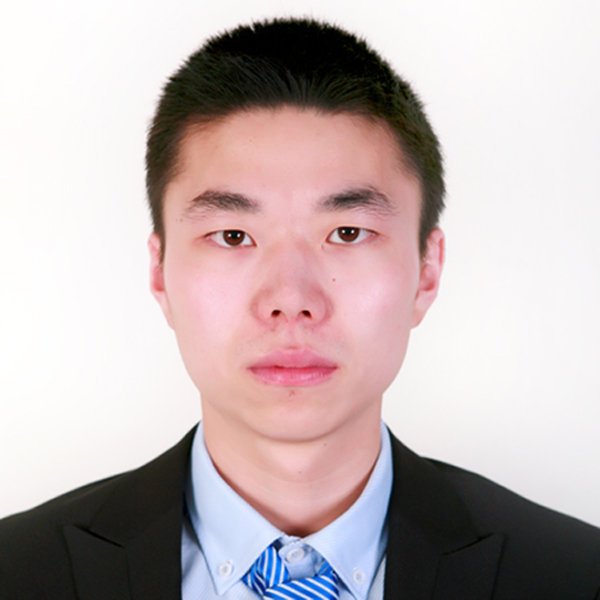}}]
	{\bf Yuanqiu Mo} was born in Yangzhou, China, in 1991. He received the Ph.D. degree in electrical and computer engineering at the University of Iowa, in 2019. He is currently a postdoc with the Westlake University.
	His research interests include distributed algorithm design and stability theory. Dr. Mo has been awarded the 2018 CDC Outstanding Student Paper Award. He was a finalist of the Young Author Award in the IFAC World Congress 2020.
\end{IEEEbiography}

\begin{IEEEbiography}
	[{\includegraphics[width=1in,height=1.25in,clip,keepaspectratio]{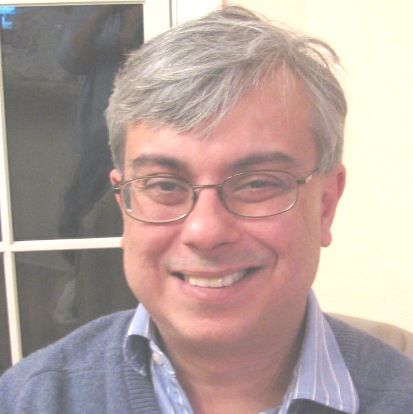}}]
	{\bf Soura Dasgupta, (M'87, SM'93, F'98)} was born in 1959 in Calcutta, India. 
	He received the B.E. degree (Hons. I) in Electrical Engineering from the University of Queensland (Australia) in 1980, and the Ph.D. in Systems Engineering from the Australian National University, in 1985.  He is currently F. Wendell Miller Distinguished Professor in the Department of Electrical and Computer Engineering at the University of Iowa, U.S.A  and holds a visiting appointment in the Shandong Academy of Sciences.
	
	In 1981, he was a Junior Research Fellow  at the Indian Statistical Institute, Calcutta. He has held visiting appointments at the University of Notre Dame, University of Iowa, Universite Catholique de Louvain-La-Neuve, Belgium, Tata Consulting Services, Hyderabad,  the Australian National University and National ICT Australia.
	
	From 1988 to 1991, 1998 to 2009 and 2004 to 2007 he respectively served as an Associate Editor of the IEEE TRANSACTIONS ON AUTOMATIC CONTROL, IEEE Control Systems Society Conference Editorial Board, and the IEEE TRANSACTIONS ON CIRCUITS AND SYSTEMS- II. He is a co-recipient of the Gullimen Cauer Award for the best paper published in the IEEE TRANSACTIONS ON CIRCUITS AND SYSTEMS in the calendar years of 1990 and 1991, a past Presidential Faculty Fellow, a past subject editor for the International Journal of Adaptive Control and Signal Processing, and a member of the editorial board of the EURASIP Journal of Wireless Communications. In 2012 he was awarded the University Iowa Collegiate Teaching award. In the same year he was selected by the graduating class for  excellence in teaching and commitment to student success. From 2016-18 he was a 1000 Talents Scholar in the People’s Republic of China. 
	
	His research interests are in Controls, Signal Processing, Communications and Parkinson's Disease. He was elected a Fellow of the IEEE in 1998. 	
\end{IEEEbiography}

\begin{IEEEbiography}
	[{\includegraphics[width=1in,height=1.25in,clip,keepaspectratio]{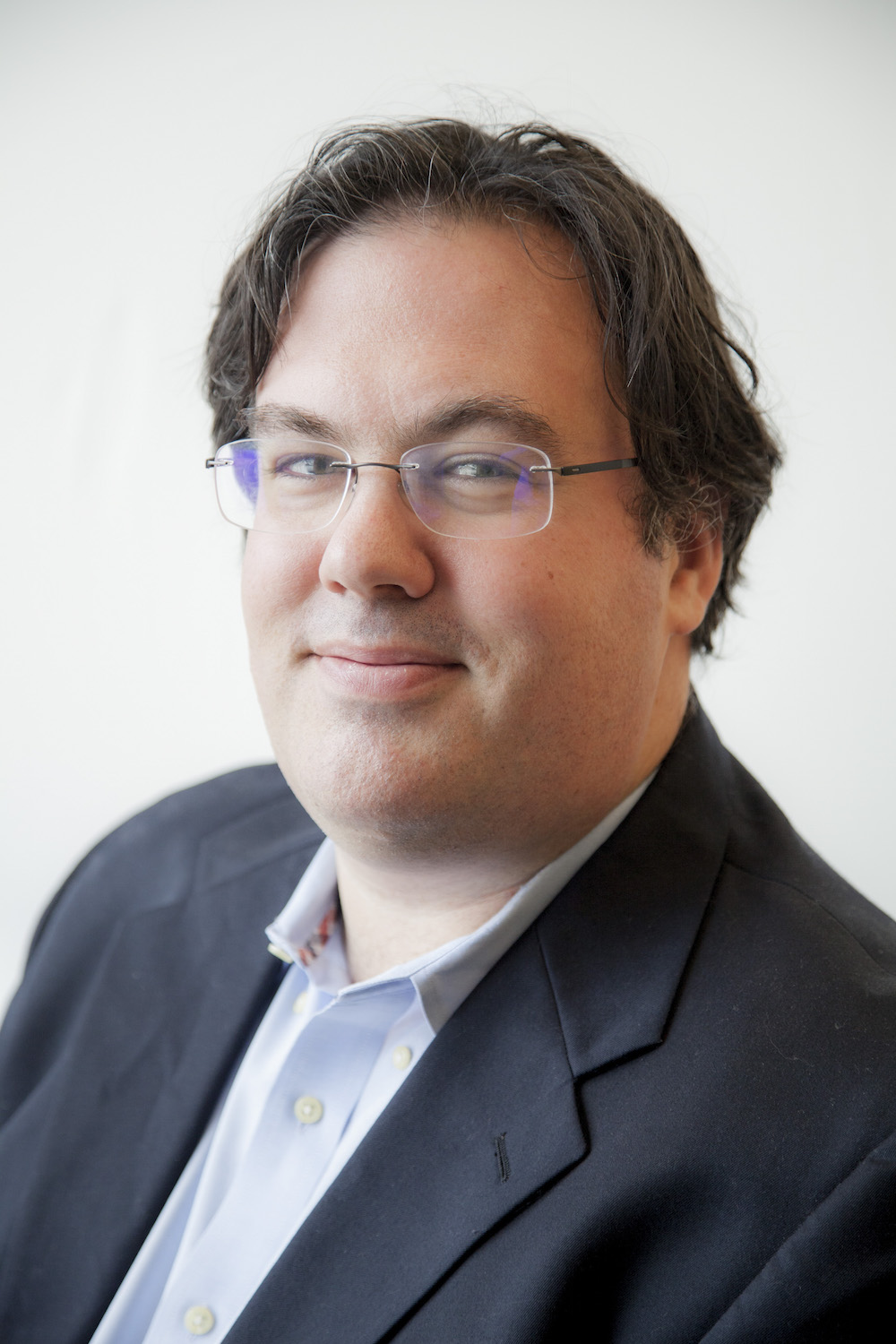}}]
	{\bf Jacob Beal} is a scientist at Raytheon BBN Technologies in Cambridge, Massachusetts.
	His research focuses on the engineering of robust adaptive systems,
	particularly on the problems of aggregate-level modeling and control for
	spatially distributed systems like pervasive wireless networks, robotic swarms,
	and natural or engineered biological cells. Dr. Beal received a PhD in electrical
	engineering and computer science from MIT. He is an associate editor of ACM
	Transactions on Autonomous and Adaptive Systems, is on the steering committee
	of the IEEE International Conference on Self-Adapting and Self-Organizing
	Systems (SASO), and is a founder of the Spatial Computing Workshop series.
	He is a Senior Member of IEEE. Contact him at jakebeal@ieee.org
\end{IEEEbiography}

\end{document}